\def\ps@pprintTitle{%
 \let\@oddhead\@empty
 \let\@evenhead\@empty
 \def\@oddfoot{\centerline{\thepage}}%
 \let\@evenfoot\@oddfoot}
\newcommand{\Var}{\mbox{var}}
\newcommand{\iid}{\stackrel{\mathrm{iid}}{\sim}}
\newcommand{\ind}{\stackrel{\mathrm{ind}}{\sim}}
\newcommand{\argmin}{\arg\!\min}
\newcommand{\mbf}[1]{\mathbf{#1}}
\newtheorem{theorem}{Theorem}[section]
\newtheorem{lemma}[theorem]{Lemma}
\let\orgdescriptionlabel\descriptionlabel
\renewcommand*{\descriptionlabel}[1]{%
  \let\orglabel\label
  \let\label\@gobble
  \phantomsection
  \edef\@currentlabel{#1}%
  \let\label\orglabel
  \orgdescriptionlabel{#1}%
}
\begin{document}

\title{Scalable Bayes under Informative Sampling}
\date{\today}
\author{Terrance D. Savitsky\thanks{U.S. Bureau of Labor Statistics, 2 Massachusetts Ave. N.E, Washington, D.C. 20212 USA}, Sanvesh Srivastava\thanks{Department of Statistics and Actuarial Science, The University of Iowa, Iowa City, Iowa, USA}}

\maketitle

\begin{abstract}
  Bayesian hierarchical formulations are utilized by the U.S. Bureau of Labor Statistics (BLS) with respondent-level data for missing item imputation because these formulations are readily parameterized to capture correlation structures. BLS collects survey data under informative sampling designs that assign probabilities of inclusion to be correlated with the response on which sampling-weighted pseudo posterior distributions are estimated for asymptotically unbiased inference about population model parameters. Computation is expensive and does not support BLS production schedules.  We propose a new method to scale the computation that divides the data into smaller subsets, estimates a sampling-weighted pseudo posterior distribution, in parallel, for every subset, and combines the pseudo posterior parameter samples from all the subsets through their mean in the Wasserstein space of order 2. We construct conditions on a class of sampling designs where posterior consistency of the proposed method is achieved. We demonstrate on both synthetic data and in application to the Current Employment Statistics survey that our method produces results of similar accuracy as the usual approach, while offering substantially faster computation.
\end{abstract}

\noindent{\bf Key words:} Barycenter; Current Employment Statistics survey; Distributed Bayesian computations; Markov chain Monte Carlo; Posterior consistency; Pseudo posterior distribution; Survey sampling; Wasserstein distance.

\section{Introduction}
Bayesian hierarchical models are popular for inference and imputation in complex data because latent dispositional states that underlie observed behaviors and induce a correlation structure can be directly parameterized \citep{RSSC:RSSC12049}. Bayesian models readily support multiple imputation of missing data in a fashion that captures uncertainty in estimation of model parameters (under a missing at random assumption) \citep{citeulike:12855856}.

These models are extensively employed for estimations on data acquired from surveys.  Survey data are often collected using informative sampling designs that induce a correlation between inclusion probabilities assigned to units in a target population and the response variable of interest.  Our inferential interest is the joint estimation of population model parameters and imputation of missing items for respondent-level data acquired under an informative sampling design.

Existing Bayesian methods applied to data acquired under informative sampling designs, however, focus on design-based inference for domain-indexed (e.g., area) summary statistics, rather than estimation of population model parameters \citep{dong:2014,kunihama:2014,wu:2010,si2015}, which is our focus.  The recent approach of \citet{2015arXiv150707050S} formulates a sampling-weighted pseudo posterior distribution to approximate the population posterior distribution of interest, while preserving the posterior sampling geometry for parameters of any model specified by the data analyst. The pseudo posterior computations often become intractable, however, due to the relatively large size of the observed respondent-level data. Motivated by this problem, we propose a new method based on the divide-and-conquer technique that extends the application of \citet{2015arXiv150707050S} to large-sized respondent-level data by scaling the computation, while preserving the useful property of minimal modification to the analyst-specified model or posterior sampling geometry. Our extension generalizes the \emph{Wasserstein Posterior} approach for scalable Bayesian inference due to \citet{Srietal15,2015arXiv150805880S} to account for informative sampling designs. This extension is extremely efficient and supports the rapid turnaround cycles used by BLS to publish the employment statistics on a monthly basis.

Our method consists of three steps. Firstly, the sampled units are randomly split into disjoint subsets such that computation in each subset is tractable. Secondly, we construct a sampling-weighted pseudo posterior distribution for model parameters that is estimated in each data subset of the observed sampled units.  We normalize the sampling weights used to formulate the pseudo posterior in each subset to sum to the total number of observations in the observed sample in order to scale the variance of each subset posterior distribution to match that of the observed full sample data.  Thirdly, we combine these pseudo posterior distributions by computing their barycenter in their Wasserstein space of order 2. The computation of a barycenter from subset pseudo posteriors scales sublinearly in sample size because each subset pseudo posterior estimation may be run in parallel, limited only by computational resources. The proposed method is applicable to the same class of sampling designs as outlined in \citet{2015arXiv150707050S}.  We demonstrate theoretical results that show if the number of subsets are chosen appropriately, then our Generalized Wasserstein pseudo posterior (GWPP) method, applied under the class of informative sampling designs specified in \citet{2015arXiv150707050S}, converges to the true parameter at a near optimal rate.

\section{Motivating Data: Current Employment Statistics Survey}\label{application}
The United States Bureau of Labor Statistics (BLS) administers the Current Employment Statistics survey (CES) to non-farm, public, and private business establishments across the United States on a monthly basis, receiving approximately 270,000 submitted responses in each month, or over $3$ million responses in a year. Estimated total employment is published for detailed industry categories by state and for selected metropolitan areas.  The survey uses a stratified sampling design with strata constructed by combinations of state, broad industry grouping, and employment size divided into $8$ categories. The business establishments are sampled by their unique unemployment insurance tax identification numbers, which may contain a cluster of multiple individual sites. {If a business establishment is selected based on its unique identification number, then all of the associated sites in that cluster are also included.}  Stratum-indexed inclusion probabilities are set to be proportional to the average employment size for member establishments of that stratum.

The CES constructs a \emph{known} sampling design distribution that assigns higher inclusion probabilities to establishments with a relatively larger number of employees. This is a proportion-to-size design that induces a correlation among sample inclusion probabilities and total employment; larger establishments more strongly influence the variance of domain-indexed total employment statistics published by BLS. Such sampling designs are called \emph{informative} because they induce a correlation between selection probabilities and observed values. In this survey, distributions of establishment employment counts for samples will be skewed to higher values than present in the underlying population. If the informativeness in the design is not modeled, then inference on population parameters conditional on the survey data will be biased \citep{2015arXiv150707050S}.

There is a short time gap between the receipt of establishment submissions at the end of a month and the subsequent publication of employment estimates for that month; the joint estimation of population model parameters and imputation for missing items, followed by the computation of employment statistics for reported domains must be performed quickly. The relatively large number of submissions with non-zero changes in employment levels, coupled with the rapid publication schedule, require the use of computationally scalable estimation tools. The sampling-weighted pseudo posterior distribution proposed in \citet{2015arXiv150707050S} fails to meet these requirements, motivating our development of the GWPP for computationally efficient estimation of (population) model parameters and imputation of missing responses on data acquired under an informative sampling design.

Section~\ref{gsa} introduces the pseudo posterior distribution as it will be applied by the data analyst, in practice.
The likelihood used in the pseudo posterior distribution is constructed, in practice, by exponentiating the likelihood contribution for each unit by its associated sampling weight, which is inversely proportional to the unit inclusion probability.  We review the Wasserstein space of measures and computation for the barycenter of subset distributions in this section.  We highlight the concept of stochastic approximation that exponentiates each subset likelihood contribution by a scale factor such that the subset posterior distribution provides a noisy quantification of the uncertainty in the posterior distribution for the full data.  We leverage \citet{2015arXiv150707050S} in Section~\ref{sa} to generalize stochastic approximation to construct subset pseudo posterior distributions for data acquired under an informative sampling design.  Section~\ref{theory} provides theoretical conditions on the sampling design which guarantee the
in-expectation contraction of any subset posterior distribution and in-probability contraction of the GWPP to a delta measure centered at the true parameter value under the Wasserstein metric of order 2. We apply the GWPP for inference in a multivariate employment count response model for synthetic data and for the data acquired from the CES survey in Section~\ref{sec:data-analysis}. We demonstrate that the GWPP and the posterior distribution computed using full data are close in total variation distance.  We conclude this paper with a discussion in Section~\ref{discussion}.

\section{Generalizing Stochastic Approximation}\label{gsa}
\subsection{Preliminaries: Wasserstein Barycenter}\label{wasserstein}
The order $2$ Wasserstein space probability measures are defined on a separable and complete metric space, $\left(\Theta,\rho\right)$.  Let $\Pi_1, \ldots, \Pi_K$ be $K$ probability measures defined on this space of probability measures.  \citet{2015arXiv150805880S} introduces an associated order 2 Wasserstein metric $(W_2)$, which permits the computation of a \emph{barycenter}, $\overline{\Pi}$, of the $K$ probability measures, defined as that probability measure which minimizes the sum of squared $W_2$ distances to the $K$ probability measures.  A more formal introduction is performed in the next, theoretical exposition section.

The Wasserstein barycenter motivates the \emph{Wasserstein Posterior} approach for scalable Bayesian inference \citep{Srietal15}.  Let $y_1, \ldots, y_N$ be data for units in a finite population, $U$, of size $\lvert U\rvert = N$.  Without loss of generality, suppose we divide the units into $K$ equally-sized subsets, $\{U_{j}\}_{j=1,\ldots,K}$, of equal size $\lvert U_{j}\rvert = M$, such that $N=KM$ and subset $j$ includes data $y_{[j]} = \{y_{j1},\ldots,y_{jM}\}$ ($j=1, \ldots, K$). Further, suppose $\Pi_j(\cdot \mid {y}_{[j]}) $ and $\Pi(\cdot \mid y_1, \ldots, y_N)$ are posterior distributions for $\theta \in \Theta$ conditioned on subset $j$ and full data, respectively.   The \emph{Wasserstein Posterior}, denoted as $\overline{\Pi}\left(\cdot \mid y_1, \ldots, y_N \right)$, is the Wasserstein barycenter of $\Pi_{j}\left(\cdot \mid {y}_{[j]}\right)$ ($j=1, \ldots, K$). If posterior draws are available from $\Pi_{j}\left(\cdot \mid {y}_{[j]}\right)$ ($j=1, \ldots, K$), then an empirical approximation of $\overline{\Pi}\left(\cdot \mid y_1, \ldots, y_N \right)$ can be estimated by solving a linear program using those draws; see \citet{Srietal15} for details.

\citet{2015arXiv150805880S} construct a noisy posterior approximation for the population from subset $j$ with,
\begin{equation}
\pi\left(\theta\vert y_{[j]}\right) \propto \left[\mathop{\prod}_{i = 1}^{M}p\left(y_{ji}\vert \theta\right)^{\gamma}\right]\pi\left(\theta\right), \label{wasppseudopost}
\end{equation}
where $\gamma = N/M = K$ exponentiates each likelihood contribution so that uncertainty quantification from the subset of size $M$ approximates that of size $N$, which produces a barycenter estimate, $\overline{\Pi}$, whose estimated posterior variance is of the same order as the posterior distribution estimated on the full data.  \citet{2015arXiv150805880S} refer to this exponentiation of the likelihood contributions for uncertainty quantification as ``stochastic approximation".

\subsection{Preliminaries: Pseudo Posterior Distribution}\label{pseudo}
Under random sampling of the finite population, we don't observe the full population, $U$, but a sample taken from it, $S \subset U$, where $\lvert S\rvert = n \leq N$.  Let $\delta_{i} \in \{0,1\}$ denote the sample inclusion indicator for units $i = 1,\ldots,N$ from the population.  The density for the observed sample is denoted by, $\pi\left(\mathbf{y}_{o}\vert\theta\right) = \pi\left(\mathbf{y}\vert \{\delta_{i} = 1\}_{i=1,\ldots,N},\theta\right)$, where ``$o$" indicates ``observed".

\citet{2015arXiv150707050S} define a pseudo posterior distribution tuned for the theoretical setup of informative sampling. They construct a plug-in approximation for the finite population posterior density estimated on the observed sample as
\begin{equation}
  \pi^{\pi}\left(\theta \mid y_{o,1}, \ldots,y_{o,n}, \tilde w_1, \ldots, \tilde w_n \right) \propto \left\{ \mathop{\prod}_{i = 1}^{n} p\left(y_{o,i} \mid \theta \right)^{\tilde{w}_{i}}  \right\} \pi \left(\theta\right), \label{pseudopost}
\end{equation}
where $\pi (\theta)$ is the prior parameter density, $\tilde{w}_{i} = n w_i (\sum_{i=1}^n w_{i})^{-1}$ ($i = 1,\ldots,n$), with $w_{i} = 1/\pi_{i}$ for $\pi_i$ defined is the marginal inclusion probability of unit $i$. The exponent $\tilde{w}_{i}$ corrects for sampling informativeness and ensures that  $\tilde{w}_{i}$ assigns the relative importance of the likelihood contribution of unit $i$ to approximate the likelihood for the population.  The scaling factor here is $1$ in that weights are scaled to the sample size, $n$, which asymptotically expresses the amount of information present in our observed sample.

The sampled observations are often dependent in design distributions under the informative sampling.
\citet{2015arXiv150707050S} define a condition under which the sampling design distribution produces samples which are asymptotically independent as the finite population size, $N$, increases, which is needed to guarantee $L_{1}$ contraction.  In practice, many sampling designs obey this condition, including the design for the Current Employment Statistics survey, where the number of establishments increases within each industry and state in the limit.  There are two additional conditions that restrict the class of sampling designs required for consistency and they are formally reviewed in Section~\ref{theory}.  We will drop the subscript $``o"$ in $y_o$ in the sequel because our focus is on data acquired from a sample of a finite population.

\subsection{Generalized Stochastic Approximation}\label{sa}
In many applications sampling from the pseudo density in \eqref{pseudopost} is computationally expensive and it is easier to sample from a pseudo posterior density conditioned on a data \emph{subset}. The observed sample, $S \subset U$, (henceforth referred to as the ``full sample", which is not to be confused with the ``full data" associated to the population, $U$) is first divided into disjoint $K$ disjoint subsets, $S_j$ $(j=1, \ldots, K)$, each of equal size, $m = \lvert S_{j} \rvert = n/K$ (where equal size is chosen for ease-of-exposition without loss of generality) such that $\displaystyle S = S_1 \cup \cdots \cup S_K$.  We construct a pseudo likelihood for density, $p\left(y_{ji}\mid \theta\right)$, for unit $i \in S_{j}$, by exponentiating it with its sample weight, $\tilde{w}_{ji}$, to form,
\begin{equation}
  \pi^{\pi}\left(\theta \mid y_{[j]}\right) \propto \left( \mathop{\prod}_{i = 1}^{m}p\left(y_{ji}\vert\theta\right)^{\tilde{w}_{ji}}\right)\pi\left(\theta\right)\label{genpseudopost}
\end{equation}
We redefine $\tilde{w}_{ji}$ as $n w_i (\sum_{i \in S_j} w_{i})^{-1}$ ($j = 1,\ldots,K$), that normalizes the weights in each subset to sum to $n$, the full sample size, rather than $m$, such that variance of $\theta$ with density $\pi^{\pi}\left(\theta \mid y_{[j]}\right)$ ($j = 1,\ldots,K$) is of the same order as that of $\pi^{\pi}\left(\theta \mid y_{o,1}, \ldots,y_{o,n}, \tilde w_1, \ldots, \tilde w_n \right)$ in \eqref{pseudopost}. This ensures that all subset pseudo posterior distribution are noisy approximations of the full sample pseudo posterior distribution.

The GWPP is computed as the barycenter of $K$ subset pseudo posterior distributions with densities defined in \eqref{genpseudopost}. It provides an approximation to the partially-observed finite population posterior density under informative sampling. We next outline the theoretical properties of the GWPP computed using $K$ subset pseudo posterior distributions, which are each scaled by the vector of sampling weights.

\section{Consistency of Generalized Wasserstein Pseudo Posterior}\label{theory}

\subsection{Setup}

Consider the theoretical setup for an informative sampling design. Let $\nu$ be a positive integer, and $U_{\nu}$ is a finite population of size $\vert U_{\nu} \vert = N_{\nu}$ such that if  $\nu < \nu^{'}$, then $N_{\nu} < N_{\nu^{'}}$. Under our setup $\{N_{\nu}\}_{\nu\in\mathbb{N}}$ is an increasing sequence of population sizes, with $\lim_{\nu\uparrow\infty}N_{\nu} = \infty$.  Let $Y_{\nu 1},\ldots, Y_{\nu N_{\nu}}$ be a sequence of independent and \emph{non}-identically distributed (\emph{inid}) random variables that are defined for the $N_{\nu}$ units in population $U_{\nu}$ and take values on the measurable product space $\mathop{\otimes}_{i=1}^{N_{\nu}}\left(\mathcal{Y}_{\nu i},\mathcal{A}_{\nu i}\right)$, where $\mathcal{A}_{\nu i}$ is the Borel sigma-algebra on $\mathcal{Y}_{\nu i}$ ($i=1, \ldots, N_{\nu}$). The asymptotics under our construction is controlled by $\nu\in\mathbb{N}$ to map to the process where we fix a $\nu$, construct an associated finite population of size, $N_{\nu}$, generate random variables $Y_{\nu 1},\ldots, Y_{\nu N_{\nu}} \sim P_{\theta_{0}}$, construct unit marginal sample inclusion probabilities, $(\pi_{\nu 1},\ldots,\pi_{\nu N}$ under $P_{\nu}$ and then draw a sample, $\{1,\ldots,n_{\nu}\}$ from that population.  The process is repeated for each increment of $\nu$ such that the entire vector of response variable values and unit inclusion probabilities are regenerated. See \citet{bonnery:2013} for a recent theoretical exposition of model consistency under informative sampling that indexes a sequence of populations by $\nu$.

For any parameter $\theta  \in \Theta \subset \mathbb{R}^p$, let $P_{\theta \nu i}$ represent the probability distribution of $Y_{\nu i}$ indexed by $\theta$ that has the density $dP_{\theta \nu i} = p\left(y_{\nu i} \mid  \theta\right) d \theta$ relative to a sigma-finite measure $\mu_{i}$ ($i = 1,\ldots,N_{\nu}$). Define the product measure $P^{N_{\nu}}_{\theta}$ on $\mathop{\otimes}_{i=1}^{N_{\nu}}\left(\mathcal{Y}_{\nu i},\mathcal{A}_{\nu i}\right)$ as $P^{N_{\nu}}_{\theta} = \mathop{\otimes}_{i=1}^{N_{\nu}}P_{\theta i}$ that has density $\prod_{i=1}^{N_{\nu}} p\left(y_{\nu i} \mid  \theta\right)$ with respect to $\mathop{\otimes}_{i=1}^{N_{\nu}}\mu_{i}$. We write $Y_{\nu i}$, $y_{\nu i}$, and $P^{N_{\nu}}_{\theta}$ as $Y_{i}$, $y_{i}$, and $P_{\theta}$ for brevity in the remainder of the paper because the context is clear.

\subsection{Pseudo Posterior Distribution}

The observed data are sampled from the finite population, $U_{\nu}$, under a survey sampling design that induces a \emph{known} distribution, $P_{\nu}$, defined on a vector of random inclusion indicators for the population units, $\bm{\delta}_{\nu} = \left(\delta_{\nu 1},\ldots,\delta_{\nu N_{\nu}}\right)$, where $\delta_{\nu i} \in \left\{0,1\right\}$ indexes inclusion of unit $i$ in observed sample, $S_{\nu}$.  The joint distribution over $\left(\delta_{\nu 1},\ldots,\delta_{\nu N_{\nu}}\right)$ is described by known marginal unit inclusion probabilities, $\pi_{\nu i} = \mbox{Pr}\left\{\delta_{\nu i} = 1\right\}$ for all $i \in U_{\nu}$ and the second-order pairwise probabilities, $\pi_{\nu i\ell} = \mbox{Pr}\left\{\delta_{\nu i} = 1 \cap \delta_{\nu \ell} = 1\right\}$ for $i,\ell \in U_{\nu}$.

In the sequel, we further divide the $n_{\nu}$ observed units into $K$ disjoint subsets that, for ease-of-exposition, we suppose are all of size $m_{\nu} < n_{\nu}$.  We conduct parallel model estimations on each sample subset (of size $m_{\nu}$) such that each provides a noisy approximation to the posterior distribution estimated on the full sample.   Without loss of generality for exposition of our consistency results that directly follow, we suppose a collection of populations, $\left\{U_{\nu j}\right\}_{j = 1,\ldots,K}$, each of size, $M_{\nu} < N_{\nu}$, that exhaust $U_{\nu} = \mathop{\bigcup}_{j =1,\ldots,K} U_{\nu j}$. The $K$ populations are all generated from, $P_{\theta}$, with density, $p\left(y_{ji}\middle\vert \theta\right)$. We subsequently take a sample from each $U_{\nu j}$ under $P_{\nu}$, the sampling design distribution.  The resulting set of $K$ samples are typically \emph{dependent} due to the without replacement sampling design where, fixing a $j \in \left\{1,\ldots,K\right\}$, the inclusion probability of a unit in $U_{\nu j}$ will depend on whether units in $\left\{U_{\nu \ell}\right\}_{\ell \neq j \in (1,\ldots,K)}$ are co-included.  The two steps of drawing a sample (of observed data) from the finite population and subsequent division into disjoint subsets are re-cast as a single (informative without replacement) sampling step from the collection of $K$ disjoint finite populations.  We extend notations, $\pi_{\nu ji} = \mbox{Pr}\left\{\delta_{\nu ji} = 1\right\}$ and $\pi_{\nu ji\ell} = \mbox{Pr}\left\{\delta_{\nu ji} = 1 \cap \delta_{\nu j\ell} = 1\right\}$ for $i,\ell \in U_{\nu j}$.

Our task is to perform inference about the unknown true, $\theta_{0}$, that we suppose generates the finite population from $P_{\theta_{0}}$, by assigning a prior measure $\Pi$ with density $\pi$ on the parameter space $\Theta$ such that $\theta_0 \in \Theta$.  We construct a sampling-weighted pseudo likelihood as in \citet{2015arXiv150707050S} by defining
\begin{align}
  p^{\pi}_{\theta ji} = p\left(y_{ji} \mid \theta \right)^{\frac{\delta_{\nu ji}}{\pi_{\nu ji}}},\quad i \in U_{\nu j}. \label{eq:wtss}
\end{align}
The likelihood contribution of sample $i$ in subset $j$ is weighted by $\pi_{\nu ji}^{-1}$ in \eqref{eq:wtss} so that the information in subset $j$ approximates the information in partially observed finite population of size $M_{\nu}$. We use the pseudo likelihood in \eqref{eq:wtss} and the prior $\pi(\theta)$ to obtain the pseudo posterior density for subset $j$ as
\begin{equation}\label{inform_post}
  \pi^{\pi}_{j}\left(\theta  \mid y_{[j]} \delta_{\nu [j]} \right) = \frac{\mathop{\prod}_{i\in[j]}\frac{p^{\pi}_{\theta ji}}{p^{\pi}_{\theta_{0} ji}}\pi(\theta)}{\mathop{\int}_{\Theta}\mathop{\prod}_{i\in [j]}\frac{p^{\pi}_{\theta ji}}{p^{\pi}_{\theta_{0} ji}}\pi(\theta)d\theta},
\end{equation}
where  $[j] = \left\{i \in U_{\nu j}\right\}$ denotes the $M_{\nu}$ finite population units in $U_{\nu j}$, $y_{[j]} = \{y_{ji} : i \in U_{\nu j}\}$, and $\delta_{\nu [j]}  = \{\delta_{\nu ji} : i \in U_{\nu j}\}$. The sampling weights $\pi_{\nu ji}$ $(i \in S_{\nu j})$ in the observed sub-sample, $S_{\nu j} \subseteq U_{\nu j}$,  satisfy $ \mathop{\sum}_{i\in S_{\nu j}} \pi^{-1}_{\nu ji} = n_{\nu}$ so that $\pi^{\pi}_{j}\left(\theta  \mid y_{[j]} \delta_{\nu [j]} \right)$ ($j = 1,\ldots,K$) is a noisy approximation of the posterior density defined on the  observed sample of size $n_{\nu}$, $\pi\left(\theta \mid \{y_i : \delta_{\nu i} = 1, i = 1, \ldots, N_{\nu}\}\right)$. We recover the subset pseudo posterior density defined in \citet{2015arXiv150805880S} if we set $ {\delta}_{\nu [j]} = \left(1,\ldots,1\right)$ in \eqref{inform_post}.

\subsection{Generalized Wasserstein Pseudo Posterior Distribution}

We construct the GWPP to combine $K$ subset pseudo posterior distributions estimated using \eqref{inform_post}. Let $\Pi^{\pi}_{j}\left(\cdot  \mid y_{[j]} \delta_{\nu [j]} \right)$ ($j=1, \ldots, K$) represent the $K$ subset posterior posteriors and $\overline{\Pi}^{{\pi}}(\cdot \mid \{y_i : \delta_{\nu i} = 1, i = 1, \ldots, N_{\nu}\})$ represent the GWPP. The event probabilities in the informative sampling designs are denoted by $P_{\theta_{0},P_{\nu}}$, which is indexed by  $\theta_{0}$ and $P_{\nu}$ to indicate the joint distribution with respect to generation of the finite population and subsequent taking of the observed sample. The resulting sample observations taken from $U_{\nu j}$ under $P_{\nu}$ are now dependent due to the dependence induced by sampling without replacement. We extend the definition of the Wasserstein space of probability measures, $\mathcal{P}_{2}\left(\Theta\right)$, from \citet{2015arXiv150805880S} to define
\begin{equation*}
  \mathcal{P}_{2 \nu} \left(\Theta\right) = \bigg\{\mu_{\nu} :  \mathop{\int}_{\theta\in\Theta} \rho\left(\theta_{0},\theta\right)^{2} \mu_{\nu} \left(d\theta\right) < \infty\bigg\}.
\end{equation*}
Assuming $\Pi^{\pi}_{j}\left(\cdot  \mid y_{[j]} \delta_{\nu [j]} \right) \in \mathcal{P}_{2 \nu}\left(\Theta\right) $ $(j=1, \ldots,K)$, we extend the definition of the associated barycenter from \citet{2015arXiv150805880S} to define the generalized Wasserstein pseudo posterior as
\begin{equation*}
  \overline{\Pi}^{{\pi}} = \mathop{\argmin}_{\Pi \in \mathcal{P}_{2 \nu}\left(\Theta\right)} \, \frac{1}{K} \mathop{\sum}_{j=1}^{K} W_{2}^{2} \left(  \Pi,\Pi_{j}^{\pi}\right),
\end{equation*}
and Proposition 3.8 in \citet{AguCar11} implies that $\overline{\Pi}^{{\pi}}$ exists uniquely in $ \mathcal{P}_{2 \nu} \left(\Theta\right)$.  Our employment of subscript, $\nu$, accounts for the dependence of the resulting pseudo posterior distribution of \eqref{inform_post} on the sampling design distribution, $P_{\nu}$.

\subsection{Empirical process functionals}
We will approximate the joint distribution for population generation and informative sampling using an empirical distribution construction similar to \citet{breslow:2007} that incorporates inverse inclusion probability weights, $1 / \pi_{\nu ji}$ $(i=1, \ldots M_{\nu})$,
\begin{equation}
{P}^{\pi}_{M_{\nu}} = \frac{1}{M_{v}}\mathop{\sum}_{i=1}^{M{\nu}}\frac{\delta_{\nu ji}}{\pi_{\nu ji}}\delta\left(Y_{ji}\right),
\end{equation}
where $\delta\left(Y_{ji}\right)$ denotes the Dirac delta function, with probability mass $1$ on observed $Y_{ji}$ and we recall that $M_{\nu} = \vert U_{\nu j} \vert$ denotes the size of of the finite population for subset $j$.

We follow the notational convention of \citet{Ghosal00convergencerates} and define the associated expectation functionals with respect to these empirical distributions by ${P}^{\pi}_{M_{\nu}}f = \frac{1}{M_{\nu}}\mathop{\sum}_{i=1}^{M_{\nu}}\frac{\delta_{\nu ji}}{\pi_{\nu ji}}f\left(Y_{ji}\right)$.  Similarly, ${P}_{M_{\nu}}f = \frac{1}{M_{\nu}}\mathop{\sum}_{i=1}^{M_{\nu}}f\left(Y_{ji}\right)$ for $f:\mathcal{Y}\rightarrow {\mathbb{R}}$.  Associated centered empirical processes are defined, ${G}^{\pi}_{M_{\nu}} = \sqrt{M_{\nu}}\left({P}^{\pi}_{M_{\nu}}-P_{0}\right)$ and ${G}_{M_{\nu}} = \sqrt{M_{\nu}}\left({P}_{M_{\nu}}-P_{0}\right)$.

The sampling-weighted, pseudo Hellinger distance between densities defined on $\theta_{1},\theta_{2} \in \Theta$, $h^{\pi,2}_{M_{\nu}}\left(\theta_{1},\theta_{2}\right) := \left[h^{\pi}_{M_{\nu}}\left(\theta_{1},\theta_{2}\right)\right]^{2} = \frac{1}{M_{\nu}}\mathop{\sum}_{i=1}^{M_{\nu}}\frac{\delta_{\nu ji}}{\pi_{\nu ji}}h^{2}\left(p_{\theta_{1},ji},p_{\theta_{2},ji}\right)$, where $h\left(p_{1},p_{2}\right) = \left\{ \mathop{\int}\left(\sqrt{p_{1}}-\sqrt{p_{2}}\right)^{2}d\mu \right\}^{\frac{1}{2}}$ for dominating measure, $\mu$.  The associated non-sampling Hellinger distance is specified with, $h^{2}_{M_{\nu}}\left(\theta_{1},\theta_{2}\right) = \frac{1}{M_{\nu}}\mathop{\sum}_{i=1}^{M_{\nu}}h^{2}\left(p_{\theta_{1},ji},p_{\theta_{2},ji}\right)$.  We later assume that $h^{\pi}_{M_{\nu}}\left(\theta,\theta_{0}\right)$ is lower bounded by a constant multiple of $\rho\left(\theta, \theta_0\right)$. This assumption is used in deriving the rate of contraction of the subset pseudo posterior distributions to a delta measure centered on $\theta_0$ ($\delta_{\theta_0}$) in $W_2$ metric.

\subsection{Main Results}
We next specify six conditions for the metric space, $\left(\Theta,\rho\right)$, and the associated prior on the space, $\Pi$, followed by the three additional conditions on the sampling design distribution, $P_{\nu}$. Suppose we have a  sequence, $\epsilon_{M_{\nu}} \downarrow 0$ and $M_{\nu}\epsilon^{2}_{M_{\nu}}\uparrow\infty$ as positive integer $\nu \uparrow\infty$,
\begin{description}
\item[(A1)\label{bounded}] (Non-zero inclusion probabilities) Define constant $\displaystyle\gamma \geq 1: \mathop{\sup}_{\nu} \left( \mathop{\max}_{i\in U_{\nu j}}\frac{1}{\pi_{\nu ji}} \right) \leq \gamma,  \text{ for all } j = 1,\ldots,K$, uniformly, and constants $g_{1},g_{2} > 0$ where $g_{1}\gamma M_{\nu} \leq N_{\nu} \leq g_{2}\gamma M_{\nu}$.
\item[(A2)\label{independence}] (Asymptotic Independence Condition)
     \begin{equation*}
        \displaystyle\mathop{\limsup}_{\nu\uparrow\infty} \mathop{\max}_{i \neq \ell\in U_{\nu j}}\left\vert\frac{\pi_{\nu ji\ell}}{\pi_{\nu ji}\pi_{\nu j\ell}} - 1\right\vert = O(M_{\nu}^{-1}) \text{  with $P_{\theta_0}$-probability $1$}
     \end{equation*}
     such that for some constant, $c_{3} > 0$, and sufficiently large $M_{\nu}$,
     \newline$\displaystyle M_{\nu}\mathop{\sup}_{\nu}\mathop{\max}_{i \neq \ell \in U_{\nu j}}\left[\frac{\pi_{\nu ji\ell}}{\pi_{\nu ji}\pi_{\nu j\ell}}-1\right] \leq c_{3} , \text{ for all } j = 1,\ldots,K$, uniformly.
\item[(A3)\label{compact}] (Compactness) $\Theta$ is a compact space in the $\rho$ metric and $\theta_{0}$ is an interior point of $\Theta$.
\item[(A4)\label{distance}] (Pseudo Distance bounded from below) For any $\theta_{1},\theta_{2} \in \Theta$ and $j = 1,\ldots,K$, there exists a positive constant, $C_{L}$, such that:
    \begin{equation*}
    \mathop{\min}_{\delta_{\nu j} \in\Delta_{\nu j}}h^{2}_{M_{\nu}}\left(\theta_{1},\theta_{2}\right)  \geq C_{L}\rho^{2}\left(\theta_{1},\theta_{2}\right),
    \end{equation*}
    where $\delta_{\nu j} = (\delta_{\nu j1}\in\{0,1\},\ldots,\delta_{\nu j M_{\nu}})$ denotes a selected sample (of size $m_{\nu}$), drawn from the space of all possible samples, $\Delta_{\nu j}$, such that
    $\displaystyle\mathop{\sum}_{\delta_{\nu j}\in\Delta_{\nu j}}\mbox{Pr}_{P_{\nu}}\left(\delta_{\nu j}\right) = 1$,
\item[(A5)\label{size}] (Local entropy condition - Size of model)
    Let constants $D_{1} > 0$ and $0 < D_{2} < \frac{D_{1}^{2}}{2^{12}\gamma^{2}}$, and define a function, $\Phi\left(u,r\right) \geq 0$, increasing in $u \in \mathbb{R}^{+}$, non-decreasing in $r\in \mathbb{R}^{+}$,  such that for all sufficiently large $M_{\nu}$,
        \begin{equation*}
        H_{[]}\left(u,\left\{\theta\in\Theta: h_{M_{\nu}}\left(\theta,\theta_{0}\right) \leq r\right\},h_{M_{\nu} }\right) \leq \Phi\left(u,r\right),
        \end{equation*}
        where $H_{[]}$ denotes the $h_{M_{\nu}}$-bracketing entropy, which is the $\log$ of $1 +$ the bracketing number defined for data drawn independently in \citet{2015arXiv150805880S}, and the size of the bracketing entropy bound is restricted to,
       \begin{equation*}
        \mathop{\int}_{D_{1}\frac{r^{2}}{12}}^{D_{1}r}\sqrt{\Phi\left(u,r\right)}du < D_{2}\sqrt{M_{\nu}}r^{2}.
        \end{equation*}
\item[(A6)\label{thickness}] (Prior thickness) There exist positive constants, $\kappa$ and $c_{\pi}$ such that uniformly over all $j = 1,\ldots,K$,
    \begin{equation*}
    \Pi \left\{ \theta\in\Theta:\frac{1}{M_{\nu}}\mathop{\sum}_{i=1}^{M_{\nu}} \mathbb{E}_{P_{\theta_{0}}}\exp\left(\kappa\log_{+} \frac{p_{\theta_{0} ji}}{p_{\theta ji}}  \right)-1 \leq\epsilon_{M_{\nu}}^{2} \right\} \geq \exp\left(-c_{\pi}\kappa M_{\nu}\epsilon_{M_{\nu}}^{2}\right),
    \end{equation*}
    where $\log_{+}x = \max(\log x,0)$, for $x > 0$.
\item[(A7)\label{convexity}] (Convexity of metric) The metric, $\rho$, satisfies that for any positive integer $N_{\nu}$, $\theta_{1},\ldots,\theta_{N},\theta^{'} \in \Theta$ and non-negative weights, $\mathop{\sum}_{i=1}^{N_{\nu}} w_i = 1$,
    \begin{equation*}
    \rho\left(\mathop{\sum}_{i=1}^{N_{\nu}}w_{i}\theta_{i},\theta^{'}\right) \leq \mathop{\sum}_{i=1}^{N_{\nu}}w_{i}\rho\left(\theta_{i},\theta^{'}\right).
    \end{equation*}
\end{description}

A few comments about our assumptions are in order. Assumptions \nameref{bounded} and \nameref{independence} are the same as those used in \citet{2015arXiv150707050S} and, together, impose conditions on the sampling distribution, $P_{\nu}$, that define a restricted class of sampling designs.  Assumption \nameref{bounded} requires the sampling design to assign a positive probability for inclusion of every unit in the finite population. No portion of the population may be systematically excluded, which would prevent a sample of any size from containing information about the population from which the sample is taken.  Assumption \nameref{independence} restricts the result to sampling designs where the dependence among lowest-level sampled units attenuates to $0$ as $\nu\uparrow\infty$; for example, a two-stage sampling design of clusters within strata would meet this condition if the number of population units nested within each cluster from which the sample is drawn increases in the limit of $\nu$.  Multi-stage sampling designs of individuals within households, which are in turn, nested within geographically-indexed primary sampling units (PSUs) would appear to violate this requirement for asymptotic independence of unit inclusions because the number of individuals within each household remains fixed in the limit of $\nu$; however, it is our experience based on upcoming research that the within household dependence of individuals is overwhelmed by the relative independence between households and PSUs, such that the marginally-weighted pseudo posterior distribution effectively meets this condition.

Assumptions \nameref{compact} -- \nameref{convexity} follow from \citet{2015arXiv150805880S}.  Theorem~\ref{main_subset} will show the contraction of the subset pseudo posterior distributions to $\delta_{\theta_{0}}$ under $W_2$ metric in expectation, $\mathbb{E}_{P_{\theta_{0}}, P_{\nu}}$.  In Assumption \nameref{distance}, the value of $h^{\pi}_{M_{\nu}}$ for a realized sample, $\delta_{\nu j}$, of size, $m_{\nu}$, drawn from a subset population of size, $M_{\nu}$, is a noisy approximation of $h_{M_{\nu}}$ defined on the whole population, since the contribution from each unit, $\ell \in 1,\ldots,m_{\nu}$, used to construct $h^{\pi}_{M_{\nu}}$, is upweighted (by its inverse inclusion probability) to represent its concentration in the population.  Assumption \nameref{size} alters the assumption that regulates model complexity from \citet{2015arXiv150805880S} by inserting $\gamma^{2}$ in the denominator of the upper limit for $D_{2}$, which restricts the bracketing entropy.   Sampling designs with larger $\gamma$ will, on average, produce samples whose information expresses more variation about that of the population, so that the allowed size of the model space under which consistency is guaranteed declines as $\gamma$ increases.  Assumption \nameref{thickness} imposes a stronger exponential decay control over the tail probability than the condition that averages $L_{2}$ norms of the log-likelihood ratio evaluated at the finite population data values specified in Theorem 4 of \citet{ghosal2007}; however, we still use these assumptions for easy comparisons between the results in this work and in \citet{2015arXiv150805880S}. There is no loss of generality as the result goes through with the condition from Theorem 4 of \citet{ghosal2007} with minor modifications.

Our first result guarantees that if our assumptions hold, then each subset-indexed pseudo posterior distribution contracts to a delta measure centered on the true model generating parameters under $W_2$ metric in expectation, $\mathbb{E}_{P_{\theta_{0}}, P_{\nu}}$. This notion of contraction is stronger than the commonly studied contraction rate in $(P_{\theta_{0}}, P_{\nu})$-probability.

\begin{theorem}
\label{main_subset}
Suppose assumptions \nameref{bounded} -- \nameref{convexity} hold for subset pseudo posteriors, $\Pi^{\pi}_{j}\left(\cdot  \mid y_{[j]}\delta_{\nu[j]}\right)$ ($j=1, \ldots, K$). Then there exist positive constants $c_{1}, r_{1}, r_{2}, \gamma, c_{4}, $ and large constant, $B_{0} = \max_{\theta\in\Theta}\rho\left(\theta,\theta_{0}\right)$, such that for sufficiently large $M_{\nu}$,
\begin{align}
\mathbb{E}_{P_{\theta_{0}},P_{\nu}} \left[ W_{2}^{2} \left\{  \Pi_{j}^{\pi}\left(\cdot \mid y_{[j]} {\delta}_{\nu [j]}\right),\delta_{\theta_{0}}(\cdot) \right\} \right] \leq & c_{1}^{2}\epsilon_{M_{\nu}}^{2} + \nonumber \\
& B_{0}\left[\frac{1}{r_{2}M_{\nu}\epsilon_{M_{\nu}}^{2}} + 5\exp\left(-r_{1}c_{4} N_{\nu}\epsilon_{M_{\nu}}^{2}\right)\right],
\end{align}
uniformly for all $j = 1,\ldots,K$, where $r_{1}\geq \frac{\left(c_{\pi}g_{2} + 3\left(\kappa\gamma\right)^{-1}\right)}{g_{1}},~r_{2}= \frac{1}{\left[c_{3} + 1 + \gamma\right]} \leq 1,~c_{1} = \sqrt{\frac{2 r_{1} g_{2}\gamma^{2}}{q_{1} C_{L} }}, c_{4} = \min\left(\frac{q_{2}}{q_{1}},1\right)$.
\end{theorem}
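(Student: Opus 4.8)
The plan is to exploit the fact that the target is a Dirac mass, which collapses the optimal transport problem: for any $\Pi \in \mathcal{P}_{2\nu}(\Theta)$ the only coupling with $\delta_{\theta_0}$ is the product, so $W_2^2(\Pi,\delta_{\theta_0}) = \int_\Theta \rho^2(\theta,\theta_0)\,\Pi(d\theta)$. Applying this to the subset pseudo posterior of \eqref{inform_post} reduces the claim to bounding $\mathbb{E}_{P_{\theta_0},P_\nu}\int_\Theta \rho^2(\theta,\theta_0)\,\Pi_j^\pi(d\theta\mid y_{[j]}\delta_{\nu[j]})$. I would then split $\Theta$ into a near region $B_\nu = \{\theta: \rho(\theta,\theta_0)\le c_1\epsilon_{M_\nu}\}$ and its complement. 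On $B_\nu$ the integrand is at most $c_1^2\epsilon_{M_\nu}^2$ and the posterior mass is at most one, which yields the leading $c_1^2\epsilon_{M_\nu}^2$ term; everything else must come from showing that the pseudo posterior places vanishing weighted mass far from $\theta_0$. The engine for the far region is the three-step recipe of \citet{Ghosal00convergencerates}, namely a lower bound on the normalizing constant, an upper bound on the numerator through exponentially powerful tests, and a peeling of the complement into shells, but each step must be carried out for the inverse-probability-weighted pseudo likelihood \eqref{eq:wtss} and in expectation over both the population draw $P_{\theta_0}$ and the design $P_\nu$.

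For the denominator I would show that, off an exceptional event, the evidence $\int_\Theta \prod_{i\in[j]}(p^\pi_{\theta ji}/p^\pi_{\theta_0 ji})\,\pi(\theta)\,d\theta$ is bounded below by $\exp(-c M_\nu\epsilon_{M_\nu}^2)$. Restricting the integral to the prior-thickness neighborhood of Assumption \nameref{thickness} and using Jensen with the exponential-moment control built into that assumption gives the bound, where the weight ceiling $\gamma$ of Assumption \nameref{bounded} converts the population-level Kullback--Leibler control into its weighted analogue and the constants $c_\pi,\kappa,g_1,g_2$ assemble into $r_1$. On the complementary small-evidence event the diameter bound $\rho(\cdot,\theta_0)\le B_0$ from compactness (Assumption \nameref{compact}) caps the entire integral by $B_0$, while the probability of this event is exponentially small in the effective sample size; since $\sum_{i\in S_{\nu j}}\pi_{\nu ji}^{-1}=n_\nu$ and $N_\nu\asymp\gamma M_\nu$ by Assumption \nameref{bounded}, this contributes the term $B_0\cdot 5\exp(-r_1 c_4 N_\nu\epsilon_{M_\nu}^2)$, the factor $5$ collecting the handful of exceptional events.

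On the good event I would bound the expected posterior mass of the shells. Assumption \nameref{size} controls the $h_{M_\nu}$-bracketing entropy, which by the standard Le Cam/Birg\'e construction yields tests separating $\theta_0$ from each Hellinger shell with errors decaying exponentially in $M_\nu h_{M_\nu}^2$; Assumption \nameref{distance} then transfers this separation from the pseudo-Hellinger distance to the native metric $\rho$, so that shell radii in $\rho$ control the test exponents. The $\gamma^2$ inserted into the entropy ceiling of Assumption \nameref{size} is exactly what keeps these test errors summable once the likelihood contributions are inflated by the weights, and it surfaces as the $\gamma^2$ in $c_1$. The remaining, and central, difficulty is the sampling fluctuation: I would split the weighted log-likelihood ratio $\sum_{i\in[j]}(\delta_{\nu ji}/\pi_{\nu ji})\log(p_{\theta ji}/p_{\theta_0 ji})$ into the unweighted population sum, handled by the tests above, plus the mean-zero remainder $\sum_{i\in[j]}(\delta_{\nu ji}/\pi_{\nu ji}-1)\log(p_{\theta ji}/p_{\theta_0 ji})$. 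Because $\mathbb{E}_{P_\nu}[\delta_{\nu ji}/\pi_{\nu ji}]=1$ this term admits no exponential control, so I would instead bound it in second moment; its variance is governed by the diagonal ceiling $\gamma$ and, crucially, by the off-diagonal pairwise ratios $\pi_{\nu ji\ell}/(\pi_{\nu ji}\pi_{\nu j\ell})$ controlled by Assumption \nameref{independence}, whose constant $c_3$ together with $\gamma$ and a unit term assemble into $1/r_2 = c_3+1+\gamma$. A Chebyshev bound on this fluctuation, integrated against the squared shell radii (using the convexity property of Assumption \nameref{convexity} to pass from barycentric averages to the metric), produces the polynomial term $B_0/(r_2 M_\nu\epsilon_{M_\nu}^2)$. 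Summing the near region, the small-evidence event, and the shell masses gives the stated bound, uniformly in $j$ because every constant is taken as a supremum over subsets.

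The hard part, as indicated, is the dependent-sampling fluctuation: the without-replacement design makes the $\delta_{\nu ji}$ dependent, so the weighted empirical process $G^\pi_{M_\nu}$ is not a sum of independent terms and no exponential concentration is available for it. The entire role of Assumption \nameref{independence} is to make the second-order covariance structure of the inclusion indicators mild enough that a variance (Chebyshev) argument recovers a $1/(M_\nu\epsilon_{M_\nu}^2)$ rate; verifying that this polynomial term does not dominate requires $M_\nu\epsilon_{M_\nu}^2\uparrow\infty$, which is assumed. A secondary subtlety is that the conclusion is an in-expectation bound rather than the weaker in-$(P_{\theta_0},P_\nu)$-probability statement, so each tail estimate must be integrated rather than merely shown to vanish in probability; this is what forces the explicit diameter factor $B_0$ in front of both residual terms and is the reason the argument must track $B_0=\max_{\theta\in\Theta}\rho(\theta,\theta_0)$ throughout.
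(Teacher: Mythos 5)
Your opening is exactly the paper's: the Dirac target collapses $W_2^2$ to $\int\rho^2(\theta,\theta_0)\,\Pi_j^{\pi}(d\theta)$, the near ball contributes $c_1^2\epsilon_{M_\nu}^2$, the far region is reduced via (A4) to a pseudo-Hellinger tail mass written as a numerator over the evidence, and you correctly identify the central variance computation for $\sum_i(\delta_{\nu ji}/\pi_{\nu ji}-1)\log(p_{\theta ji}/p_{\theta_0 ji})$ with the off-diagonal control of (A2) assembling $1/r_2=c_3+1+\gamma$. However, you have attached the two residual terms to the wrong halves of the argument, and the misassignment is not cosmetic. You claim the small-evidence event for the denominator is exponentially rare (contributing $B_0\cdot 5\exp(-r_1c_4N_\nu\epsilon_{M_\nu}^2)$), while the Chebyshev treatment of the sampling fluctuation lives in the shell/numerator analysis (contributing $B_0/(r_2M_\nu\epsilon_{M_\nu}^2)$). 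In the paper it is precisely the reverse: the evidence lower bound (Lemma on the normalizing constant) fails only with probability $\le (r_2M_\nu\epsilon_{M_\nu}^2)^{-1}$, obtained by Chebyshev applied to $\mathbb{G}^{\pi}_{M_\nu}$ acting on the prior-integrated log-ratio, because the without-replacement design makes the $\delta_{\nu ji}$ dependent and Bernstein's inequality is unavailable — this is stated explicitly as the reason the polynomial term dominates. Your own observation that the mean-zero remainder "admits no exponential control" contradicts your claim that the small-evidence event is exponentially rare: that remainder is exactly the fluctuation one must control to lower-bound the evidence, so the exponential rate you assert for the denominator is unobtainable with the tools you invoke, and the polynomial term in the theorem would be lost from the place it actually originates.

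The numerator is where exponential control \emph{is} achieved, via a generalized concentration inequality (the paper's extension of Lemma B.5 of the Wasserstein-posterior paper, built on a bracketing-entropy bound of van de Geer--Lederer type) applied to the truncated, weighted log-likelihood ratios uniformly over $\{h^{\pi}_{M_\nu}(\theta,\theta_0)\ge\sqrt{C_L}c_1\epsilon_{M_\nu}\}$; this gives the numerator bound $\exp(-2r_1N_\nu\epsilon_{M_\nu}^2)$ off an event of probability $\le 4\exp(-r_1q_2g_2\gamma M_\nu\epsilon_{M_\nu}^2/q_1)$, and the $5$ in the final bound is $1+4$: one exponential from the good-event posterior mass plus four from that failure probability — not a count of denominator exceptions. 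Your alternative of bounding the shell fluctuation by a pointwise second moment also has a uniformity gap: Chebyshev at a single $\theta$ does not control $\sup_\theta$ over a shell, so without the entropy-based chaining you cannot bound the numerator integral at all; the $\gamma^2$ in (A5) enters precisely to make the weighted class satisfy the Bernstein moment conditions needed for that uniform inequality. To repair the proof, route the Chebyshev/variance argument (with the $c_3+1+\gamma$ constant you already derived) to the evidence lower bound, and route the bracketing-entropy concentration inequality to the numerator.
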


We note that the rate of convergence is injured for a sampling distribution, $P_{\nu}$, that assigns relatively low inclusion probabilities to some units in the finite population such that $\gamma$ will be relatively larger. Constants $r_{1}$ and $r_{2}$ decrease, while $c_{1}$ increases as $\gamma$ becomes larger. Samples drawn under a design that induces a large variability in the sampling weights will express more dispersion in their information similarity to the underlying finite population, and so will contract on the truth at a relatively slower rate.  Similarly, the larger the dependence among the finite population unit inclusions induced by $P_{\nu}$, the higher will be $c_{3}$ and the slower will be the rate of contraction.  While our consistency result focuses on contraction of the sampling-weighted pseudo posterior distribution onto the true generating parameters, rather than the true posterior distribution, results in
\citet{2015arXiv150707050S} demonstrate that the pseudo posterior distribution contracts onto the true posterior distribution, in practice.  They compare the pseudo posterior distribution estimated on an informative sample to the posterior distribution estimated on an equally-weighted, simple random sample, with both samples taken from the same population.  The pseudo posterior distribution quickly (as sample size increases) removes bias and ensures robust coverage of the $95\%$ credible interval.  The relative variance (and coverage lengths) of the pseudo posterior distribution may take relatively longer to contract to that of the posterior distribution, to the extent that the sampling design is less efficient than simple random sampling.

The source of bias from estimation of an unweighted (population) posterior distribution on observed data taken under an informative sample is the correlation between the unit inclusion probabilities and the response variable(s) of interest.  To the extent that sample inclusion probabilities (and, therefore, sampling weights) express variance unrelated to the response variables, the resulting pseudo posterior distribution will express relatively more variance than the posterior distribution (estimated on a simple random sample), without providing any bias correction.  It is therefore common to calibrate the weights to known population totals for one or more variables, which are fully observed for the whole population, to remove such excess variability, which would have the effect of lowering $\gamma$.  Estimated non-response weights, which are multiplied by the sampling weights to form a set of unit indexed total weights, would be expected to more quickly remove bias in the case where the non-response mechanism is correlated with the response variable(s).  Our method may be used without modification on published sampling weights that include nonresponse adjustments and a calibration step.

Our next result guarantees that if our assumptions hold, then the GWPP contracts to a delta measure centered on the true model generating parameters under $W_2$ metric in $(P_{\theta_0}, P_{\nu})$-probability.

\begin{theorem}
\label{main_wass}
Suppose conditions \nameref{bounded} -- \nameref{convexity} hold for subset pseudo posteriors, $\Pi^{\pi}_{j}\left(\cdot  \mid y_{[j]}\delta_{\nu[j]}\right)$ ($j=1, \ldots, K$). Then as $M_{\nu} \uparrow \infty$ under fixed integer number of subsets, $K$,
\begin{align}
W_{2}\left\{ \overline{\Pi}^{{\pi}}(\cdot \mid \{y_i : \delta_{\nu i} = 1, i = 1, \ldots, N_{\nu}\}),\delta_{\theta_{0}}\left(\cdot\right)  \right\} =  O_{P} \left(\epsilon_{M_{\nu}}\right),
\end{align}
where $O_{P}$ is in $\left(P_{\theta_{0}},P_{\nu}\right)$-probability.
\end{theorem}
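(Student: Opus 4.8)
The plan is to reduce the contraction of the barycenter to the per-subset contraction already established in Theorem~\ref{main_subset}, exploiting the optimality that defines the Wasserstein barycenter. The starting observation is that $\delta_{\theta_0}$ is its own barycenter: the measure minimizing $K^{-1}\sum_{j} W_2^2(\Pi,\delta_{\theta_0})$ over $\mathcal{P}_{2\nu}(\Theta)$ is $\delta_{\theta_0}$ itself, which by the uniqueness in Proposition~3.8 of \citet{AguCar11} is the barycenter of $K$ copies of $\delta_{\theta_0}$. Because $\overline{\Pi}^{\pi}$ minimizes $K^{-1}\sum_j W_2^2(\cdot,\Pi_j^{\pi})$ and $\delta_{\theta_0}$ is a feasible competitor, one has $K^{-1}\sum_j W_2^2(\overline{\Pi}^{\pi},\Pi_j^{\pi}) \le K^{-1}\sum_j W_2^2(\delta_{\theta_0},\Pi_j^{\pi})$. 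Combining this with the triangle inequality $W_2(\overline{\Pi}^{\pi},\delta_{\theta_0}) \le W_2(\overline{\Pi}^{\pi},\Pi_j^{\pi}) + W_2(\Pi_j^{\pi},\delta_{\theta_0})$, averaging over $j$, and applying Cauchy--Schwarz to each average, I obtain the key stability bound
\begin{equation*}
W_2^2\left(\overline{\Pi}^{\pi},\delta_{\theta_0}\right) \le \frac{4}{K}\mathop{\sum}_{j=1}^{K} W_2^2\left(\Pi_j^{\pi},\delta_{\theta_0}\right),
\end{equation*}
which is exactly the non-expansiveness of the barycenter map relied upon by \citet{Srietal15,2015arXiv150805880S}. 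It is convenient here that $\delta_{\theta_0}$ is degenerate, so that $W_2^2(\Pi_j^{\pi},\delta_{\theta_0}) = \int_{\Theta}\rho(\theta,\theta_0)^2\,\Pi_j^{\pi}(d\theta)$ is precisely the functional controlled in Theorem~\ref{main_subset}.

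Taking $\mathbb{E}_{P_{\theta_0},P_{\nu}}$ on both sides and using linearity of expectation, I bound each of the $K$ summands by the uniform-in-$j$ estimate of Theorem~\ref{main_subset}. Since $K$ is held fixed, the $K^{-1}\sum_j$ average of $K$ identical upper bounds reproduces that same bound, giving
\begin{equation*}
\mathbb{E}_{P_{\theta_0},P_{\nu}}\left[W_2^2\left(\overline{\Pi}^{\pi},\delta_{\theta_0}\right)\right] \le 4\left\{ c_1^2\epsilon_{M_\nu}^2 + B_0\left[\frac{1}{r_2 M_\nu\epsilon_{M_\nu}^2} + 5\exp\left(-r_1 c_4 N_\nu\epsilon_{M_\nu}^2\right)\right]\right\}.
\end{equation*}
As $M_\nu\uparrow\infty$ with $K$ fixed, the leading term is $c_1^2\epsilon_{M_\nu}^2$; the second term is asymptotically negligible because $M_\nu\epsilon_{M_\nu}^2\uparrow\infty$, and the third is exponentially negligible because Assumption~\nameref{bounded} forces $N_\nu \ge g_1\gamma M_\nu$ and hence $N_\nu\epsilon_{M_\nu}^2\uparrow\infty$. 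Thus $\mathbb{E}_{P_{\theta_0},P_{\nu}}[W_2^2(\overline{\Pi}^{\pi},\delta_{\theta_0})] = O(\epsilon_{M_\nu}^2)$.

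To reach the stated in-probability rate I would finish with Markov's inequality: for any $t>0$, $P_{\theta_0,P_\nu}\{W_2(\overline{\Pi}^{\pi},\delta_{\theta_0}) > t\,\epsilon_{M_\nu}\} \le \mathbb{E}_{P_{\theta_0},P_\nu}[W_2^2(\overline{\Pi}^{\pi},\delta_{\theta_0})]/(t^2\epsilon_{M_\nu}^2)$, whose right-hand side is at most a fixed multiple of $t^{-2}$ for large $M_\nu$ and can be made arbitrarily small by choosing $t$ large. This yields $W_2(\overline{\Pi}^{\pi},\delta_{\theta_0}) = O_{P}(\epsilon_{M_\nu})$ in $(P_{\theta_0},P_\nu)$-probability, as claimed, and mirrors the remark that the in-expectation contraction of Theorem~\ref{main_subset} is stronger than an in-probability rate.

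I expect the main obstacle to be the rigorous justification of the barycenter stability inequality rather than the probabilistic bookkeeping. Two points require care: (i) verifying that $\delta_{\theta_0}\in\mathcal{P}_{2\nu}(\Theta)$ is an admissible competitor in the variational problem defining $\overline{\Pi}^{\pi}$, which is immediate from Assumption~\nameref{compact} since $B_0 = \max_{\theta\in\Theta}\rho(\theta,\theta_0) < \infty$; and (ii) ensuring the argument survives the dependence across subsets induced by without-replacement sampling. The latter is not in fact problematic, because the stability bound is a deterministic, pathwise inequality among the realized measures $\Pi_1^{\pi},\ldots,\Pi_K^{\pi}$, so the cross-subset dependence enters only at the single expectation step, where linearity suffices and the uniform-in-$j$ control of Theorem~\ref{main_subset} removes any need to quantify correlations between subsets.
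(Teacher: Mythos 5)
Your proposal is correct and lands on the same skeleton as the paper's proof: both reduce $W_{2}\left(\overline{\Pi}^{\pi},\delta_{\theta_{0}}\right)$ to the average of the subset distances $W_{2}\left(\Pi^{\pi}_{j},\delta_{\theta_{0}}\right)$ and then invoke Theorem~\ref{main_subset} together with Markov's inequality. The one genuine difference is how that reduction is obtained. The paper imports it as a black box (Lemma B.7 of \citet{2015arXiv150805880S}), which gives $W_{2}\left(\overline{\Pi}^{\pi},\delta_{\theta_{0}}\right) \leq K^{-1}\sum_{j} W_{2}\left(\Pi^{\pi}_{j},\delta_{\theta_{0}}\right)$ with constant $1$, and then applies Markov, Jensen, and linearity of expectation in that order. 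You instead derive the stability bound from first principles --- feasibility of $\delta_{\theta_{0}}$ in the variational problem defining the barycenter, the triangle inequality, and Cauchy--Schwarz --- which yields the same reduction with constant $4$; since constants are immaterial for an $O_{P}\left(\epsilon_{M_{\nu}}\right)$ statement, this is a perfectly adequate, self-contained substitute, and your reordering (bound the expectation of the square first, then apply Markov) is equivalent to the paper's. One caveat you share with the paper rather than introduce: asserting that $B_{0}/\left(r_{2}M_{\nu}\epsilon_{M_{\nu}}^{2}\right)$ is negligible requires it to be $O\left(\epsilon_{M_{\nu}}^{2}\right)$, not merely $o(1)$, and the hypothesis $M_{\nu}\epsilon_{M_{\nu}}^{2}\uparrow\infty$ alone does not deliver this (for $\epsilon_{M_{\nu}}^{2}=\log^{2}M_{\nu}/M_{\nu}$ that term equals $1/\left(r_{2}\log^{2}M_{\nu}\right)$, which dominates $\epsilon_{M_{\nu}}^{2}$); the stated rate really needs $M_{\nu}\epsilon_{M_{\nu}}^{4}$ bounded away from zero, or else should be read as the maximum of $\epsilon_{M_{\nu}}$ and $\left(M_{\nu}\epsilon_{M_{\nu}}^{2}\right)^{-1/2}$. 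The paper's step $(iv)$ makes the identical leap, so this is a shared imprecision, not a defect of your argument relative to the paper's.
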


In practice, one may try to plug-in a value for $\epsilon_{M_{\nu}}$ that satisfies the conditions, $\epsilon_{M_{\nu}} \downarrow 0$ and $M_{\nu}\epsilon^{2}_{M_{\nu}}\uparrow\infty$ as the positive integer $\nu \uparrow\infty$, to the bound in Theorem~\ref{main_subset} and the convergence order in Theorem~\ref{main_wass} to see if the resultant bound limits to $0$; for example, choosing
$\epsilon_{M_{\nu}} = \left(\log^{2}M_{\nu}/M_{\nu}\right)^{1/2}$, used by \citet{2015arXiv150805880S} for so-called regular models;  for example, models with continuous densities, which are the class of models we specify in our Assumption (A2) works in both Theorems.

An important implication of our two results is that the data analyst may choose the number of subsets, $K$, based on their computational budget and expect that the resulting estimated GWPP will estimate arbitrarily closely to the {full sample} pseudo posterior distribution (for a moderate total sample size), but with a large savings in computation time. We demonstrate this performance in the sequel by estimating both the full sample pseudo posterior and the GWPP on our CES application.  Confidence in the GWPP is important in Federal statistical estimation as it will be impractical or impossible to estimate the model parameters using the full data.

These two theorems extend similar results of \citet{2015arXiv150805880S} for independent data to dependent data, where dependence is induced through the sampling design distribution, $P_{\nu}$; for example, sampling without replacement designs induce dependencies among units.  The proofs of both theorems generally follow from the techniques in  \citet{2015arXiv150805880S} with substantial modifications to account for informative sampling and the sampling design-induced dependence among the observations. Our approaches include two unique enabling lemmas and four additional lemmas that extend \citet{2015arXiv150805880S} to informative sampling . Proofs of the two theorems are in the Appendix and the proofs of enabling lemmas are in Section 1 of the Supplementary Material.

\section{Data Analysis}
\label{sec:data-analysis}
\subsection{Hierarchical Model for Current Employment Statistics Survey Data}\label{model}
Our motivating data consists of survey responses in the state of California in a $12$ month period from October, $2010$ to September, $2011$. Let $c$ index an establishment-by-month case observation for establishment $i$ and in month $t\{i\}$ ($i = 1, \ldots, n$; $t\{i\} = 1,\ldots,T_{i}$; $n_{\mbox{\tiny{cases}}} = \sum_{i=1}^n T_i$; $c = 1, \ldots, n_{\mbox{\tiny{cases}}}$). Let $T = \displaystyle\max\left(T_{1},\ldots,T_{n}\right)$ denote the number of unique months observed in the data. Let $\ell$  $(\ell = 1,\ldots,L)$ index the number of industries. We define industries using the North American Industry Classification System, which assigns a $6$-digit code over $1100$ industries.  We use the first two digits that denote the industry ``super-sectors'' for our data. There are $L = 23$ super-sectors populated by $n =  36390$ establishments in California; see Table 1 in the Supplementary Material for the definition of the super-sectors and the allocation of establishments.

The goal for our modeling is to use the temporal- and industry-indexed dependence among establishments to efficiently perform simultaneous estimation of population model parameters and imputation of missing values for one or more employment count variables. Noting that CES employment count variables, \emph{total number of employees} (ae), and the \emph{total number of production workers} (pw) (generally defined as non-supervisory workers) reported in the survey are highly dependent, we define a $Q$-dimensional response including these count variables, where $Q=2$. The number of missing values for the total number of employees in the survey is only $45$ out of $294674$. This is much smaller than the number of missing responses for the total number of production workers, which equals $142999$ out of $294674$. Accounting for the dependence between the total number of employees and the total number of production workers leads to better estimations of model parameters and imputation of missing responses than the case where dependence between the two responses is ignored.

We next construct a negative binomial sampling-weighted \emph{pseudo} likelihood for the observed sample of establishment employment counts from our survey data with,
\begin{align}
  y_{cq} \mid \tau_q, \, \psi_{cq} &\ind \mbox{NB} \left\{ \tau_{q},\exp(\psi_{cq}) \right\}^{\tilde{w}_{i\{c\}}}, \quad (c=1, \ldots, n_{\mbox{\tiny{cases}}}; \, q = 1, \ldots Q) \nonumber\\
  \mathop{{\psi}_{c}}^{Q\times 1} &= \left(\psi_{c1},\ldots,\psi_{cQ}\right)^{'} = {\theta}_{t\{c\}} + {\gamma}_{\ell\{c\}t\{c\}}z_{c}, \label{psieff}
\end{align}
{where $\ind$ denotes ``independently sampled from,'' ${\tilde{w}_{i\{c\}}}$ is the scaled sampling weight for establishment $i$ linked to case $c$, and NB represents the negative binomial distribution with  $\tau_q$ and $\exp(\psi_{cq})$ as its size and mean parameters}.  The indexing of precision parameters, $\tau_{q}$ (${q = 1,\ldots,Q}$), by employment count response variable, $q$, permits the by-variable modeling of over-dispersion present in each employment count variable from our data due to the large variation in the sizes of establishments in both the population and sample.  The $Q\times 1$ mean on the logarithm scale, ${\psi}_{c}$, is constructed from multivariate fixed and random effects.  The subscripts, $t\{c\},\ell\{c\},$ and $i\{c\}$, used to construct the mean on the logarithm scale in \eqref{psieff} denote the month $t$, industry $\ell$, and establishment $i$ linked to case observation $c$ $(t=1, \ldots, T$; $\ell=1, \ldots, L$; $ i=1, \ldots, n $; $c=1, \ldots, n_{\mbox{\tiny{cases}}})$. Fixed effect intercepts are denoted by the $Q \times T$ matrix, $\displaystyle\mathop{\Theta} = \left(\mathop{{\theta}_{1}},\ldots,{\theta}_{T}\right)$, indexed by response variable and month. We specify industry indexed  $Q \times T \times L$ random effects array, $\displaystyle\mathop{\Gamma} = \left(\mathop{\Gamma_{1}},\ldots,\Gamma_{L}\right)$, where the $Q\times 1$ vector ${\gamma}_{\ell t}$ models an effect for industry $\ell$ in month $t$, $\left(\ell = 1, \ldots, L;\, t = 1,\ldots,T\right)$.  We include industry-indexed random effects because we expect a dependence in the employment counts, $(y_{cq})$, over the months of interest for those establishments linked to the same industry (super sector).   Random effects predictor, $z_{c}$, represents the total employment for establishment, $i\{c\}$, on a $6$ month lagged basis in month, $t\{c\}$, obtained from a census instrument, the Quarterly Census of Employment and Wages.  The $6$ month lag derives from the relatively rapid Current Employment Statistics production schedule under which employment statistics are published on a more timely basis for this survey instrument than is possible for the Quarterly Census of Employment and Wages.  The historical values the Quarterly Census of Employment and Wages serve as a magnitude variable.  The two terms of \eqref{psieff} allow for non-linear associations over industries and months to each response variable.

We complete the specification of our probability model with the following priors,
\begin{subequations}
\label{dpmix}
\begin{align}
\mathop{\Theta}^{Q\times T} &\sim \mathcal{N}_{Q \times T}({0},\mathop{P_{2}^{-1}}^{Q\times Q} \circ \mathop{P_{3}^{-1}}^{T\times T}), \quad
\mathop{\Gamma_{\ell}}^{Q\times T} \iid \mathcal{N}_{Q \times T}({0},P_{8}^{-1}\circ P_{6}^{-1}) ~ (\ell = 1,\ldots,L),\label{priorGamma}\\
P_{s} &\sim \text{Huang-Wand}(\nu, b_{s1}, \ldots, b_{sQ}), ~b_{sq} \iid \mathcal{G}\left(1/2,1\right),~q=1,\ldots,Q ~ (s = 2, 8),\label{wish}\\
P_{s} &= D-r_{s}\Omega;~ r_{s} \sim \mathcal{U}(0,1)~ (s = 3, 6), \quad \tau_{q}^{-1/2} \iid \mathcal{C}(0,1),\label{car}
\end{align}
\end{subequations}
where $\iid$ denotes ``independently and indentically distributed as,'' Huang-Wand is a marginally noninformative prior for covariance matrices \citep{huang2013}, and $\mathcal{N}$, $\mathcal{U}$, $\mathcal{G}$, and $\mathcal{C}$ denote the Gaussian, uniform, Gamma and Cauchy distributions and $\circ$ denotes a tensor or outer product under a separable covariance specification of a matrix variate Gaussian (which is equivalent to employing a Kronecker product if $\Theta$ and $\Gamma_{\ell}$ were vectorized). The matrix $\Omega$ is a $T\times T$ adjacency matrix where $\omega_{ij} = 1$ if months $i$ and $j$ are adjacent; else, $\omega_{ij} = 0$, and $D$ is a $T\times T$ diagonal matrix of row sums of $\Omega$ such that the precisions for months with a larger number of neighbors will be higher than those with a relatively smaller number of neighbors. The priors allow for both a dependence across dimensions, $q \in (1,\ldots,Q)$, and months, $t \in (1,\ldots,T)$ in $\{\Theta, \Gamma_1, \ldots, \Gamma_L\}$.   The data estimate the marginal dependence among the $(y_{c1},\ldots,y_{cQ})$ from both shared links of some $(y_{cq})$ (for establishments, $i\{c\}~c \in (1,\ldots,n_{\mbox{\tiny{cases}}})$) to the industry indexed random effects and also from the by-dimension and month dependencies within each matrix-variate parameter. The form of the priors and the algorithm to sample from the pseudo posterior distribution of parameters $\{\Theta, \Gamma_1, \ldots, \Gamma_L, \tau_1, \ldots, \tau_Q\}$ are described in Section 2 of the Supporting Information.

\subsection{Setup and Comparison Metric}
\label{sec:setup}
We compared the performance of our GWPP with the full sample pseudo posterior distribution. The sampling model for the simulated and real data were based on the hierarchical model in \eqref{psieff}. The sampling algorithm described in Section 2 of the Supplementary Material was used to obtain samples from every posterior distribution after appropriately choosing the sampling weights $w_{ij}$ in \eqref{psieff}. All sampling algorithms ran for 15,000 iterations. We collected every fifth sample after discarding the first 10,000 samples as burn-ins. The convergence of every chain to its stationary distribution was confirmed using trace plots, with stopping set by the fixed width criteria of \citet{Fleg:Jone:batc:2010}.

We more formally compare the GWPP to the full sample pseudo posteriors by computing a normalized total variation distance \eqref{tv}, which takes values in $[0,1]$; the accuracy metric is closer to $1$ for smaller total variation distance and a higher quality approximation of the full sample pseudo posterior by GWPP:
\begin{equation}\label{tv}
\text{accuracy }\{\overline{\pi}^{\pi}\left(\theta\vert y\right)\} = 1-\frac{1}{2}\mathop{\int}_{\Theta}\left\lvert \overline{\pi}^{\pi}\left(\theta\vert y\right) - \pi^{\pi}\left(\theta\vert y\right)\right\rvert d\theta \in \left[0,1\right],
\end{equation}
where $\overline{\pi}^{\pi}\left(\theta\vert y\right)$ denotes the density of GWPP and $\pi^{\pi}\left(\theta\vert y\right)$ denotes the density of full sample pseudo posterior distribution. We compute the distance metric by using its numerical form based on Riemannian summation. We measure the similarity of the GWPP and the pseudo posterior estimated on the full sample, rather than to the (proper) posterior distribution estimated on the full population because \citet{2015arXiv150707050S} have already shown, both in theory and simulations, that the full sample pseudo posterior distribution contracts on the population posterior distribution.

\subsection{Simulated Data}
Consider the sampling model of the Current Employment Statistics survey data described in Section \ref{model}. We fixed $N, T$, $Q$, and $L$ defined in Section \ref{model} as $10,000$, $10$, $2$, and $1$, which excluded any industry-indexed random effects without loss of generality. We fixed $r$ at 0.9 to simulate $P_3$ using \eqref{car}. Given $t$, we fixed $\text{var}(y_{it1})$, $\text{var}(y_{it2})$, and $\text{cov}(y_{it1}, y_{it2})$ at 0$\cdot$5, 2, and 0$\cdot$6 to define $P_2$ ($i=1, \ldots, N$). We first simulated $\Theta$ using  \eqref{priorGamma} and then generated the population level response $q$ for establishment $i$ at time $t$, $y_{itq}$ ($i=1, \ldots, N$; $q=1, \ldots, Q$), as follows:
\begin{align}
  \label{eq:2}
  y_{itq} \mid \tau_q, \, \psi_{tq} &\ind \text{NB} \left\{ \tau_q,  \exp  \left( \psi_{tq} \right) \right\},\quad
  \psi_{tq} = 5 + \theta_{tq},
\end{align}
where $i=1, \ldots, 10,000,\; t=1, \ldots, 10,\; q=1, 2$, $\tau_1=5$, and $\tau_2=10$. The covariance matrices $P_2$ and $P_3$ induced dependence in $\theta_{tq}$s  across $t$s and the two $q$s.

We first generated a finite population according to \eqref{eq:2}, then subsequently drew two informative samples from the finite population of the $N$ establishments with the inclusion probability for each establishment $i$ set to be proportional to $y_{i \cdot \cdot}=  \sum_{t=1}^T \sum_{q=1}^Q y_{itq}$. The sampled data are composed of response values for both variables and all $10$ time points for each establishment included in each sample.  We sampled $n = fN$ of the $N$ establishments of the finite population in each of the two samples for sampling fraction, $f \in \{0.4, 0.6\}$. Establishments contained in each of the two samples were next randomly partitioned into $K$ subsets, each of equal size, $m = n/K$, where $K \in \{5, 10\}$.

We next obtained samples of parameters under \eqref{psieff} from the finite population posterior distribution, full sample pseudo posterior distribution, and our method in every replication. A new finite population and associated set of samples was generated in each simulation replication.  We set $w_i=1$ to obtain parameter draws from the finite population posterior distribution. Parameter draws from the full sample pseudo posterior distribution of size $n$ were estimated by setting $w_i = n \, y_{i \cdot \cdot}^{-1} (\sum_{i=1}^n y_{i \cdot \cdot}^{-1})^{-1}$ ($i=1, \ldots, n$), which normalizes the sampling weights to sum to $n$ for regulation of the uncertainties of estimated parameters.  We drew parameter samples from subset pseudo posterior $j$ by normalizing $w_{ij} = n \, y_{i \cdot \cdot}^{-1} (\sum_{i=1}^n y_{i \cdot \cdot}^{-1})^{-1}$ for every establishment $i$ in the $j$th subset in \eqref{psieff}, which regulates the amount of uncertainty in each subset $j$ to approximate that in the full sample.  Next, we
used the samples from the subset pseudo posterior distributions for each parameter to obtain a combined sample for the corresponding one-dimensional marginal.  We performed this step for each parameter of $\Theta$, and $\Gamma_1, \ldots, \Gamma_L$. For every such marginal, we combined the collection of samples from $K$ subset pseudo posterior distributions using the PIE algorithm\footnote{Software available at \url{https://github.com/david-dunson/divide-conquer-bayes}} \citep{2016arXiv160504029L}.  This simulation setup was replicated 10 times.

The GWPP showed excellent performance in approximating the full sample pseudo posterior distribution for both $K=5$ and $K=10$.  Figure~\ref{sim-dens} demonstrates that estimated pseudo posterior densities our method under both $K=5$ and $K=10$ very closely approximate the full sample pseudo posterior, both in locations and the amount of estimated uncertainties.  Table \ref{simtv} displays computed accuracies (of the GWPP compared to the full sample posterior) for each of the $\theta_{qt}$s, which are all close to $1$. Assumptions (A1)--(A8) were satisfied in our simulation example, so the results of our method were \emph{not} sensitive to the size of the subsets $K$, agreeing with Theorem \ref{main_wass}. The conditions of Theorem \ref{main_subset} were easier to satisfy when $K=5$ than when $K=10$ due to a larger subset size, resulting in higher accuracy for the GWPP with $K=5$ in some cases. In all our simulation examples, the GWPP required only 25\% of the memory resources used by the full sample pseudo posterior; sampling from the subset pseudo posterior and full sample pseudo posterior distributions respectively required 8GB and 32GB of memory resources. Estimation of the GWPP was about 10-times faster than the full sample pseudo posterior in run-time (Figure \ref{time}).  The relative improvement in computation time may be further enhanced in the case that the data analyst has a larger computational budget with more compute nodes.  We demonstrated robust performance as we increased the number of subsets from $K=5$ to $K=10$, based on our computational budget, though we would expect continued robust estimation performance with larger $K$, while memory usage and computation time would further improve, so long as we retain a reasonable subset sample size.

\begin{table}[H]
\def~{\hphantom{0}}
\caption{\it The accuracy \eqref{tv} of the GWPP for the marginals of $\Theta$ averaged across 10 simulation replications. The maximum Monte Carlo error over 10 simulation replications was 0.045.}
{\small
\begin{tabular}{rcccccccccccccccccccc}
  $(q, t)$ & (1, 1) & (2, 1) & (1, 2) & (2, 2) & (1, 3) & (2, 3) & (1, 4) & (2, 4) & (1, 5) & (2, 5) \\
  $K=5, f=60\%$  & 0$\cdot$96 & 0$\cdot$97 & 0$\cdot$94 & 0$\cdot$97 & 0$\cdot$95 & 0$\cdot$96 & 0$\cdot$95 & 0$\cdot$97 & 0$\cdot$96 & 0$\cdot$96 \\
  $K=10, f=60\%$  & 0$\cdot$95 & 0$\cdot$96 & 0$\cdot$94 & 0$\cdot$96 & 0$\cdot$94 & 0$\cdot$96 & 0$\cdot$95 & 0$\cdot$96 & 0$\cdot$94 & 0$\cdot$96 \\
  $K=5, f=40\%$  & 0$\cdot$96 & 0$\cdot$96 & 0$\cdot$93 & 0$\cdot$96 & 0$\cdot$93 & 0$\cdot$96 & 0$\cdot$95 & 0$\cdot$96 & 0$\cdot$94 & 0$\cdot$96 \\
  $K=10, f=40\%$  & 0$\cdot$95 & 0$\cdot$96 & 0$\cdot$94 & 0$\cdot$95 & 0$\cdot$93 & 0$\cdot$95 & 0$\cdot$95 & 0$\cdot$96 & 0$\cdot$94 & 0$\cdot$96 \\
  $(q, t)$ & (1, 6) & (2, 6) & (1, 7) & (2, 7) & (1, 8) & (2, 8) & (1, 9) & (2, 9) & (1, 10) & (2, 10) \\
  $K=5, f=60\%$  & 0$\cdot$95 & 0$\cdot$96 & 0$\cdot$96 & 0$\cdot$96 & 0$\cdot$95 & 0$\cdot$96 & 0$\cdot$95 & 0$\cdot$96 & 0$\cdot$96 & 0$\cdot$96 \\
  $K=10, f=60\%$ & 0$\cdot$94 & 0$\cdot$95 & 0$\cdot$95 & 0$\cdot$97 & 0$\cdot$95 & 0$\cdot$96 & 0$\cdot$96 & 0$\cdot$96 & 0$\cdot$96 & 0$\cdot$96 \\
  $K=5, f=40\%$ & 0$\cdot$95 & 0$\cdot$97 & 0$\cdot$94 & 0$\cdot$96 & 0$\cdot$94 & 0$\cdot$96 & 0$\cdot$93 & 0$\cdot$95 & 0$\cdot$95 & 0$\cdot$97 \\
  $K=10, f=40\%$ & 0$\cdot$96 & 0$\cdot$95 & 0$\cdot$93 & 0$\cdot$95 & 0$\cdot$93 & 0$\cdot$95 & 0$\cdot$92 & 0$\cdot$97 & 0$\cdot$95 & 0$\cdot$96 \\
\end{tabular}
}
\label{simtv}
\end{table}

\begin{figure}[H]
  \subfloat[$f=60\%$]{
    \includegraphics[width=0.9\textwidth]{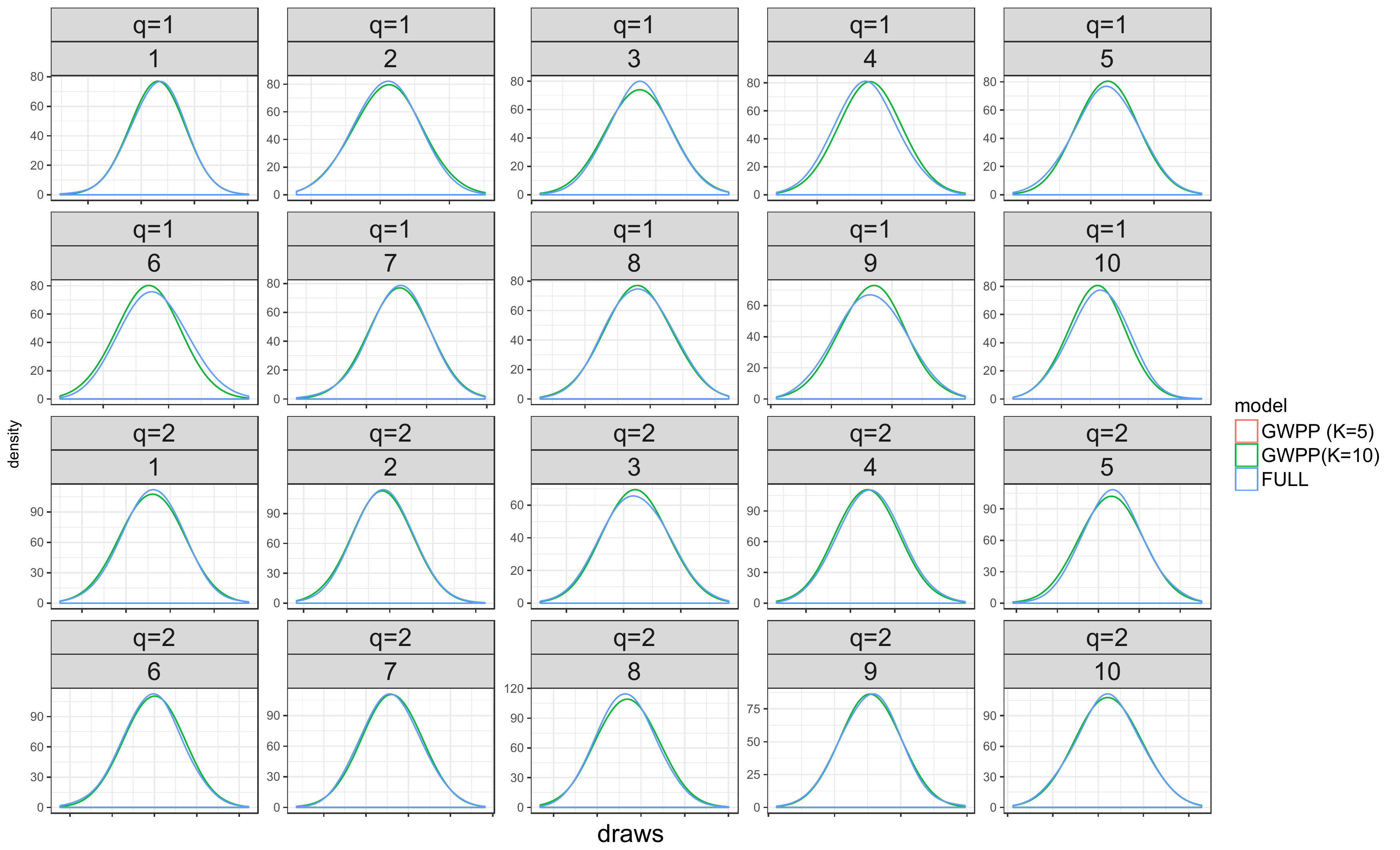}
    \label{fig:60}}\\
  \subfloat[$f=40\%$]{
  \includegraphics[width=0.9\textwidth]{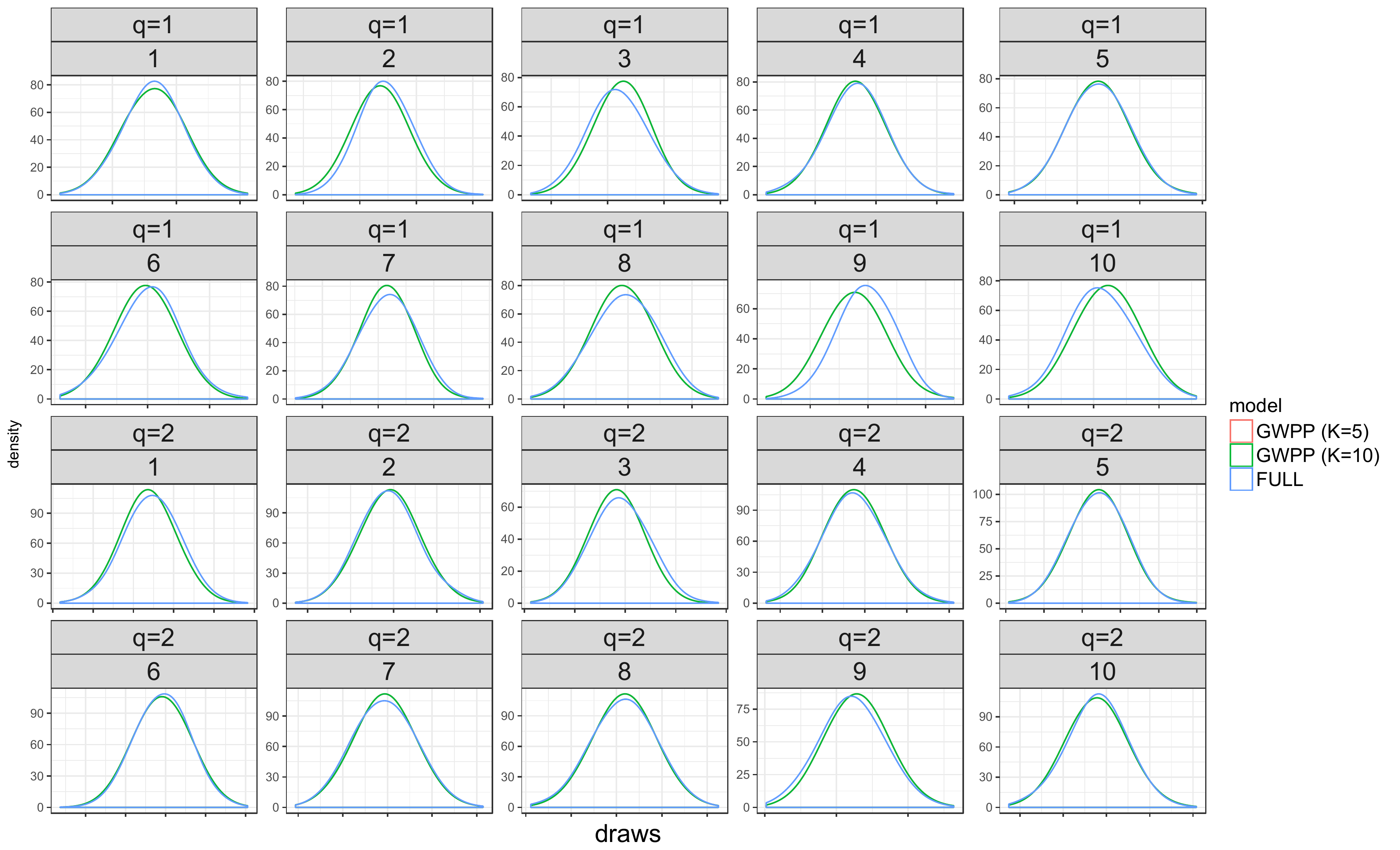}
    \label{fig:40}}
  \caption{Comparison of the full sample pseudo posterior density (FULL) and GWPP density in a simulation replication, where $f$ is the sampling fraction as in assumption (A8) and $K$ is the number of subsets.}
  \label{sim-dens}
\end{figure}
\FloatBarrier

\begin{figure}[H]
  \includegraphics[width=0.9\textwidth]{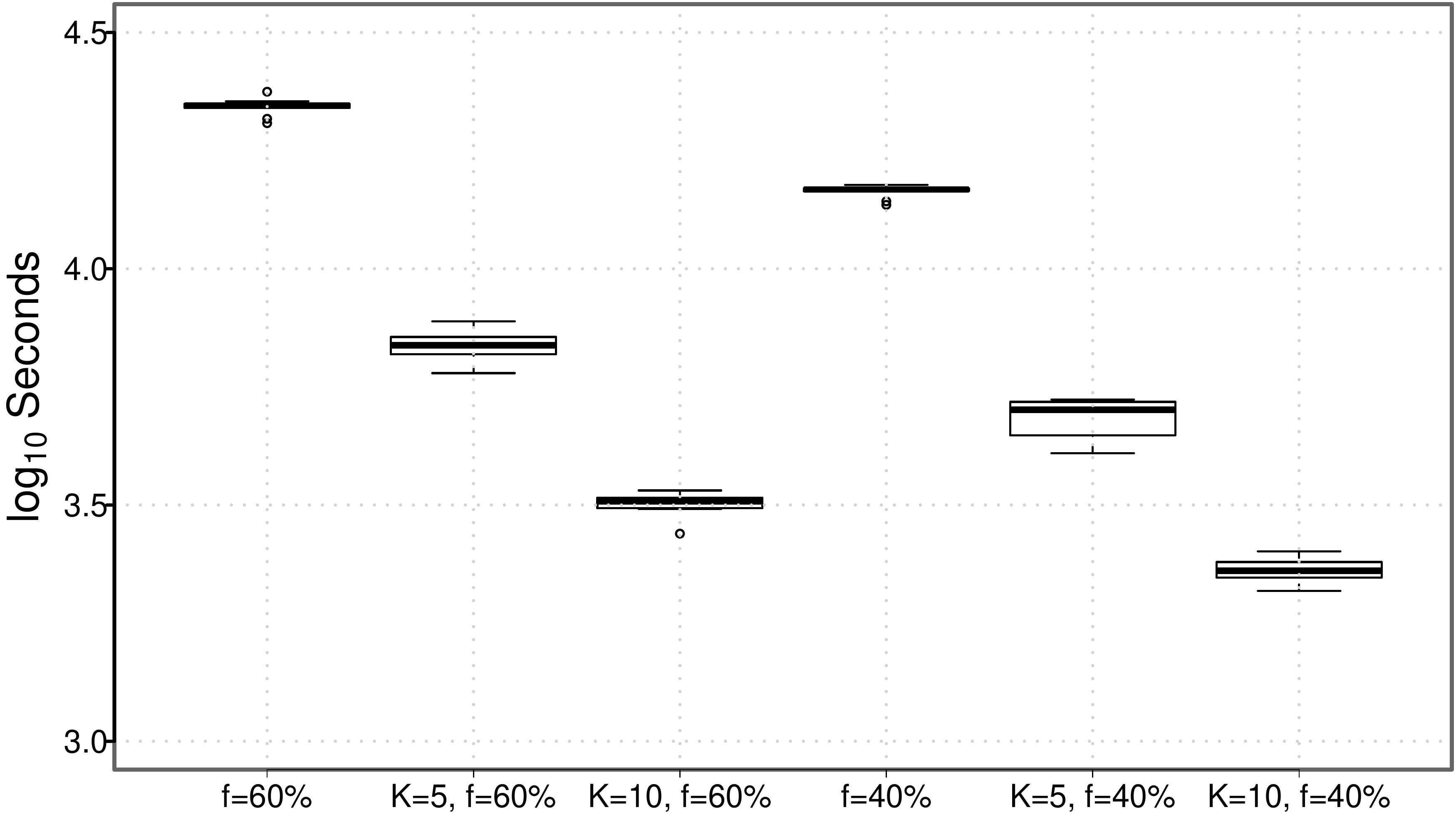}
  \caption{Computation time for the full sample pseudo posteriors and GWPPs across 10 simulation replications. The x-axis labels with $K$ correspond to GWPPs and those without $K$ correspond to full sample pseudo posteriors, $f$ is the sampling fraction as in assumption (A8), and $K$ is the number of subsets.}
  \label{time}
\end{figure}
\FloatBarrier

\subsection{Application to Current Employment Statistics Survey Data}

The survey data for California had $n=39360$ business establishments, each providing responses over multiple months for a total of $n_{\mbox{\tiny{cases}}} = 297000$ establishment-month cases.  We used the establishment-month case observations in the state of California for our comparisons because it was computationally feasible to estimate the full sample pseudo posterior distribution using the hierarchical model in \eqref{psieff}.  Our goal was to demonstrate that the GWPP could be used as an alternative for the full sample pseudo posterior distribution for inference on model parameters and for imputation of missing responses.

We randomly allocated the $n$ establishments to $K  = 4$ subsets of roughly equal numbers of establishments, $ {m} = (9017,9140,9082,9151)$ associated with $n_{\mbox{\tiny{cases}}} = (72841,74009,73702,74122)$ establishment-month case observations.  We divided the $n$ establishments into $L = 23$ industry-indexed strata and conducted simple random sampling within each stratum to populate the subsets.  Stratified selection ensured that all $L = 23$ industry super-sectors were linked to one or more establishments in each subset. We selected $K = 4$ subsets to accommodate our budget for computation and to ensure that $m_{j}$ was sufficiently large such that the conditions for our Theorem \ref{main_subset} were satisfied.

The GWPP provided a good approximation to the full sample pseudo posterior distribution. While the resulting GWPP and full sample pseudo posterior distributions were somewhat more complex than those in the simulation study, the two sets were, nevertheless, fairly similar in the masses of the distributions across various industry super-sectors (Figure~\ref{densities}).  The scaling of the subset pseudo posteriors under generalized stochastic approximation worked very well in that the spread of generalized Wasserstein pseudo posterior and full sample pseudo posterior distributions were similar, suggesting that uncertainty quantification using the two posterior distributions would be similar. The full sample distributions were, however, slightly more peaked than those of the GWPP.  This similarity among the distributional masses was further confirmed using the metric in \eqref{tv}, which showed that the generalized Wasserstein posterior was more than 81\% accurate in approximating the marginals of the full sample pseudo posterior for $\Theta$ (Table~\ref{tvtab}).

The GWPP also showed excellent performance in imputation, which combines the effects of the model parameters. Our model in \eqref{psieff} involved specification of a relatively large number of parameters to parameterize the log means, $\psi_{cq}$s. We constructed the means of our negative binomial model on the data scale (which is relevant for our purpose), $\exp\left(\psi_{cq}\right)$, using  \eqref{psieff}.  These means were used to impute the missing $y_{cq}$s from the posterior predictive distribution constructed from the GWPP for $\theta_{q j}$s and $\gamma_{\ell q j}$s.  The distribution of the posterior mean values of $\exp\left(\psi_{cq}\right)$s associated with the missing responses were nearly identical for the full sample pseudo posterior distribution and the GWPP (Figure~\ref{mis}).

\begin{figure}[H]
  \subfloat[$\Gamma$]{
    \includegraphics[width=0.8\textwidth]{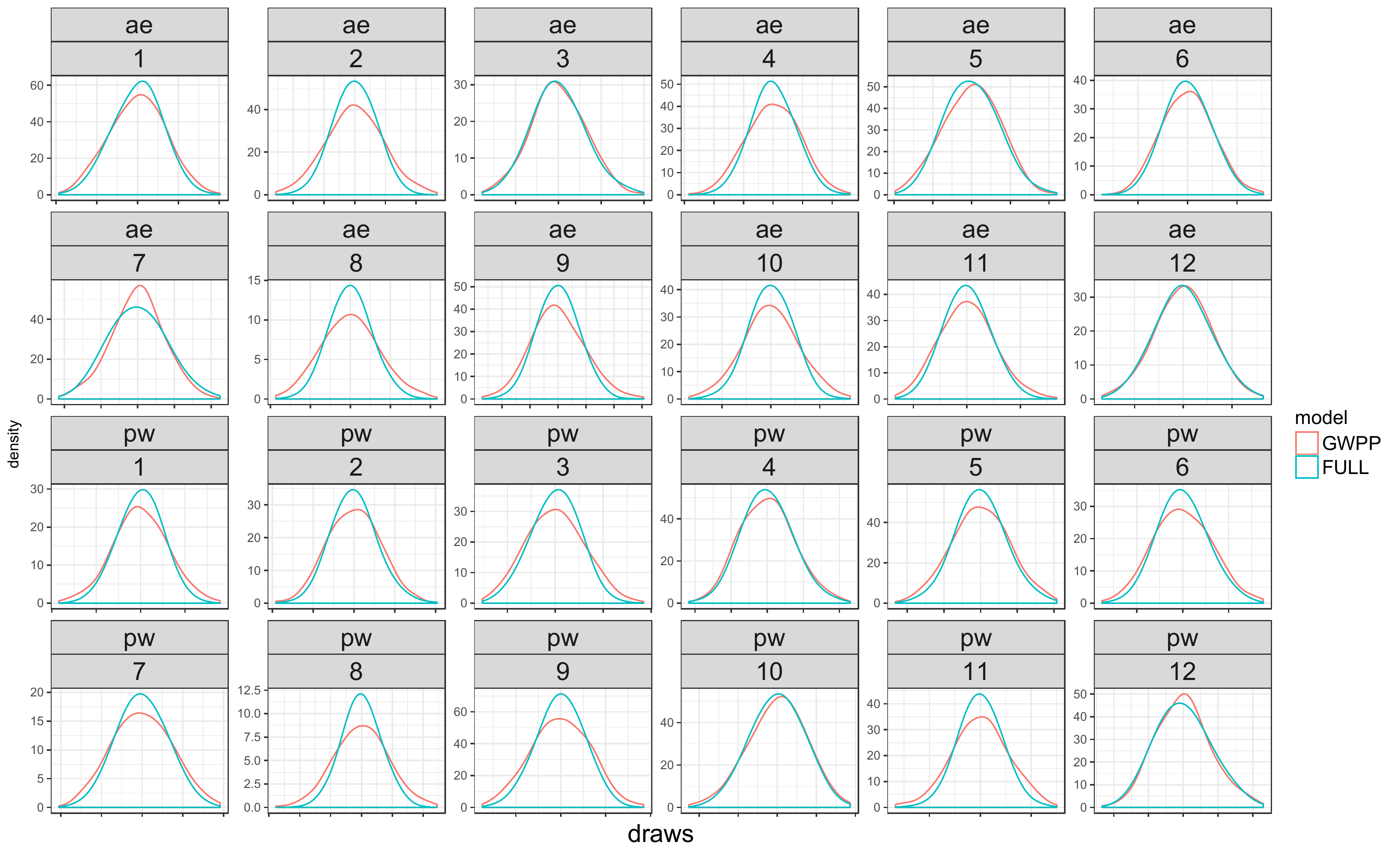}
    \label{fig:gam}}\\
  \subfloat[$\Theta$]{
  \includegraphics[width=0.8\textwidth]{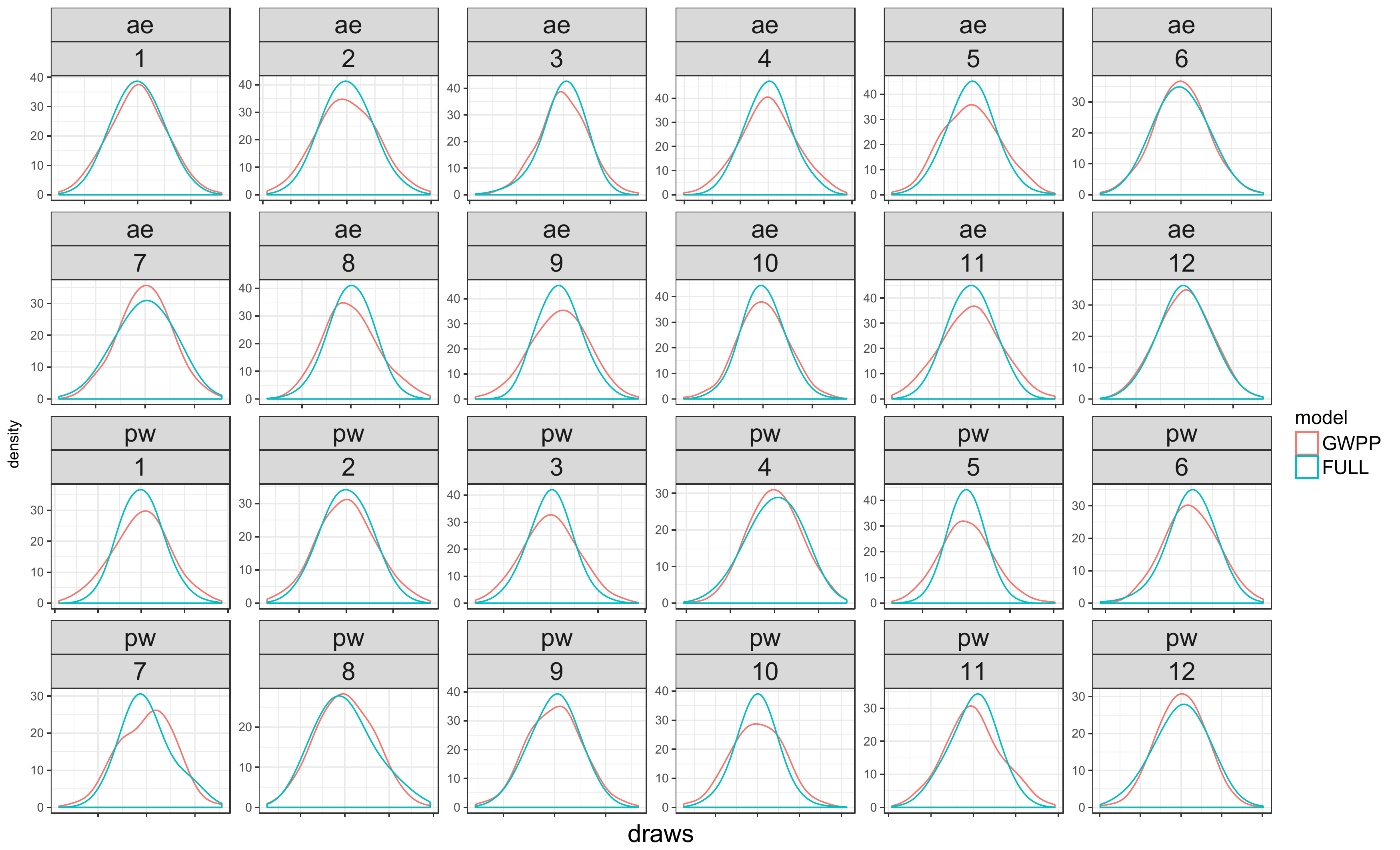}
    \label{fig:thet}}
  \caption{Comparison of the full sample pseudo posterior density (FULL) and GWPP density in application to the CES sample data.  Each plot panel compares the GWPP density under $K = 4$ (in red) to the full sample pseudo posterior density (in turquoise) for the selected parameters. Each panel represents a month ($1-12$) and a variable (ae,pw), where ae represents all employees and pw represents production workers. The top plot panels are for all $\{\gamma_{\ell q j}\}, \ell = $ the Professional \& Technical industry super-sector.  The bottom plot panels include the intercept parameters, $\{\theta_{q j}\}$.}
  \label{densities}
\end{figure}
\FloatBarrier

\begin{table}[H]
\def~{\hphantom{0}}
\caption{\it The accuracy \eqref{tv} of the GWPP for the marginals $\theta_{qt}$ ($q=1, 2$; $t=1, \ldots, 12$) in application to the CES sample. }
{\small
\begin{tabular}{cccccccccccc}
  $\theta_{11}$ & $\theta_{21}$ & $\theta_{12}$ & $\theta_{22}$ & $\theta_{13}$ & $\theta_{23}$ & $\theta_{14}$ & $\theta_{24}$ & $\theta_{15}$ & $\theta_{25}$ & $\theta_{16}$ & $\theta_{26}$\\
  0$\cdot$92 & 0$\cdot$84 & 0$\cdot$88 & 0$\cdot$90 & 0$\cdot$91 & 0$\cdot$86 & 0$\cdot$88 & 0$\cdot$94 & 0$\cdot$86 & 0$\cdot$81 & 0$\cdot$95 & 0$\cdot$89\\
  $\theta_{17}$ & $\theta_{27}$ & $\theta_{18}$ & $\theta_{28}$ & $\theta_{19}$ & $\theta_{29}$ & $\theta_{1\,10}$ & $\theta_{2\,10}$ & $\theta_{1\,11}$ & $\theta_{2\,11}$ & $\theta_{1\,12}$ & $\theta_{2\,12}$\\
  0$\cdot$94 & 0$\cdot$82 & 0$\cdot$86 & 0$\cdot$92 & 0$\cdot$84 & 0$\cdot$92 & 0$\cdot$90 & 0$\cdot$84 & 0$\cdot$86 & 0$\cdot$87 & 0$\cdot$95 & 0$\cdot$95 \\
\end{tabular}
}
\label{tvtab}
\end{table}

\begin{figure}[H]
\includegraphics[width=0.99\linewidth]{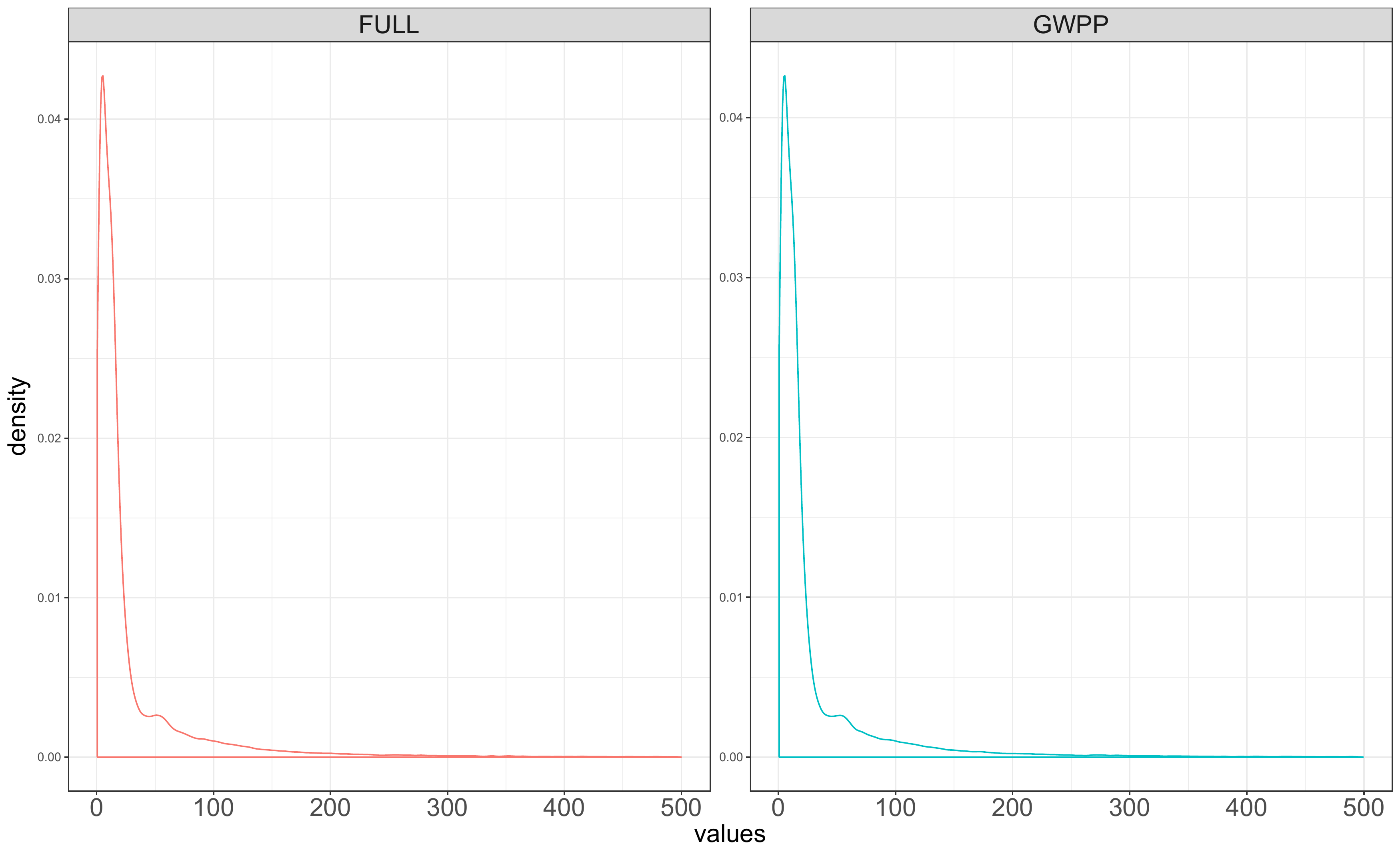}
\caption{Comparison of distribution of $n_{\tiny{\mbox{miss}}} \times 1$ posterior means, $\exp\left(\psi_{cq}\right)$, estimated for application to the CES sample using the GWPP and full sample pseudo posterior distributions (FULL). The right-hand density plot presents the distribution of the $n_{\tiny{\mbox{miss}}}$, $\{\exp\left(\psi_{cq}\right)\}$, linked to missing $\{y_{cq}\}$
estimated from the GWPP, while the left-hand density plot presents the distribution of the $n_{\tiny{\mbox{miss}}}$, $\{\exp\left(\psi_{cq}\right)\}$ estimated from the full sample pseudo posterior distribution. Approximately $25\%$ of $\{y_{cq}\}$ are missing.}
\label{mis}
\end{figure}
\FloatBarrier

\section{Concluding Remarks}\label{discussion}
We have extended stochastic approximation underlying WASP to dependent sample data collected under an informative sampling design.  We have demonstrated the contraction of both the subset pseudo posterior distributions using our sampling-weighted stochastic approximation and the computed GWPP under informative sampling where establishment marginal inclusion probabilities are correlated with the response.

The efficiency of the GWPP was critical in extending the inference on a low-dimensional parameter space to imputation on a parameter space of medium dimensions that provided sufficient flexibility for high quality imputation. Future areas of exploration include assessing feasibility of the GWPP under joint modeling of marginal sampling weights and the response of interest in a fully Bayesian construction, as contrasted with the plug-in pseudo posterior.

\begin{center}
\textbf{Supporting Information. Additional information for this article is available online}\label{suppA}
\end{center}
\begin{description}
  \item[Enabling Lemmas:] Enabling lemmas and proofs to support main theoretical results.
  \item[Model:] Hierarchical Model and Pseudo Posterior Formulations for Current Employment Statistics Survey Data.
  \begin{description}
  \item[Table:] Table S1 lists definitions for the $23$, $2-$ digit supersectors linking establishments in the CES
  \end{description}
\end{description}

  \bibliographystyle{agsm}
 \bibliography{mv_refs_sep2015_ss}

\appendix

\section{Proof of Theorem~\ref{main_subset}}\label{Appsubset}
\begin{proof}
We begin the proof in the same manner as in \citet{2015arXiv150805880S} by deconstructing the expectation of the squared Wasserstein distance from the pseudo posterior for subset $j$, $\Pi^{\pi}_{j}\left(\cdot  \mid y^{[j]}\delta_{\nu[j]}\right)$, to the delta measure at $\theta_{0}$, into two parts.  We recall assumption ~\nameref{compact} that $\Theta$ is compact, so that the sieve, $\Theta_{N_{\nu}}$, specified in \citet{ghosal2007} equals the entire space, $\Theta$, and we are able to bound, $\rho\left(\theta,\theta_{0}\right) < B_{0}$:
\begin{align}
  &\mathbb{E}_{P_{\theta_{0}},P_{\nu}} \left[ W_{2}^{2} \left\{ \Pi^{\pi}_{j}\left(\cdot  \mid y^{[j]}\delta_{\nu[j]}\right),\delta_{\theta_{0}}(\cdot) \right\} \right] = \mathbb{E}_{P_{\theta_{0}},P_{\nu}}\mathop{\int}_{\theta\in\Theta}\rho^{2}\left(\theta,\theta_{0}\right)\Pi_{j}^{\pi}\left(d\theta\mid y_{[j]}  {\delta}_{\nu [j]}\right)\nonumber\\
  &\leq \mathbb{E}_{P_{\theta_{0}},P_{\nu}}\mathop{\int}_{\left\{\theta:\rho\left(\theta,\theta_{0}\right) \leq c_{1}\epsilon_{M_{\nu}}\right\}}\rho^{2}\left(\theta,\theta_{0}\right)\Pi_{j}^{\pi}\left(d\theta\mid y_{[j]}  {\delta}_{\nu [j]}\right)+\mathbb{E}_{P_{\theta_{0}},P_{\nu}}\mathop{\int}_{\left\{\rho\left(\theta,\theta_{0}\right)> c_{1}\epsilon_{M_{\nu}}\right\}}\rho^{2}\left(\theta,\theta_{0}\right)\Pi_{j}^{\pi}\left(d\theta\mid y_{[j]}  {\delta}_{\nu [j]}\right)\nonumber\\
  &\leq \left(c_{1}\epsilon_{M_{\nu}}\right)^{2} + B_{0}^{2}\mathbb{E}_{P_{\theta_{0}},P_{\nu}}\Pi_{j}^{\pi}\left(\rho\left(\theta,\theta_{0}\right)>c_{1}\epsilon_{M_{\nu}}\mid y_{[j]}  {\delta}_{\nu [j]}\right)\label{postprob}.
\end{align}
We set constant, $\displaystyle c_{1} = \sqrt{\frac{2 r_{1} g_{2}\gamma^{2}}{q_{1} C_{L} }}$, and we note that it depends on the upper bound on the sampling weights, $\gamma$, specified in assumption ~\nameref{bounded} over all $(i,j) \in U_{\nu j}$ $(j=1, \ldots, k)$.  The additional constants $g_{2},r_{1}, q_{1}, C_{L}$ are specified in assumptions ~\nameref{distance} and ~\nameref{bounded} and in Lemmas~\ref{B5} and \ref{B6} in the online Supporting Information.

We next focus to bound the second term on the right-hand side of \eqref{postprob}.  The flow of the proof is most similar to Theorem 4.3 of \citet{2015arXiv150805880S} and Theorem 3 of \citet{2015arXiv150707050S}.  We extend these approaches to account for the taking of an informative random sample from the finite sub-populations, $U_{\nu j}$ $(j=1, \ldots, K)$. We first use assumption ~\nameref{distance} to bound the pseudo posterior with respect distance metric $\rho$ from above by the sampling-weighted, pseudo Hellinger distance,
\begin{align}
  &\Pi_{j}^{\pi}\left(\theta\in\Theta:\rho\left(\theta,\theta_{0}\right)>c_{1}\epsilon_{M_{\nu}}\mid y_{[j]}  {\delta}_{\nu [j]}\right)\\
  &\leq\Pi_{j}^{\pi}\left(\theta\in\Theta:h_{M_{\nu}}^{\pi}\left(\theta,\theta_{0}\right)>\sqrt{C_{L}}c_{1}\epsilon_{M_{\nu}}\mid y_{[j]}  {\delta}_{\nu [j]}\right),
\end{align}

We next bound the expectation with respect to the joint distribution, $\left(P_{\theta_{0}},P_{\nu}\right)$, of the pseudo posterior,
\begin{align}
  &\displaystyle\Pi_{j}^{\pi}\left(\theta\in\Theta:h_{M_{\nu}}^{\pi}\left(\theta,\theta_{0}\right)>\sqrt{C_{L}}c_{1}\epsilon_{M_{\nu}}\mid y_{[j]}  {\delta}_{\nu [j]}\right)=\nonumber\\
  &\displaystyle\frac{\displaystyle\mathop{\int}_{\left\{\theta\in\Theta: h_{M_{\nu}}^{\pi}\geq\sqrt{C_{L}}c_{1}\epsilon_{M_{\nu}}\right\}}\displaystyle\mathop{\prod}_{i=1}^{M_{\nu}}\frac{p^{\pi}_{\theta ji}}{p^{\pi}_{\theta_{0} ji}}\Pi\left(d\theta\right)}{\displaystyle\mathop{\int}_{\theta\in\Theta}\mathop{\displaystyle\prod}_{i=1}^{M_{\nu}}\frac{p^{\pi}_{\theta ji}}{p^{\pi}_{\theta_{0} ji}}\Pi\left(d\theta\right)}.\label{postmass}
\end{align}

We may bound the probability mass from below for some minimum value of the denominator of \eqref{postmass} using assumption ~\nameref{thickness} and Lemma~\ref{B6} in the online Supplemental Information such that with probability greater than or equal to $1-\left(r_{2}M_{\nu}\epsilon_{M_{\nu}}^{2}\right)^{-1}$,
\begin{equation}
  \displaystyle\mathop{\int}_{\theta\in\Theta}\mathop{\displaystyle\prod}_{i=1}^{M_{\nu}}\frac{p^{\pi}_{\theta ji}}{p^{\pi}_{\theta_{0} ji}}\Pi\left(d\theta\right) > \exp\left(-r_{1} N_{\nu}\epsilon_{M_{\nu}}^{2}\right).
\end{equation}

We next bound the numerator of \eqref{postmass}, from above, in $(P_{\theta_{0}},P_{\nu})$-probability, using assumptions ~\nameref{size}, ~\nameref{bounded}, and Lemma~\ref{B5} in the online Supplemental Information where the numerator,
\begin{align}
   \mathop{\int}_{\left\{\theta\in\Theta: h_{M_{\nu}}^{\pi}\geq\sqrt{C_{L}}c_{1}\epsilon_{M_{\nu}}\right\}}\displaystyle\mathop{\prod}_{i=1}^{M_{\nu}}\frac{p^{\pi}_{\theta ji}}{p^{\pi}_{\theta_{0} ji}}\Pi\left(d\theta\right) &\leq \exp\left(-\frac{q_{1}  M_{\nu} C_{L} c_{1}^{2}\epsilon_{M_{\nu}}^{2}}{2\gamma}\right)\label{plugintau}\\                                                                                                                                                                     &\leq \exp\left(-2 r_{1}g_{2} \gamma M_{\nu} \epsilon_{M_{\nu}}^{2}\right)\label{pluginc1}\\
   &\leq \exp\left(-2 r_{1} N_{\nu} \epsilon_{M_{\nu}}^{2}\right)\label{MtoN},
 \end{align}
 The inequality in \eqref{plugintau} results from plugging in $\tau = \sqrt{C_{L}}c_{1}\epsilon_{M_{\nu}}$ into the result for Lemma 1 in the online Supplemental Information.  The inequality in \eqref{pluginc1} results from plugging in for $c_{1}$.  We used $N_{\nu} \leq g_{2}\gamma M_{\nu}$ from assumption ~\nameref{bounded} to achieve the inequality in \eqref{MtoN}.

The lower bound of \eqref{MtoN} is realized with probability at least
 \begin{align}
   &1 - 4\exp\left(-\frac{q_{2}M_{\nu} C_{L}c_{1}^{2}\epsilon_{M_{\nu}}^{2}}{\gamma}\right)\label{plugtauprob}\\
   &= 1 - 4\exp\left(-\frac{r_{1}q_{2}g_{2}\gamma M_{\nu}\epsilon_{M_{\nu}}^{2}}{q_{1}}\right)\label{numerprob},
 \end{align}
where we, again, plug in for $\tau$ for the probability bound of Lemma 1 in the Supplementary Material to achieve \eqref{plugtauprob} and for $c_{1}$ to achieve \eqref{numerprob}. Then with probability at least $1- 4\exp\left(-\frac{r_{1}q_{2}g_{2}\gamma M_{\nu}\epsilon_{M_{\nu}}^{2}}{q_{1} }\right)-\left(r_{2}M_{\nu}\epsilon_{M_{\nu}}^{2}\right)^{-1}$
 \begin{align}
   &\Pi_{j}^{\pi}\left(\theta\in\Theta:\rho\left(\theta,\theta_{0}\right)>c_{1}\epsilon_{M_{\nu}}\mid y_{[j]}  {\delta}_{\nu [j]}\right)\nonumber\\
   &\leq \exp\left(-2 r_{1} N_{\nu} \epsilon_{M_{\nu}}^{2}\right)\times \exp\left(r_{1} N_{\nu}\epsilon_{M_{\nu}}^{2}\right)\nonumber\\
   &\leq \exp\left(-r_{1} N_{\nu} \epsilon_{M_{\nu}}^{2}\right)
 \end{align}

Let the event, $\displaystyle A^{\pi}_{M_{\nu}} = \left\{\Pi_{j}^{\pi}\left(\theta\in\Theta:\rho\left(\theta,\theta_{0}\right)>c_{1}\epsilon_{M_{\nu}}\mid y_{[j]}  {\delta}_{\nu [j]}\right) \leq \exp\left(-r_{1} N_{\nu} \epsilon_{M_{\nu}}^{2}\right)\right\}$, which we use
 to establish the $L_{1}$ bound,
 \begin{align}
   &\mathbb{E}_{P_{\theta_{0}},P_{\nu}}\left[\Pi_{j}^{\pi}\left(\theta\in\Theta:\rho\left(\theta,\theta_{0}\right)>c_{1}\epsilon_{M_{\nu}}\mid y_{[j]}  {\delta}_{\nu [j]}\right)\right]\nonumber\\
   &=\mathbb{E}_{P_{\theta_{0}},P_{\nu}}\biggl[\mathbf{I}\left(A^{\pi}_{M_{\nu}}\right)\times\Pi_{j}^{\pi}\left(\theta\in\Theta:\rho\left(\theta,\theta_{0}\right)>c_{1}\epsilon_{M_{\nu}}\mid y_{[j]}  {\delta}_{\nu [j]}\right)\nonumber\\
   &+ \mathbf{I}\left(\left[A^{\pi}_{M_{\nu}}\right]^{c}\right)\times\Pi_{j}^{\pi}\left(\theta\in\Theta:\rho\left(\theta,\theta_{0}\right)>c_{1}\epsilon_{M_{\nu}}\mid y_{[j]}  {\delta}_{\nu [j]}\right)\biggr]\nonumber\\
   &\leq \exp\left(-r_{1} N_{\nu} \epsilon_{M_{\nu}}^{2}\right) + P_{P_{\theta_{0}},P_{\nu}}\left(\left[A^{\pi}_{M_{\nu}}\right]^{c}\right)\nonumber\\
   &\leq \exp\left(-r_{1} N_{\nu} \epsilon_{M_{\nu}}^{2}\right) + 4\exp\left(-\frac{r_{1}q_{2}g_{2}\gamma M_{\nu}\epsilon_{M_{\nu}}^{2}}{q_{1}}\right)+\frac{1}{r_{2}M_{\nu}\epsilon_{M_{\nu}}^{2}}\nonumber\\
   &\leq \exp\left(-r_{1} N_{\nu} \epsilon_{M_{\nu}}^{2}\right) + 4\exp\left(-\frac{r_{1}q_{2} N_{\nu}\epsilon_{M_{\nu}}^{2}}{q_{1}}\right)+\frac{1}{r_{2}M_{\nu}\epsilon_{M_{\nu}}^{2}}\\
   &\leq 5\exp\left(-r_{1}c_{4} N_{\nu}\epsilon_{M_{\nu}}^{2}\right)+\frac{1}{\left(r_{2}M_{\nu}\epsilon_{M_{\nu}}^{2}\right)}\label{finalbound},
 \end{align}
 where $c_{4} = \min\left(\frac{q_{2}}{q_{1}},1\right)$. The first term in \eqref{finalbound} dominates because our loss of independence prevents the use of Bernstein's inequality as leveraged in \cite{massart2007concentration} and \citet{2015arXiv150805880S} to get an exponential lower bound on the denominator of \eqref{postmass}.  Returning to the decomposition of the $W_{2}$ distance in \eqref{postprob},
 \begin{align}
   &\mathbb{E}_{P_{\theta_{0}},P_{\nu}}W_{2}^{2}\left(\Pi^{\pi}_{j}\left(\cdot  \mid y^{[j]}\delta_{\nu[j]} \right),\delta_{\theta_{0}}(\cdot)\right)\nonumber\\
   &\leq \left(c_{1}\epsilon_{M_{\nu}}\right)^{2} + B_{0}\left[\frac{1}{r_{2}M_{\nu}\epsilon_{M_{\nu}}^{2}} + 5\exp\left(-r_{1}c_{4} N_{\nu}\epsilon_{M_{\nu}}^{2}\right)\right],
 \end{align}
 uniformly for all $j = 1,\ldots,K$, for constants, $r_{1}\geq \frac{\left(c_{\pi}g_{2} + 3\left(\kappa\gamma\right)^{-1}\right)}{g_{1}},~r_{2}= \frac{1}{\left[c_{3} + 1+\gamma\right]} \leq 1,~c_{1} = \sqrt{\frac{2 r_{1} g_{2}\gamma^{2}}{q_{1} C_{L} }}$ and $c_{4} = \min\left(\frac{q_{2}}{q_{1}},1\right)$.
 \end{proof}

\newpage
\section{Proof of Theorem~\ref{main_wass}}\label{Appwasp}
 \begin{proof}
   We bound the probability using Lemma B.7 from \citet{2015arXiv150805880S}, such that for any constant, $c_{5}$, which is a function of the sampling design constants $\gamma$ and $c_{3}$,

  \begin{align}
       &\mbox{Pr}_{\theta_{0},P_{\nu}}\left(W_{2}\left\{ \overline{\Pi}^{{\pi}}(\cdot \mid \{y_i : \delta_{\nu i} = 1, i = 1, \ldots, N_{\nu}\}),\delta_{\theta_{0}}\left(\cdot\right) \right\} > \sqrt{c_{5}}\epsilon_{M_{\nu}}\right)\\
       &\leq P_{\theta_{0},P_{\nu}}\left(\frac{1}{K}\mathop{\sum}_{j = 1}^{K}W_{2} \left\{ \Pi^{\pi}_{j}\left(\cdot  \mid y_{[j]}\delta_{\nu[j]}\right),\delta_{\theta_{0}}\left(\cdot\right) \right\} > \sqrt{c_{5}}\epsilon_{M_{\nu}}\right)\\
       &\overset{(i)}{\leq}\frac{1}{c_{5}\epsilon_{M_{\nu}}^{2}}E_{\theta_{0},P_{v}}\left[\left(\frac{1}{K}\mathop{\sum}_{j = 1}^{K}W_{2}\left(\Pi^{\pi}_{j}\left(\cdot  \mid y_{[j]}\delta_{\nu[j]} \right),\delta_{\theta_{0}}\left(\cdot\right)\right)\right)^{2}\right]\\
        &\overset{(ii)}{\leq}\frac{1}{c_{5}\epsilon_{M_{\nu}}^{2}} E_{\theta_{0},P_{v}} \left[\left( \frac{1}{K} \mathop{\sum}_{j = 1}^{K} W^2_{2}\left\{ \Pi^{\pi}_{j}\left(\cdot  \mid y_{[j]}\delta_{\nu[j]} \right),\delta_{\theta_{0}}\left(\cdot\right)\right\}  \right)\right] \\
       &\overset{(iii)}{\leq}\frac{1}{c_{5}\epsilon_{M_{\nu}}^{2}} \left( \frac{1}{K} \mathop{\sum}_{j = 1}^{K} E_{\theta_{0},P_{v}} \left[ W^2_{2}\left\{ \Pi^{\pi}_{j}\left(\cdot  \mid y_{[j]}\delta_{\nu[j]} \right),\delta_{\theta_{0}}\left(\cdot\right)\right\} \right] \right) \\
       &\overset{(iv)}{\leq}\frac{1}{c_{5}\epsilon_{M_{\nu}}^{2}} \left\{ c_{1}^{2}\epsilon_{M_{\nu}}^{2} + \frac{B_{0}}{r_{2}M_{\nu}\epsilon_{M_{\nu}}^{2}} + 5B_{0}\exp\left(-r_{1}c_{4} N_{\nu}\epsilon_{M_{\nu}}^{2}\right) \right\} \leq\frac{3B_{0}c_{1}\epsilon_{M_{\nu}}^{2}}{c_{5}\epsilon_{M_{\nu}}^{2}} = \frac{3B_{0}c_{1}}{c_{5}}
  \end{align}
  for $\nu$ sufficiently large, where $(i)$ follows from Markov's inequality, $(ii)$ follows from Jensen's inequality applied to simple averages, $(iii)$ follows from the linearity of expectation and $(iv)$ follows from Theorem \ref{main_subset}.

 \end{proof}
 
\section{Enabling Lemmas} \label{AppEnabling}
We now further generalize the two key lemmas constructed for $inid$ data from \citet{2015arXiv150805880S} to dependent data acquired under informative sampling: 1. Lemma $B.5$, a concatenation inequality; 2. Lemma $B.6$, which bounds the normalization constant of the pseudo posterior distribution.  Our generalized Lemmas ~\ref{B5} and \ref{B6}, together, enable the main two results on consistency of the subset pseudo posterior distributions and of the barycenter distribution composed from those subset posteriors. In all cases, size indices, $(m,n)$ in \citet{2015arXiv150805880S} are replaced with $(M_{\nu},N_{\nu})$ to refer to the finite population, and the associated, $(m_{\nu},n_{\nu})$, the observed sample taken from the sub-populations.

To prove Lemma~\ref{B5}, we extend \citet{2015arXiv150805880S} Lemmas $B.1-B.4$ to unequally weighted likelihood contributions by random weights (with respect to $P_{\nu}$) in the sequel.  Our new Lemma~\ref{B5} is the desired extension of the concentration inequality to dependent, informative sampling governed by $P_{\nu}$. Let $Z_{ji}(\theta) := \log \left( p_{\theta ji}/p_{\theta_{0} ji}\right)$, denote the logarithm of the likelihood ratio and $\tilde{Z}_{ji}(\theta) = \max(Z_{ji},-\tau)$ to denote the lower truncated version, as outlined in \citet{wong1995}, with $\tau>0$ to be selected in the sequel.   Construct a sampling weighted log-likelihood with, $Z_{ji}^{\pi}(\theta) = \delta_{\nu ji}/\pi_{\nu ji} \times Z_{ji}(\theta)$, its associated lower truncated version, $\tilde{Z}_{ji}^{\pi}(\theta) = \max(\delta_{\nu ji}/\pi_{\nu ji}\times Z_{ji},-\tau)$ and its vectorization, $\tilde{Z}_{j}^{\pi}(\theta) = \left(\tilde{Z}_{j1}^{\pi},\ldots,\tilde{Z}_{jM_{\nu}}^{\pi}\right)$.

\begin{lemma}\label{B1}
(Revised \citet{2015arXiv150805880S} Lemma B.1 under informative sampling).
Let $c_{1\tau} = 2e^{-\frac{\tau}{2}}/(1-e^{-\frac{\tau}{2}})^{2}$.  Then, for any $\theta\in\Theta$,
\begin{equation}
\frac{1}{M_{\nu}}\mathop{\sum}_{i=1}^{M_{\nu}}\mathbb{E}_{P_{\theta_{0}},P_{\nu}}~\tilde{Z}_{ji}^{\pi}(\theta) \leq -\left(1-c_{1\tau}\right)h^{2}_{M_{\nu}}\left(\theta,\theta_{0}\right)
\end{equation}
\end{lemma}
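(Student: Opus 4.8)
The plan is to integrate out the informative sampling design first and thereby reduce the weighted, truncated log-ratio to a single-coordinate Hellinger bound of Wong--Shen type. I would use the tower property, $\mathbb{E}_{P_{\theta_{0}},P_{\nu}}[\,\cdot\,]=\mathbb{E}_{P_{\theta_{0}}}\mathbb{E}_{P_{\nu}}[\,\cdot\mid Y]$, taking the design expectation conditional on the realized population $Y$. Because $Z_{ji}(\theta)=\log(p_{\theta ji}/p_{\theta_{0} ji})$ and $\pi_{\nu ji}$ are functions of the population alone and $\tau>0$ kills the contribution when $\delta_{\nu ji}=0$, the weighted truncated variable obeys the pointwise identity $\tilde{Z}_{ji}^{\pi}(\theta)=\delta_{\nu ji}\,\max\!\left(Z_{ji}(\theta)/\pi_{\nu ji},-\tau\right)$. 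Using design unbiasedness, $\mathbb{E}_{P_{\nu}}[\delta_{\nu ji}\mid Y]=\pi_{\nu ji}$, and pulling the $\sigma(Y)$-measurable factor out of the conditional expectation, I would obtain
\[ \mathbb{E}_{P_{\nu}}\!\left[\tilde{Z}_{ji}^{\pi}(\theta)\mid Y\right]=\pi_{\nu ji}\,\max\!\left(Z_{ji}(\theta)/\pi_{\nu ji},-\tau\right)=\max\!\left(Z_{ji}(\theta),-\tau\pi_{\nu ji}\right), \]
which trades the random inverse-probability weight for a random truncation level $-\tau\pi_{\nu ji}$ acting on the \emph{unweighted} log-ratio.

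Next I would remove the dependence of this truncation level on the design using Assumption~\nameref{bounded}. Since $1/\pi_{\nu ji}\le\gamma$ forces $\pi_{\nu ji}\ge\gamma^{-1}$, and the map $p\mapsto\max(z,-\tau p)$ is nonincreasing in $p$, the conditional expectation is bounded above, pointwise in $Y$, by $\max(Z_{ji}(\theta),-\tau/\gamma)$, which is a function of $Y_{ji}$ alone. Taking $\mathbb{E}_{P_{\theta_{0}}}$ then factorizes across units and reduces to integrating each coordinate against its own marginal $P_{\theta_{0} ji}$, so the measurability entanglement created by the informative design disappears at this step.

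Finally I would apply the classical per-coordinate argument with effective threshold $a=\tau/\gamma$. Writing $g_{i}=\sqrt{p_{\theta ji}/p_{\theta_{0} ji}}$ and using $\tfrac{1}{2}\max(Z_{ji},-a)=\log\max(g_{i},e^{-a/2})$, Jensen's inequality gives $\tfrac{1}{2}\,\mathbb{E}_{P_{\theta_{0} ji}}\max(Z_{ji},-a)\le\log\mathbb{E}_{P_{\theta_{0} ji}}\max(g_{i},e^{-a/2})$. I would then split $\max(g_{i},e^{-a/2})=g_{i}+(e^{-a/2}-g_{i})_{+}$, use $\mathbb{E}_{P_{\theta_{0} ji}}g_{i}=1-\tfrac{1}{2}h^{2}(p_{\theta ji},p_{\theta_{0} ji})$, bound the overshoot $(e^{-a/2}-g_{i})_{+}$ by a constant multiple of $(1-g_{i})^{2}$ on the region $g_{i}<e^{-a/2}$ with $\mathbb{E}_{P_{\theta_{0} ji}}(1-g_{i})^{2}=h^{2}(p_{\theta ji},p_{\theta_{0} ji})$, and apply $\log(1-x)\le-x$. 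This yields the per-coordinate bound $\mathbb{E}_{P_{\theta_{0} ji}}\max(Z_{ji},-a)\le-(1-c_{1\tau})h^{2}(p_{\theta ji},p_{\theta_{0} ji})$, with the constant of exactly the stated functional form $2e^{-a/2}/(1-e^{-a/2})^{2}$ evaluated at the weight-adjusted threshold $a=\tau/\gamma$; averaging over $i=1,\ldots,M_{\nu}$ delivers the claim.

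The main obstacle is the coupling between the random, informative weight $1/\pi_{\nu ji}$ and the lower truncation: a naive comparison of $\tilde{Z}_{ji}^{\pi}$ with the unweighted $\tilde{Z}_{ji}$ fails, because $-\tau\pi_{\nu ji}\ge-\tau$ pushes the bound in the wrong direction. The resolution is to perform the design averaging \emph{before} any comparison, which converts the weight into the truncation level, and only then to invoke the uniform bound $\pi_{\nu ji}\ge\gamma^{-1}$ of Assumption~\nameref{bounded} to obtain a deterministic threshold. The delicate bookkeeping is that this replaces $\tau$ by $\tau/\gamma$ in the Wong--Shen constant, so I would track the truncation parameter consistently through the rescaling and state $c_{1\tau}$ at its effective value; once that is pinned down, the remainder of the $inid$ argument of \citet{2015arXiv150805880S} transfers essentially verbatim.
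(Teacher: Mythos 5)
Your proposal follows the same basic route as the paper's proof (tower property over $P_{\nu}$ given the realized population, then the Wong--Shen per-coordinate Hellinger bound), but at the crucial step you are more careful than the paper, and the difference matters. The paper asserts the identity $\mathbb{E}_{P_{\nu}}\left[\max\left(\tfrac{\delta_{\nu ji}}{\pi_{\nu ji}}Z_{ji},-\tau\right)\,\middle\vert\,\mathcal{A}_{\nu j}\right]=\max\left(Z_{ji},-\tau\right)$ on the grounds that $\mathbb{E}_{P_{\nu}}(\delta_{\nu ji}\mid\mathcal{A}_{\nu j})=\pi_{\nu ji}$ ``cancels the denominator.'' That interchange of expectation and maximum is not valid: since $\tau>0$ the correct computation is exactly the one you give, $\pi_{\nu ji}\max\left(Z_{ji}/\pi_{\nu ji},-\tau\right)+(1-\pi_{\nu ji})\cdot 0=\max\left(Z_{ji},-\tau\pi_{\nu ji}\right)$, and because $-\tau\pi_{\nu ji}\geq-\tau$ this quantity dominates $\max(Z_{ji},-\tau)$, so the paper's step bounds the expectation from \emph{below} rather than above (consistent with Jensen applied to the convex map $x\mapsto\max(x,-\tau)$). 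Your repair --- monotonicity of $p\mapsto\max(z,-\tau p)$ plus the uniform bound $\pi_{\nu ji}\geq\gamma^{-1}$ from (A1) to reach the deterministic threshold $\tau/\gamma$ --- is the right fix, and your subsequent Wong--Shen computation (Jensen on $\log\max(g_i,e^{-a/2})$, the split $g_i+(e^{-a/2}-g_i)_{+}$, and $\mathbb{E}(1-g_i)^2=h^2$) is sound. The one caveat you already flag deserves emphasis: your argument proves the inequality with the constant $c_{1,\tau/\gamma}=2e^{-\tau/(2\gamma)}/(1-e^{-\tau/(2\gamma)})^{2}$, which exceeds the stated $c_{1\tau}$ for $\gamma>1$, so you establish a weaker constant than the lemma literally claims; since $\tau$ is a free tuning parameter fixed only later (in the proof of the concentration inequality, where one needs $1-c_{1\tau}-8\sqrt{2c_{2\tau}\gamma^{2}}\,c_{4\tau\gamma}>0$), this is repairable by enlarging $\tau$ by a factor of $\gamma$, but that substitution must then be tracked through $c_{2\tau}$ and the $e^{\tau}$ factors in the bracketing-entropy bounds, which neither you nor the paper carries out.
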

\begin{proof}
\begin{align*}
\mathbb{E}_{P_{\theta_{0}},P_{\nu}}\left(\tilde{Z}_{ji}^{\pi}\right) &= \mathbb{E}_{P_{\theta_{0}}}\left[\mathbb{E}_{P_{\nu}}\left(\max\left(\frac{\delta_{\nu ji}}{\pi_{\nu ji}}Z_{ji},-\tau\right)\bigg\vert \mathcal{A}_{\nu j}\right)\right]\\
&=
\mathbb{E}_{P_{\theta_{0}}}\left[\max\left(Z_{ji},-\tau\right)\right]\leq (1-c_{1\tau})h^{2}\left(p_{\theta_{1 ji}},p_{\theta_{2 ji}}\right),
\end{align*}
where $\mathcal{A}_{\nu j}$ is the sigma field of information in the subset population $j$ indexed by $\nu$. We use $\mathbb{E}_{P_{\nu}}(\delta_{\nu ji}\big\vert\mathcal{A}_{\nu j}) = \pi_{\nu ji}$, which cancels the denominator.
\end{proof}

\begin{lemma}\label{B2}
(Revised \citet{2015arXiv150805880S} Lemma B.2 under informative sampling).
Let $c_{2\tau} = \left(e^{-\frac{\tau}{2}}-1-\frac{\tau}{2}\right)/(1-e^{-\frac{\tau}{2}})^{2}$.  For any $t>0$, integer $\ell \geq 2$ and any $\theta\in\Theta$ restricted to $h_{M_{\nu}}\left(\theta,\theta_{0}\right)\leq r$,
\begin{equation}
\frac{1}{M_{\nu}}\mathop{\sum}_{i=1}^{M_{\nu}}\mathbb{E}_{P_{\theta_{0}},P_{\nu}}\bigg\lvert
\frac{\tilde{Z}_{ji}^{\pi}(\theta)}{2\sqrt{2c_{2\tau}\gamma}r}\bigg\rvert^{\ell} \leq \frac{\ell !}{2}
\left(\frac{1}{\sqrt{2c_{2\tau}\gamma}r}\right)^{\ell-2}
\end{equation}
\end{lemma}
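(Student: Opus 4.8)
The plan is to mirror the proof of Lemma~\ref{B1}: first integrate out the sampling design by conditioning on the subset-population $\sigma$-field $\mathcal{A}_{\nu j}$, then reduce to the non-sampling (\emph{inid}) moment inequality already established as Lemma B.2 in \citet{2015arXiv150805880S}, and finally average over the $M_{\nu}$ units using the constraint $h_{M_{\nu}}(\theta,\theta_{0})\le r$.

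First I would write $\tilde{Z}_{ji}^{\pi} = \delta_{\nu ji}\max(Z_{ji}/\pi_{\nu ji},-\tau)$, using that $\delta_{\nu ji}\in\{0,1\}$ forces the truncated quantity to vanish on $\{\delta_{\nu ji}=0\}$, so that $|\tilde{Z}_{ji}^{\pi}|^{\ell} = \delta_{\nu ji}\,|\max(Z_{ji}/\pi_{\nu ji},-\tau)|^{\ell}$. Conditioning on $\mathcal{A}_{\nu j}$ (under which the population values $Y_{ji}$, hence $Z_{ji}$ and $\pi_{\nu ji}$, are fixed) and using $\mathbb{E}_{P_{\nu}}(\delta_{\nu ji}\mid\mathcal{A}_{\nu j})=\pi_{\nu ji}$ as in Lemma~\ref{B1} gives $\mathbb{E}_{P_{\nu}}(|\tilde{Z}_{ji}^{\pi}|^{\ell}\mid\mathcal{A}_{\nu j})=\pi_{\nu ji}\,|\max(Z_{ji}/\pi_{\nu ji},-\tau)|^{\ell}$, collapsing the design randomness into a single population expectation under $P_{\theta_{0}}$.

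Next I would split $\max(Z_{ji}/\pi_{\nu ji},-\tau)$ into its positive and negative parts. The negative part is bounded in magnitude by $\tau$ by construction of the truncation, so its $\ell$-th moment is dominated by the second moment times the scale; the positive part is controlled exactly as in \citet{wong1995} through the exponential-moment identity $\mathbb{E}_{P_{\theta_{0}}}e^{Z_{ji}/2}=1-\frac{1}{2}h^{2}(p_{\theta ji},p_{\theta_{0} ji})$, which is how the variance proxy proportional to $c_{2\tau}h^{2}$ enters. Assumption~\nameref{bounded}, $1/\pi_{\nu ji}\le\gamma$, is then used to pass from the weighted contributions back to the template moment inequality of \citet{2015arXiv150805880S}, with its scale inflated to $\sqrt{2c_{2\tau}\gamma}\,r$. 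Averaging the per-unit bounds over $i=1,\ldots,M_{\nu}$ and invoking $\frac{1}{M_{\nu}}\sum_{i}h^{2}(p_{\theta ji},p_{\theta_{0} ji})=h^{2}_{M_{\nu}}(\theta,\theta_{0})\le r^{2}$ produces the stated normalized factorial bound.

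The delicate step, which I expect to be the main obstacle, is controlling the exact power of $\gamma$ contributed by the sampling weight. A naive expansion of $(\delta_{\nu ji}/\pi_{\nu ji})^{\ell}$ after conditioning produces a factor $\pi_{\nu ji}^{-(\ell-1)}\le\gamma^{\ell-1}$, which inflates \emph{both} the variance and the effective scale and is far too lossy to match the claim, whose right-hand side carries $\gamma$ only to the first power (through the variance term) with an $O(1)$ higher-moment scale. The estimate must therefore retain the Wong--Shen exponential-moment structure for the weighted truncated variable directly, letting $\gamma$ enter once through the second moment while keeping the Bernstein scale free of $\gamma$, rather than bounding moments term by term. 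Some additional care is required because the truncation is applied after weighting, so the cancellation used in Lemma~\ref{B1} is exact only outside the deep-truncation region and must be handled in the same approximate fashion there.
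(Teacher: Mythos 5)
Your plan coincides in outline with the paper's proof --- bound the sampling-weighted truncated variable by the unweighted one using (A1), feed it into the Wong--Shen exponential-moment inequality $\mathbb{E}_{P_{\theta_{0}}}\left[e^{\lvert\tilde{Z}_{ji}\rvert/2}-1-\lvert\tilde{Z}_{ji}\rvert/2\right]\leq c_{2\tau}h^{2}\left(p_{\theta ji},p_{\theta_{0}ji}\right)$, extract the $\ell$-th moment via $e^{x}-1-x\geq x^{\ell}/\ell!$, and average using $h_{M_{\nu}}^{2}\left(\theta,\theta_{0}\right)\leq r^{2}$ --- and your mechanism for integrating out the design is actually sharper than the paper's. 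Conditioning on $\mathcal{A}_{\nu j}$ gives $\mathbb{E}_{P_{\nu}}\left(\lvert\tilde{Z}_{ji}^{\pi}\rvert^{\ell}\mid\mathcal{A}_{\nu j}\right)=\pi_{\nu ji}\lvert\max\left(Z_{ji}/\pi_{\nu ji},-\tau\right)\rvert^{\ell}\leq\pi_{\nu ji}^{-(\ell-1)}\lvert\tilde{Z}_{ji}\rvert^{\ell}\leq\gamma^{\ell-1}\lvert\tilde{Z}_{ji}\rvert^{\ell}$, which saves one power of $\gamma$ over the paper's pathwise bound $\lvert\tilde{Z}_{ji}^{\pi}\rvert\leq\gamma\lvert\tilde{Z}_{ji}\rvert$ (which costs $\gamma^{\ell}$); for $\ell=2$ your route yields the stated inequality exactly, which the paper's route does not.

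The gap is that for $\ell\geq 3$ you have correctly located the obstacle but not removed it, and the escape route you gesture at does not exist in the form you describe. Your argument delivers $\frac{1}{M_{\nu}}\sum_{i}\mathbb{E}\lvert\tilde{Z}_{ji}^{\pi}\rvert^{\ell}\leq 2^{\ell}\ell!\,c_{2\tau}\gamma^{\ell-1}r^{2}$, while the stated bound requires $2^{\ell}\ell!\,c_{2\tau}\gamma r^{2}$; the shortfall is exactly the factor $\gamma^{\ell-2}$ you flag. Your proposed fix --- ``retain the Wong--Shen exponential-moment structure for the weighted truncated variable directly'' --- is precisely what the paper does (it applies the monotonicity of $e^{t/2}-1-t/2$ to $\lvert\tilde{Z}_{ji}^{\pi}\rvert/(2\gamma)\leq\lvert\tilde{Z}_{ji}\rvert/2$), and it makes matters worse, not better: normalizing by $2\gamma$ inside the exponential forces the extracted $\ell$-th moment to carry $(2\gamma)^{\ell}$. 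The paper's own proof in fact stops at the bound $2^{\ell}\ell!\,c_{2\tau}\gamma^{\ell}r$ and asserts that ``rearranging terms produces the result,'' leaving a factor $\gamma^{\ell-1}$ (and an $r$ versus $r^{2}$) unaccounted for. The honest conclusion from either argument is the Bernstein condition with scale $C=\sqrt{\gamma}/\left(\sqrt{2c_{2\tau}}\,r\right)=\gamma/\left(\sqrt{2c_{2\tau}\gamma}\,r\right)$ rather than $1/\left(\sqrt{2c_{2\tau}\gamma}\,r\right)$, i.e., a right-hand side of $\frac{\ell!}{2}\left(\gamma/\left(\sqrt{2c_{2\tau}\gamma}\,r\right)\right)^{\ell-2}$; since $\gamma$ is a fixed constant under (A1) this only rescales the constant carried into Lemma~\ref{B5}, but neither your sketch nor the paper's proof establishes the lemma in the form stated.
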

\begin{proof}
\begin{equation}
\lvert\tilde{Z}_{ji}^{\pi}\rvert = \bigg\lvert\max\left(\frac{\delta_{\nu ji}}{\pi_{\nu ji}}Z_{ji},-\tau\right)\bigg\rvert =
\frac{1}{\pi_{\nu ji}}\bigg\lvert\max\left(\delta_{\nu ji}Z_{ji},-\tau\pi_{\nu ji}\right)\bigg\rvert \mathop{\leq}^{(i)} \frac{1}{\pi_{\nu ji}}\bigg\lvert\max\left(Z_{ji},-\tau\right)\bigg\rvert \mathop{\leq}^{(ii)} \gamma\lvert \tilde{Z}_{ji} \rvert,
\end{equation}
where $(i)$ results from $\lvert\max(y,-x_{2})\rvert \leq \lvert\max(y,-x_{1})\rvert$ for $x_{1}\geq x_{2}>0$ and $(ii)$ applies assumption~\nameref{bounded}.  We next apply Lemma 5 of \citet{wong1995} to,
\begin{equation}
\mathbb{E}_{P_{\theta_{0}},P_{\nu}}\left[\exp\left(\frac{\lvert\tilde{Z}_{ji}^{\pi}\rvert}{2\gamma}-1-
\frac{\lvert\tilde{Z}_{ji}^{\pi}\rvert}{2\gamma}\right)\right] \mathop{\leq}^{(iii)} \mathbb{E}_{P_{\theta_{0}}}\left[\exp\left(\frac{\lvert\tilde{Z}_{ji}\rvert}{2}-1-
\frac{\lvert\tilde{Z}_{ji}\rvert}{2}\right)\right]  \leq c_{2\tau}h^{2}\left(p_{\theta ji},p_{\theta_{0} ji}\right),
\end{equation}
where $(iii)$ results because $\exp(t/2)-1-t/2$ is increasing for $t\geq 0$.   We next apply the identity,
$\mathbb{E}\left[\exp\left(\frac{\lvert\tilde{Z}_{ji}^{\pi}\rvert}{2\gamma}-1-
\frac{\lvert\tilde{Z}_{ji}^{\pi}\rvert}{2\gamma}\right)\right]  \geq \frac{\mathbb{E}\left(\lvert\tilde{Z}_{ji}^{\pi}\rvert^{\ell}\right)}{\ell ! \gamma^{\ell}}$, which gives us,
\begin{equation}
\frac{1}{M_{\nu}}\mathop{\sum}_{i=1}^{M_{\nu}}\mathbb{E}_{P_{\theta_{0}},P_{\nu}}\bigg\lvert
\tilde{Z}_{ji}^{\pi}(\theta)\bigg\rvert^{\ell} \leq 2^{\ell}\ell ! c_{2\tau}\gamma^{\ell} r.
\end{equation}
Rearranging terms produces the result.
\end{proof}

\begin{lemma}\label{B3}
(Revised \citet{2015arXiv150805880S} Lemma B.3 under informative sampling).
Suppose assumption~\nameref{bounded}.  Let $\mathcal{P}_{j}\left(\Theta\right) =\left\{p_{\theta j1},\ldots,p_{\theta jM_{\nu}},~\theta\in\Theta\right\}$ and
\newline
$\tilde{\mathcal{Z}}_{j}^{\pi}\left(\Theta\right) = \left\{\tilde{Z}_{j1}^{\pi}\left(\theta\right),\ldots,\tilde{Z}_{jM_{\nu}}^{\pi}\left(\theta\right),~\theta\in\Theta\right\}$. For any $u>0$,
\begin{equation}
H_{[]}\left(u,\tilde{\mathcal{Z}}_{j}^{\pi},\norm{\cdot}\right) \leq H_{[]}\left(\frac{u}{2 \sqrt{\gamma} e^{\frac{\tau}{2}}},\mathcal{P}_{j},h_{M_{\nu}}\right)
\end{equation}
\end{lemma}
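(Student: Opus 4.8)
The plan is to prove this by transferring a bracketing of the density class $\mathcal{P}_j$ directly into a bracketing of the weighted, truncated log-likelihood-ratio class $\tilde{\mathcal{Z}}_j^{\pi}$, paying a multiplicative price of $2\sqrt{\gamma}e^{\tau/2}$ in the resolution; this is the informative-sampling analogue of Lemma B.3 in \citet{2015arXiv150805880S}. First I would fix a minimal $h_{M_{\nu}}$-bracketing of $\mathcal{P}_j$ at radius $u/(2\sqrt{\gamma}e^{\tau/2})$, i.e.\ a collection of density-vector pairs $\{(p^{L}_{k},p^{U}_{k})\}$ with $p^{L}_{k}\le p_{\theta}\le p^{U}_{k}$ coordinatewise for every $\theta\in\Theta$ and $h_{M_{\nu}}(p^{L}_{k},p^{U}_{k})\le u/(2\sqrt{\gamma}e^{\tau/2})$, whose cardinality realizes $H_{[]}(u/(2\sqrt{\gamma}e^{\tau/2}),\mathcal{P}_{j},h_{M_{\nu}})$. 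The goal is then to exhibit, for each density bracket, a single $\norm{\cdot}$-bracket of width at most $u$ for $\tilde{\mathcal{Z}}_j^{\pi}$, so that the two bracketing numbers are ordered and the claimed entropy inequality follows after taking logarithms.

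Second I would carry the density brackets through the relevant maps using monotonicity. For each coordinate $i$, the sampling weight $\delta_{\nu ji}/\pi_{\nu ji}$ is nonnegative, $x\mapsto\max(x,-\tau)$ is nondecreasing, and $p\mapsto\log(p/p_{\theta_{0}ji})$ is increasing; hence $p\mapsto\max\{(\delta_{\nu ji}/\pi_{\nu ji})\log(p/p_{\theta_{0}ji}),-\tau\}$ is nondecreasing. Defining the endpoints $\tilde{Z}^{\pi,L}_{ji}=\max\{(\delta_{\nu ji}/\pi_{\nu ji})\log(p^{L}_{ji}/p_{\theta_{0}ji}),-\tau\}$ and $\tilde{Z}^{\pi,U}_{ji}$ analogously with $p^{U}$, the pair sandwiches $\tilde{Z}^{\pi}_{ji}(\theta)$ whenever $p^{L}_{ji}\le p_{\theta ji}\le p^{U}_{ji}$, so each density bracket induces a valid bracket for $\tilde{\mathcal{Z}}_j^{\pi}$.

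Third, and this is the heart of the argument, I would bound the induced bracket width. The $1$-Lipschitz property of $x\mapsto\max(x,-\tau)$ together with nonnegativity of the weight gives the pointwise bound $\lvert\tilde{Z}^{\pi,U}_{ji}-\tilde{Z}^{\pi,L}_{ji}\rvert\le(\delta_{\nu ji}/\pi_{\nu ji})\lvert\log p^{U}_{ji}-\log p^{L}_{ji}\rvert$. The lower truncation at $-\tau$ confines attention to the region where $p_{\theta ji}/p_{\theta_{0}ji}\ge e^{-\tau}$, and on this region the standard estimate from \citet{wong1995} converts a truncated log-ratio difference into a Hellinger contribution at the cost of a factor $2e^{\tau/2}$, so that coordinate $i$ is controlled by $2e^{\tau/2}\,h(p^{U}_{ji},p^{L}_{ji})$ up to the weight. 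I would then take the $P_{\nu}$-expectation inside the $\norm{\cdot}$ norm (which averages the $P_{\nu}$-expected squared coordinates) and use that $\delta_{\nu ji}$ is a Bernoulli indicator, so that $\mathbb{E}_{P_{\nu}}[(\delta_{\nu ji}/\pi_{\nu ji})^{2}]=1/\pi_{\nu ji}\le\gamma$ by assumption~\nameref{bounded}; its square root contributes only $\sqrt{\gamma}$ to the norm rather than the naive $\gamma$. Summing over coordinates with the $1/M_{\nu}$ weighting reconstitutes $h_{M_{\nu}}(p^{U},p^{L})$, yielding $\norm{\tilde{Z}^{\pi,U}_{j}-\tilde{Z}^{\pi,L}_{j}}\le 2\sqrt{\gamma}e^{\tau/2}\,h_{M_{\nu}}(p^{U},p^{L})\le u$.

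The main obstacle is precisely this third step: pinning down the two constants exactly. The $e^{\tau/2}$ must be extracted from the lower truncation through the pointwise density comparison, and the passage from the log-ratio difference to the Hellinger distance must respect the \emph{inid}, coordinatewise structure encoded in $h_{M_{\nu}}$. The point genuinely unique to the sampling setting is the appearance of $\sqrt{\gamma}$ in place of $\gamma$; it hinges on the idempotency $\delta_{\nu ji}^{2}=\delta_{\nu ji}$, so that the second moment of the inverse-probability weight is $1/\pi_{\nu ji}$ rather than $1/\pi_{\nu ji}^{2}$. This keeps the entropy inflation only of order $\sqrt{\gamma}$, consistent with the $\sqrt{\gamma}$ scaling already encountered in Lemma~\ref{B2}.
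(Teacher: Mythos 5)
Your proposal is correct and follows essentially the same route as the paper's proof: both hinge on the idempotency $\delta_{\nu ji}^{2}=\delta_{\nu ji}$ so that the second moment of the weight is $1/\pi_{\nu ji}\leq\gamma$ (yielding $\sqrt{\gamma}$ rather than $\gamma$ in the norm), followed by the Wong--Shen estimate contributing the factor $2e^{\tau/2}$ and averaging over coordinates to reconstitute $h_{M_{\nu}}$. The only differences are cosmetic: you make explicit the monotone bracket-transfer step that the paper leaves implicit in ``which implies the inequality result between bracketing entropies,'' and your one-line appeal to the $1$-Lipschitz property of $x\mapsto\max(x,-\tau)$ replaces the paper's case-by-case analysis of where the maxima are attained, reaching the same bound $\gamma\,\mathbb{E}_{P_{\theta_{0}}}\left(Z_{ji}(\theta_{1})-Z_{ji}(\theta_{2})\right)^{2}$.
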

\begin{proof}
\begin{align}
&\mathbb{E}_{P_{\theta_{0}},P_{\nu}}\left[\tilde{Z}_{ji}^{\pi}\left(\theta_{1}\right)
-\tilde{Z}_{ji}^{\pi}\left(\theta_{2}\right)\right]^{2}\nonumber\\
&= \mathbb{E}_{P_{\theta_{0}}}\left[\mathbb{E}_{P_{\nu}}\left\{\max\left(\frac{\delta_{\nu ji}}{\pi_{\nu ji}}Z_{ji}(\theta_{1}),-\tau\right) - \max\left(\frac{\delta_{\nu ji}}{\pi_{\nu ji}}Z_{ji}(\theta_{2}),-\tau\right)\right\}^{2}\bigg\vert \mathcal{A}_{\nu j}\right]. \label{successive}
\end{align}
We will address $2$ cases for the value of $\left(\mathcal{Z}_{ji}^{\pi}(\theta_{1}),\mathcal{Z}_{ji}^{\pi}(\theta_{2})\right)$ to evaluate the integral of
\eqref{successive}.

Let $\max\left(\frac{\delta_{\nu ji}}{\pi_{\nu ji}}Z_{ji}(\theta_{1}),-\tau\right) = \frac{\delta_{\nu ji}}{\pi_{\nu ji}}Z_{ji}(\theta_{1})$ and $\max\left(\frac{\delta_{\nu ji}}{\pi_{\nu ji}}Z_{ji}(\theta_{2}),-\tau\right) = \frac{\delta_{\nu ji}}{\pi_{\nu ji}}Z_{ji}(\theta_{2})$.  Then the joint expectation in \eqref{successive} is equal to,
\begin{align}
&\mathbb{E}_{P_{\theta_{0}}}\left[\mathbb{E}_{P_{\nu}}\left\{\frac{\delta_{\nu ji}^{2}}{\pi_{\nu ji}^{2}}Z_{ji}(\theta_{1})^{2}-2\frac{\delta_{\nu ji}^{2}}{\pi_{\nu ji}^{2}}Z_{ji}(\theta_{1})Z_{ji}(\theta_{2}) + \frac{\delta_{\nu ji}^{2}}{\pi_{\nu ji}^{2}}Z_{ji}(\theta_{2})^{2}\bigg\vert\mathcal{A}_{\nu j}\right\}\right]\nonumber\\
&= \mathbb{E}_{P_{\theta_{0}}}\left[\frac{1}{\pi_{\nu ji}}\left({Z}_{ji}(\theta_{1}) - {Z}_{ji}(\theta_{2})\right)^{2}\right]\nonumber\\
&\leq \gamma\mathbb{E}_{P_{\theta_{0}}}\left({Z}_{ji}(\theta_{1}) - {Z}_{ji}(\theta_{2})\right)^{2},\label{gamresult}
\end{align}
where $\mathop{\mathbb{E}}_{P_{\nu}}\left(\delta_{\nu ji}^{2}\big\vert \mathcal{A}_{\nu j}\right) = \mathop{\mathbb{E}}_{P_{\nu}}\left(\delta_{\nu ji}\big\vert \mathcal{A}_{\nu j}\right) = \pi_{\nu ji}$.  The bound in \eqref{gamresult} results from assumption~\nameref{bounded}.

Next, we let $\max\left(\frac{\delta_{\nu ji}}{\pi_{\nu ji}}Z_{ji}(\theta_{1}),-\tau\right) = -\tau$, the lower-truncated level, and $\max\left(\frac{\delta_{\nu ji}}{\pi_{\nu ji}}Z_{ji}(\theta_{2}),-\tau\right) = \frac{\delta_{\nu ji}}{\pi_{\nu ji}}Z_{ji}(\theta_{2})$, as above.
Then the joint expectation in \eqref{successive} is equal to,
\begin{align}
&\mathbb{E}_{P_{\theta_{0}}}\left[\mathbb{E}_{P_{\nu}}\left\{\tau^{2} + 2\frac{\delta_{\nu ji}}{\pi_{\nu ji}}Z_{ji}(\theta_{2})\tau + \frac{\delta_{\nu ji}^{2}}{\pi_{\nu ji}^{2}}Z_{ji}(\theta_{2})^{2}\bigg\vert\mathcal{A}_{\nu j}\right\}\right]\nonumber\\
&= \mathbb{E}_{P_{\theta_{0}}}\left[\tau^{2} + 2Z_{ji}(\theta_{2})\tau + \frac{1}{\pi_{\nu ji}}Z_{ji}(\theta_{2})^{2}\right]\nonumber\\
&\leq \mathbb{E}_{P_{\theta_{0}}}\left[\tau^{2} + 2Z_{ji}(\theta_{2})\tau + \gamma Z_{ji}(\theta_{2})^{2}\right]\label{termresult}\\
&\leq \mathbb{E}_{P_{\theta_{0}}}\left[\gamma\tau^{2} + 2\gamma Z_{ji}(\theta_{2})\tau + \gamma Z_{ji}(\theta_{2})^{2}\right]\label{uptermresult}\\
&\leq \gamma\mathbb{E}_{P_{\theta_{0}}}\left({Z}_{ji}(\theta_{1}) - {Z}_{ji}(\theta_{2})\right)^{2}, \label{gam2result}
\end{align}
where we achieve \eqref{uptermresult} by noting that $\mathbb{E}_{P_{\theta_{0}}}\left[\gamma Z_{ji}(\theta_{2})^{2}\right] > 0$ in \eqref{termresult}, so that  $\mathbb{E}_{P_{\theta_{0}}}\left[\tau^{2} + 2Z_{ji}(\theta_{2})\tau\right] > 0$ since \eqref{termresult} is greater than $0$, which produces the inequality since
$\gamma \geq 1$.  We achieve \eqref{gam2result} by applying Lemma B.3 of \citet{wong1995} (after factoring out the $\gamma$).  By symmetry, switching the values for the maxima of the two expressions will produce the same result.   Finally, we note that if both sampling weighted random variables are truncated at $-\tau$, then the joint expectation is exactly equal to $0$ and may, therefore, by bounded by \eqref{gam2result}.

We next apply Lemma 3 of \citet{wong1995},
\begin{equation*}
\mathbb{E}_{P_{\theta_{0}},P_{\nu}}\left[\tilde{Z}_{ji}^{\pi}\left(\theta_{1}\right)
-\tilde{Z}_{ji}^{\pi}\left(\theta_{2}\right)\right]^{2} \leq \gamma\mathbb{E}_{P_{\theta_{0}}}\left({Z}_{ji}(\theta_{1}) - {Z}_{ji}(\theta_{2})\right)^{2} \leq 4\gamma e^{\tau}h^{2}\left(p_{\theta_{1} ji},p_{\theta_{2} ji}\right).
\end{equation*}

Averaging over
the subset $j$ population units, $i = 1,\ldots,M_{\nu}$ gives:
\begin{equation}
\left[\frac{1}{M_{\nu}}\mathop{\sum}_{i=1}^{M_{\nu}}\mathbb{E}_{P_{\theta_{0}},P_{\nu}}\left(
\tilde{Z}_{ji}^{\pi}(\theta_{1})-\tilde{Z}_{ji}^{\pi}(\theta_{2})\right)^{2} \right]^{\frac{1}{2}} \leq
\norm{\tilde{Z}_{ji}^{\pi}(\theta_{1})-\tilde{Z}_{ji}^{\pi}(\theta_{2})} \leq 2 \sqrt{\gamma} e^{\frac{\tau}{2}}h_{M_{\nu}}\left(\theta_{1},\theta_{2}\right),
\end{equation}
which implies the inequality result between bracketing entropies.
\end{proof}

\begin{lemma}\label{vandergeer}
(Special case of \citet{GeeLed13} Theorem 8) Let $j\in \{1,\ldots,K\}$ be fixed. Suppose assumption~\nameref{bounded} holds in the construction of a class of functions,
\newline
$\displaystyle\mathcal{F}^{\pi}_{j}=\left\{\mbf{f}^{\pi} =(\frac{\delta_{\nu j1}}{\pi_{\nu j1}} f_1(y_{j1}) ,\ldots,\frac{\delta_{\nu jM_{\nu}}}{\pi_{\nu jM_{\nu}}} f_{M_{\nu}}(y_{jM_{\nu}}))^T, \mathbf{y}_{j}=(y_{j1},\ldots,y_{jM_{\nu}})\in \otimes_{i=1}^{M_{\nu}} \mathcal{Y}_{ji}\right.$
\newline
$\bigg.\bm{\delta}_{\nu j} = \left(\delta_{\nu j1},\ldots,\delta_{\nu j M_{\nu}}\right)\in\{0,1\}^{M_{\nu}},~\bm{\pi}_{\nu j} = \left(\pi_{\nu j1},\ldots,\pi_{\nu jM_{\nu}}\right)\in (0,1]^{M_{\nu}} \bigg\}$ that satisfies\\
\noindent (i) $\mathop{\sup}_{\mbf{f}\in \mathcal{F}^{\pi}_{j} } \|\mbf{f}^{\pi}\| \leq 1$;\\
\noindent (ii) For any integer $\ell\geq 2$, $\mathop{\sup}_{\mbf{f}^{\pi}\in\mathcal{F}^{\pi}_{j}} |\mbf{f}^{\pi}|_{\ell}^{\ell} \leq \ell! C^{\ell-2}/2$, for some constant $C>0$;\\
Then for any $t>0$,
\begin{align*}
  \mbox{Pr}_{P_{\theta_{0}},P_{\nu}} \left(\sup_{\mbf{f}\in\mathcal{F}^{\pi}_{j}}\frac{1}{\sqrt{M_{\nu}}}\sum_{i=1}^{M_{\nu}} \left[\frac{\delta_{ji}}{\pi_{ji}} f_{i}(Y_{ji}) - E_{P_{\theta_{0}},P_{\nu}} \left\{ \frac{\delta_{ji}}{\pi_{ji}} f_{i}(Y_{ji}) \right\}\right] \geq \min_{S\in \mathbb{N}}R^{\pi}_S + \frac{36C(1+t)}{\sqrt{M_{\nu}}} + 24\sqrt{6t}\right) \leq 2e^{-t},
\end{align*}
where
\begin{align*}
R^{\pi}_S \equiv 2^{-S}\sqrt{M_{\nu}} + 14\sqrt{6} \sum_{s=0}^S 2^{-s} \sqrt{H_{[]}\left(2^{-s},\mathcal{F}^{\pi}_j,\|\cdot\|\right)} + \frac{36 C H_{[]}\left(1,\mathcal{F}^{\pi}_j,\|\cdot\|\right)}{\sqrt{M_{\nu}}}.
\end{align*}
We will refer to (i) and (ii) as the \textbf{Bernstein conditions}.
\end{lemma}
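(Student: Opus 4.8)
The plan is to obtain Lemma~\ref{vandergeer} by specializing Theorem~8 of \citet{GeeLed13} to the weighted function class $\mathcal{F}^{\pi}_{j}$ and then checking that its abstract hypotheses reduce exactly to the stated \textbf{Bernstein conditions} (i)--(ii) together with the entropy term $R^{\pi}_{S}$. Writing $g_{ji} = \frac{\delta_{\nu ji}}{\pi_{\nu ji}}f_{i}(Y_{ji})$, the object to be controlled is the supremum over $\mathbf{f}\in\mathcal{F}^{\pi}_{j}$ of the centered, normalized sum $\frac{1}{\sqrt{M_{\nu}}}\sum_{i=1}^{M_{\nu}}\{g_{ji} - E_{P_{\theta_{0}},P_{\nu}}g_{ji}\}$, which is precisely the empirical process to which \citet{GeeLed13} Theorem~8 applies (note $E_{P_{\theta_{0}},P_{\nu}}g_{ji}=E f_{i}(Y_{ji})$ since $E_{P_{\nu}}[\delta_{\nu ji}/\pi_{\nu ji}]=1$). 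In that theorem the deviation bound is assembled from a uniform envelope bound, a Bernstein-type moment growth condition, and a bracketing-entropy chaining sum. I would match these termwise: condition (i) supplies the envelope, condition (ii) supplies the Bernstein growth with constant $C$, and the $H_{[]}(2^{-s},\mathcal{F}^{\pi}_{j},\norm{\cdot})$ appearing in $R^{\pi}_{S}$ is the same bracketing entropy that enters their chaining bound. With this dictionary, the stated inequality is a direct transcription, with $t$ the free deviation parameter and the $2e^{-t}$ tail unchanged.

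The verification of (i)--(ii) is where assumption~\nameref{bounded} enters and is essentially already in place. Because $1/\pi_{\nu ji}\le\gamma$ uniformly, each weighted summand is bounded by $\gamma$ times its unweighted counterpart, so the envelope (i) holds after the normalization built into $\mathcal{F}^{\pi}_{j}$, and the moment condition (ii) is the content of Lemma~\ref{B2}, whose constant scales with $\sqrt{\gamma}$ and thereby records the price of informative weighting. The entropy comparison needed to control $R^{\pi}_{S}$ is supplied by Lemma~\ref{B3}, which bounds $H_{[]}(\cdot,\tilde{\mathcal{Z}}^{\pi}_{j},\norm{\cdot})$ by the density-class bracketing entropy under $h_{M_{\nu}}$, again at the cost of a factor $2\sqrt{\gamma}e^{\tau/2}$. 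Thus every ingredient of the \citet{GeeLed13} bound is available with constants that depend on the design only through $\gamma$.

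The one genuinely nonroutine point, and the main obstacle, is that Theorem~8 of \citet{GeeLed13} is stated for \emph{independent} summands, whereas the inclusion indicators $\delta_{\nu ji}$ are dependent across $i$ under without-replacement sampling, so the pairs $(\delta_{\nu ji},Y_{ji})$ are not independent. I would resolve this by conditioning on the inclusion vector $\bm{\delta}_{\nu j}$: given $\bm{\delta}_{\nu j}$, the summands $\frac{\delta_{\nu ji}}{\pi_{\nu ji}}f_{i}(Y_{ji})$ are functions of the mutually independent population draws $Y_{ji}$ alone and are therefore independent across $i$, so \citet{GeeLed13} Theorem~8 applies \emph{conditionally}, with conditions (i)--(ii) transferring unchanged because the bounds from \nameref{bounded} are uniform in $\bm{\delta}_{\nu j}$. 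This controls $\sup_{\mathbf{f}}\frac{1}{\sqrt{M_{\nu}}}\sum_{i}\frac{\delta_{\nu ji}}{\pi_{\nu ji}}\{f_{i}(Y_{ji})-E f_{i}(Y_{ji})\}$, whose conditional centering $\frac{\delta_{\nu ji}}{\pi_{\nu ji}}E f_{i}(Y_{ji})$ differs from the marginal centering $E f_{i}(Y_{ji})$ by the design fluctuation $\frac{1}{\sqrt{M_{\nu}}}\sum_{i}(\frac{\delta_{\nu ji}}{\pi_{\nu ji}}-1)E f_{i}(Y_{ji})$.

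That last fluctuation is a sum in the $\delta$'s only, whose second moment is a Horvitz--Thompson variance governed by the pairwise ratios $\pi_{\nu ji\ell}/(\pi_{\nu ji}\pi_{\nu j\ell})$. Invoking the asymptotic independence condition~\nameref{independence} shows it is of lower order uniformly over the class and is absorbed into the constants, after which integrating the conditional bound over $P_{\nu}$ recovers the joint statement. The delicate bookkeeping is tracking how $\gamma$, and through the design term $c_{3}$, propagate into the constants, since these are exactly the quantities that inflate the contraction rate in Theorems~\ref{main_subset} and~\ref{main_wass}.
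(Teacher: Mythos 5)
Your overall framing---specialize Theorem~8 of \citet{GeeLed13} to the weighted class, with condition (i) as the envelope, condition (ii) as the Bernstein moment growth, and $H_{[]}$ feeding the chaining sum---matches the spirit of the paper. But the step you flag as the ``one genuinely nonroutine point'' is resolved incorrectly, and it is the crux. You propose to condition on the inclusion vector $\bm{\delta}_{\nu j}$ and assert that, given $\bm{\delta}_{\nu j}$, the summands $\frac{\delta_{\nu ji}}{\pi_{\nu ji}}f_{i}(Y_{ji})$ are independent across $i$ with $E[f_i(Y_{ji})\mid\bm{\delta}_{\nu j}]=E f_i(Y_{ji})$. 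Under an \emph{informative} design this is false: the inclusion probabilities are by construction correlated with the responses, so $P_{\nu}(\bm{\delta}_{\nu j}\mid \mathbf{y}_j)$ depends on $\mathbf{y}_j$, and conditioning on $\bm{\delta}_{\nu j}$ both tilts the marginal law of each $Y_{ji}$ (this is exactly the selection bias the weights are meant to undo) and, for without-replacement designs, couples the $Y_{ji}$ across $i$. Consequently Theorem~8 does not apply conditionally as you claim, and your decomposition of the centering discrepancy as a pure design fluctuation $\frac{1}{\sqrt{M_\nu}}\sum_i(\frac{\delta_{\nu ji}}{\pi_{\nu ji}}-1)Ef_i(Y_{ji})$ is also wrong, since $E[g_{ji}\mid\bm{\delta}_{\nu j}]\neq \frac{\delta_{\nu ji}}{\pi_{\nu ji}}Ef_i(Y_{ji})$. (The only identity that does hold is the unconditional one, via iterated expectation over $P_{\nu}$ given the population sigma-field.) The paper itself acknowledges in the proof of Lemma~\ref{B6} that this dependence cannot be conditioned away and there it retreats to Chebyshev plus assumption \nameref{independence}; no analogous decoupling is available for the uniform deviation bound you are trying to prove.

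The paper's actual proof takes a different and much more direct route: it reads Theorem~8 of \citet{GeeLed13} as a statement about a class of vector-valued functions governed by an arbitrary joint law $\mathbb{P}$ on the product space, with no independence structure imposed, and simply sets $\mathbb{P}=\Pr_{P_{\theta_0},P_{\nu}}$; assumption \nameref{bounded} ($1/\pi_{\nu ji}\le\gamma$) is invoked only to guarantee that the weighted class can be normalized to satisfy the Bernstein conditions (i)--(ii). A secondary point: Lemmas~\ref{B2} and~\ref{B3} are not part of the proof of this lemma --- conditions (i)--(ii) are \emph{hypotheses} of Lemma~\ref{vandergeer}, and their verification for the truncated log-likelihood-ratio class (via \ref{B2}/\ref{B3}) happens downstream in Lemma~\ref{B5}, so importing them here conflates the abstract deviation inequality with its later application.
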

\begin{proof}
Theorem 8 in \citet{GeeLed13} construct a space of functions,$\mathcal{F}$, governed by a distribution, $\mathbb{P}$, without specifying a model.  Our result is, therefore, constructed as a special case of Theorem 8 by setting $\mathbb{P} = \Pr_{P_{\theta_0}, P_{\nu}}$, which is the distribution that governs our space of functions, $\mathcal{F}^{\pi}_j$.  Since assumption~\nameref{bounded} requires $1/\pi_{\nu ji} \leq \gamma$, we may construct an $\mbf{f}^{\pi}$ to meet the Bernstein conditions.
\end{proof}

\begin{lemma}\label{B5}
(Revised \citet{2015arXiv150805880S} Lemma B.5 under informative sampling)
Suppose assumptions \nameref{bounded} and \nameref{size}  hold.  Then for any $\tau > 0$, there exist positive constants $q_{1}, q_{2}$, that depend on $D_{1}, D_{2}$, such that for all subsets, $Y_{[j]}$, $j=1,\ldots,K$,
\begin{equation}\label{numerresult}
\mbox{Pr}_{\theta_{0},P_{\nu}}\left(\mathop{\sup}_{h_{M_{\nu}}^{\pi}\left(\theta,\theta_{0}\right)\geq \tau}
\mathop{\prod}_{i=1}^{M_{\nu}}\frac{p_{\theta ji}^{\pi}}{p_{\theta_{0} ji}^{\pi}}\geq\exp\left(-\frac{q_{1}M_{\nu}\tau^{2}}{\gamma}\right)\right)\\
\leq 4\exp\left(-\frac{q_{2}M_{\nu}\tau^{2}}{\gamma}\right),
\end{equation}
for $\nu$ sufficiently large.
\end{lemma}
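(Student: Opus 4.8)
The plan is to adapt the Wong--Shen peeling (slicing) argument to the sampling-weighted, design-dependent empirical process, letting the upper bound on the weights from Assumption~\nameref{bounded} supply the factor $\gamma$ in the exponent. Writing $\prod_{i=1}^{M_\nu} p^\pi_{\theta ji}/p^\pi_{\theta_0 ji} = \exp\big(\sum_i Z^\pi_{ji}(\theta)\big)$ and using the lower truncation $\tilde Z^\pi_{ji} = \max(Z^\pi_{ji},-\tau)\ge Z^\pi_{ji}$, it suffices to control $\Pr_{\theta_0,P_\nu}\big(\sup_{h^\pi_{M_\nu}(\theta,\theta_0)\ge\tau}\sum_i \tilde Z^\pi_{ji}(\theta)\ge -q_1 M_\nu\tau^2/\gamma\big)$. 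The first key move is to trade the sampled, weighted distance in the supremum for the population distance that governs the drift and the entropy: since $\delta_{\nu ji}/\pi_{\nu ji}\le\gamma$ by Assumption~\nameref{bounded}, one has $h^{\pi,2}_{M_\nu}(\theta,\theta_0)\le\gamma\,h^2_{M_\nu}(\theta,\theta_0)$, so that $\{h^\pi_{M_\nu}\ge\tau\}\subseteq\{h_{M_\nu}\ge\tau/\sqrt\gamma\}$ and the supremum may be enlarged to this latter set. This is exactly where the $\gamma$ in $\exp(-q_1 M_\nu\tau^2/\gamma)$ is generated.

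First I would peel $\{h_{M_\nu}\ge\tau/\sqrt\gamma\}$ into shells $S_s=\{r_s\le h_{M_\nu}<2r_s\}$ with $r_s=2^s\tau/\sqrt\gamma$, $s=0,1,2,\dots$, and on each shell split $\sum_i\tilde Z^\pi_{ji}=\sum_i(\tilde Z^\pi_{ji}-\mathbb E\tilde Z^\pi_{ji})+\sum_i\mathbb E\tilde Z^\pi_{ji}$. The drift is controlled by Lemma~\ref{B1}, giving $\sum_i\mathbb E_{P_{\theta_0},P_\nu}\tilde Z^\pi_{ji}\le -(1-c_{1\tau})M_\nu h^2_{M_\nu}\le -(1-c_{1\tau})M_\nu r_s^2$ on $S_s$. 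For the centered fluctuation I would apply the weighted concentration inequality of Lemma~\ref{vandergeer} to the class $\{\tilde Z^\pi_{j\cdot}(\theta):\theta\in S_s\}$, after rescaling by a constant multiple of $r_s$ so that $\|\mathbf f^\pi\|\le 1$; the two Bernstein conditions required by Lemma~\ref{vandergeer} are furnished respectively by the moment bound of Lemma~\ref{B2} and the $L_2$-Lipschitz bound underlying Lemma~\ref{B3}, while the bracketing-entropy integral of Assumption~\nameref{size} (transported to the weighted process through Lemma~\ref{B3}) bounds the $R^\pi_S$ term so that the fluctuation threshold is of smaller order than the drift $M_\nu r_s^2$. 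Choosing $\tau$ large enough that $c_{1\tau}$ is bounded away from $1$ and taking the deviation level $t_s\asymp M_\nu r_s^2$ in Lemma~\ref{vandergeer}, each shell contributes $\sum_i\tilde Z^\pi_{ji}\le -q_1 M_\nu r_s^2\le -q_1 M_\nu\tau^2/\gamma$ off an event of probability at most $2\exp(-q_2 M_\nu r_s^2)$. Summing the geometric-in-$4^s$ tails over $s\ge 0$ collapses to $4\exp(-q_2 M_\nu\tau^2/\gamma)$ for $\nu$ large, which is the claim; the constants $q_1,q_2$ inherit their dependence on $D_1,D_2$ from Assumption~\nameref{size}.

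The main obstacle is the empirical-process control of the weighted log-likelihood ratios $\tilde Z^\pi_{ji}$, which are neither independent---because the without-replacement design $P_\nu$ induces dependence among sampled units---nor identically distributed, and which additionally carry the random weights $\delta_{\nu ji}/\pi_{\nu ji}$. This is precisely what Lemma~\ref{vandergeer} is built to absorb, since it is stated under the joint law $\Pr_{P_{\theta_0},P_\nu}$; the real work is verifying its hypotheses for the truncated, weighted class, i.e.\ noting that the truncation can be written in the form $\delta_{\nu ji}/\pi_{\nu ji}\times g_{ji}(Y_{ji})$ required by Lemma~\ref{vandergeer}, and that Lemmas~\ref{B2} and~\ref{B3} deliver the bounded-norm and Bernstein-moment conditions with constants depending only on $\gamma$ and $\tau$. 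Once these are in hand the peeling and the summation over shells are routine, and the only place care is needed is keeping the $\gamma$-dependence explicit throughout so that it lands cleanly in the final exponent.
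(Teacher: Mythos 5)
Your proposal follows essentially the same route as the paper's proof: lower-truncate the weighted log-likelihood ratios, pass from $h^{\pi}_{M_{\nu}}\geq\tau$ to $h_{M_{\nu}}\geq\tau/\sqrt{\gamma}$ via the bound $\delta_{\nu ji}/\pi_{\nu ji}\leq\gamma$ from \nameref{bounded}, control the drift with Lemma~\ref{B1}, verify the Bernstein conditions of Lemma~\ref{vandergeer} through Lemmas~\ref{B2} and~\ref{B3} with the entropy integral of \nameref{size}, and sum the shell-wise tails. The only substantive detail you leave implicit is the constant bookkeeping showing $\left(1-c_{1\tau}-8\sqrt{2c_{2\tau}\gamma^{2}}c_{4\tau\gamma}\right)>0$, which is where the $\gamma^{2}$ restriction on $D_{2}$ in \nameref{size} is actually consumed, but your plan correctly locates where that dependence enters.
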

\begin{proof}
The proof steps are identical to \citet{2015arXiv150805880S} Lemma B.5 and begin by constructing $R_{S}^{\pi}$ from Lemma~\ref{vandergeer} (the lower bound of the event) by defining a normalized, lower truncated log-likelihood ratio that satisfies conditions $(i)$ and $(ii)$ of that lemma where we replace $\tilde{Z}_{ji}$ used in \citet{2015arXiv150805880S} Lemma B.5 with $\tilde{Z}_{ji}^{\pi}$ (defined earlier) in the normalized class of functions,
\begin{equation*}
\hat{\mathcal{Z}}_{j}^{\pi}(r) = \left\{\frac{\tilde{Z}_{j}^{\pi}(\theta)}{2\sqrt{2c_{2\tau}\gamma}r}:\theta\in\Theta \text{ is restricted to } h_{M_{\nu}}^{\pi}\left(\theta,\theta_{0}\right)\leq r\right\},
\end{equation*}
where our revised Lemma~\ref{B2} adds a $\gamma$ in the denominator to accomplish the normalization and shows that conditions $(i)$ and $(ii)$ are satisfied with $C = 1/(\sqrt{2c_{2\tau}\gamma}r)$.

Lemma~\ref{B3} is next employed to bound the bracketing entropy terms of $R_{S}^{\pi}$ on the normalized, truncated space, $\hat{\mathcal{Z}}_{j}^{\pi}(r)$ in terms of the unnormalized, untruncated and non-sampling weighted space, $\mathcal{Z}_{j}(r) = \left(Z_{j1},\ldots,Z_{jM_{\nu}},~\theta\in\Theta \text{ is restricted to } h_{M_{\nu}}\left(\theta,\theta_{0}\right)\leq r\right)$, which updates \citet{2015arXiv150805880S} Lemma B.5 equations 30 and 31.  The integrand term for the two bracketing entropy computations updates from $\sqrt{2c_{2\tau}e^{-\tau}}r$ to $\sqrt{2c_{2\tau}\gamma^{2} e^{-\tau}}r$, where the first $\gamma$ results from undoing the normalization step (using Lemma~\ref{B2}) and the secong $\gamma$ derives from the upper bound on the bracketing Hellinger entropy of Lemma~\ref{B3}.   This result, together with our revised Lemma~\ref{B1}, implies that with probability at least $1-2 e^{-c_{3\tau}m_{\nu} r^{2}}$,
\begin{align*}
&\mathop{\sup}_{\theta\in\Theta: r\leq h_{M_{\nu}}^{\pi}\left(\theta,\theta_{0}\right)\leq 2r}\frac{1}{M_{\nu}}\mathop{\sum}_{i=1}^{M_{\nu}}\tilde{Z}_{ji}^{\pi}(\theta) \\
&\leq\mathop{\sup}_{\theta\in\Theta: \frac{r}{\sqrt{\gamma}}\leq h_{M_{\nu}}\left(\theta,\theta_{0}\right)\leq 2\frac{r}{\sqrt{\gamma}}}\frac{1}{M_{\nu}}\mathop{\sum}_{i=1}^{M_{\nu}}\tilde{Z}_{ji}^{\pi}(\theta) \\
&\leq -\left(1-c_{1\tau}-8\sqrt{2c_{2\tau}\gamma^{2}}c_{4\tau\gamma}\right)\frac{r^{2}}{\gamma} + \frac{72}{M_{\nu}},
\end{align*}
where the smaller range of distance between $\theta$ and $\theta_{0}$ in the first inequality increases the sum and we have replaced $r$ by $r/\sqrt{\gamma}$ in the last inequality.  The constant, $c_{4\tau\gamma}$, in the last inequality updates $c_{4\tau}$ by replacing $c_{2\tau} e^{-\tau}$ with $c_{2\tau} \gamma^{2} e^{-\tau}$.

The desired result is achieved if $\left(1-c_{1\tau}-8\sqrt{2c_{2\tau}\gamma^{2}}c_{4\tau\gamma}\right) > 0$,
which we proceed to demonstrate by updating selected constants from \citet{2015arXiv150805880S} Lemma B.5.  Since $c_{1\tau}$ is decreasing in $\tau$, its value in \citet{2015arXiv150805880S} Lemma B.5 is maintained with their choice of $\tau$. The
$\sqrt{2}/2^{10}$ term in $c_{4\tau}$ of \citet{2015arXiv150805880S} is updated to $\sqrt{2}/(2^{10}\gamma^{2})$ by choosing a larger $S$ through, $2^{-(S+2)} \leq \sqrt{2c_{2\tau}\gamma^{2} e^{-\tau}}r/(2^{12}\gamma^{2})$.  Assumption~\nameref{size} gives
us $D_{2}\leq D_{1}^{2}/(2^{12}\gamma^{2})$, and we update $c_{3\tau}$ to $c_{3\tau\gamma} = 1/(2^{30}\gamma)$ (where this constant may be freely chosen) such that,
\begin{equation*}
8\sqrt{2c_{2\tau}\gamma^{2}}c_{4\tau\gamma} \leq 8\sqrt{60\gamma^{2}}\left\{\left[\frac{\sqrt{2}}{2^{10}\gamma^{2}} +\frac{56\sqrt{3}}{2^{12}\gamma^{2}} + \frac{144\sqrt{2}}{16\cdot 2^{24}\gamma^{2}}\right]\sqrt{\frac{30\gamma}{2^{10}}} + \frac{36}{\sqrt{58}\cdot 2^{30}\gamma^{2}} + 24\sqrt{\frac{6}{2^{30}\gamma^{2}}}\right\} < 0.377,
\end{equation*}
since $\gamma \geq 1$.  The rest of the proof is identical to \citet{2015arXiv150805880S} Lemma B.5 after replacing
$Z_{ji}(\theta)$ with $\frac{\delta{\nu ji}}{\pi_{\nu ji}}\times Z_{ji}(\theta)$ inside the event statement and replacing $r$ with $r/\sqrt{\gamma}$.

\end{proof}

\begin{lemma}\label{B6}
  Suppose assumptions (A1), (A4), (A7), and (A8) hold. Then there exist positive constants $r_{1},r_{2}= \frac{1}{\left[c_{3} + 1 +\gamma\right]} \leq 1$ that depend on $g_{1},g_{2},\kappa,c_{\pi},c_{3},\gamma$ such that for every subset, $Y_{[j]}$ ($j=1,\ldots,K$), and for any $t \geq \epsilon^{2}_{M_{\nu}}$,
\begin{equation}\label{denomresult}
\mbox{Pr}_{\theta_{0},P_{\nu}}\left\{\mathop{\int}_{\Theta}\displaystyle\mathop{\prod}_{i=1}^{N_{\nu}}\frac{p^{\pi}_{\theta ji}}{p^{\pi}_{\theta_{0} ji}}\Pi\left(d\Theta\right)\leq \exp\left[-r_{1}N_{\nu}t\right]\right\}
\leq \frac{1}{r_{2}M_{\nu}t},
\end{equation}
for $M_{\nu}$ sufficiently large, where the above probability is taken with the respect to the population generating distribution, $P_{\theta_{0}}$, and the sampling design distribution, $P_{\nu}$, jointly.
\end{lemma}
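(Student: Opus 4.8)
The plan is to bound the normalizing constant (the denominator in \eqref{postmass}) from below by restricting the prior integral to a Kullback--Leibler neighborhood, applying Jensen's inequality to extract a random exponent, and then controlling that exponent by a second-moment (Chebyshev) argument rather than an exponential one; the latter is unavailable precisely because the without-replacement design makes the inclusion indicators dependent. Throughout I write $Z_{ji}(\theta) = \log(p_{\theta ji}/p_{\theta_0 ji})$, so that the integrand equals $\prod_{i}p^{\pi}_{\theta ji}/p^{\pi}_{\theta_0 ji} = \exp\!\big(\sum_i (\delta_{\nu ji}/\pi_{\nu ji})Z_{ji}(\theta)\big)$.

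First I would take the prior-thickness set from Assumption~\nameref{thickness},
\[
B_{M_\nu} = \left\{\theta\in\Theta:\frac{1}{M_\nu}\sum_{i=1}^{M_\nu}\mathbb{E}_{P_{\theta_0}}\exp\left(\kappa\log_+\frac{p_{\theta_0 ji}}{p_{\theta ji}}\right)-1\leq\epsilon^2_{M_\nu}\right\},
\]
which satisfies $\Pi(B_{M_\nu})\geq\exp(-c_\pi\kappa M_\nu\epsilon^2_{M_\nu})$. Restricting the integral to $B_{M_\nu}$ and applying Jensen's inequality to the normalized measure $\Pi(\cdot)/\Pi(B_{M_\nu})$ gives the lower bound $\int_\Theta\prod_{i=1}^{M_\nu}(p^{\pi}_{\theta ji}/p^{\pi}_{\theta_0 ji})\,\Pi(d\theta)\geq\Pi(B_{M_\nu})\exp(W)$, where
\[
W=\frac{1}{\Pi(B_{M_\nu})}\int_{B_{M_\nu}}\sum_{i=1}^{M_\nu}\frac{\delta_{\nu ji}}{\pi_{\nu ji}}Z_{ji}(\theta)\,\Pi(d\theta).
\]
Using the prior-mass lower bound, the event $\{\text{denominator}\leq\exp(-r_1 N_\nu t)\}$ is contained in $\{W\leq -r_1 N_\nu t + c_\pi\kappa M_\nu\epsilon^2_{M_\nu}\}$, so it suffices to bound the probability of the latter. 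For the mean, design unbiasedness, $\mathbb{E}_{P_\nu}[\delta_{\nu ji}/\pi_{\nu ji}\mid\mathcal{A}_{\nu j}]=1$, reduces $\mathbb{E}_{P_{\theta_0},P_\nu}[(\delta_{\nu ji}/\pi_{\nu ji})Z_{ji}(\theta)]$ to $-K(p_{\theta_0 ji},p_{\theta ji})$, and the elementary inequality $e^x-1\geq x$ applied to the moment defining $B_{M_\nu}$ shows that $\tfrac{1}{M_\nu}\sum_i K(p_{\theta_0 ji},p_{\theta ji})\leq\epsilon^2_{M_\nu}/\kappa$ on $B_{M_\nu}$, whence $\mathbb{E}[W]\geq -M_\nu\epsilon^2_{M_\nu}/\kappa$.

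The main obstacle is bounding $\mathrm{Var}_{P_{\theta_0},P_\nu}(W)$, and this is exactly where Assumptions~\nameref{bounded} and~\nameref{independence} are needed. I would split it by the law of total variance into a population-generation part (conditioning on $Y$) and a design part. The population part is a sum of independent contributions controlled again by the exponential-moment condition on $B_{M_\nu}$. The design part is $\sum_{i,\ell}\frac{Z_{ji}Z_{j\ell}}{\pi_{\nu ji}\pi_{\nu j\ell}}\mathrm{Cov}_{P_\nu}(\delta_{\nu ji},\delta_{\nu j\ell})$; the diagonal is bounded by $(\gamma-1)\sum_i Z_{ji}^2$ via $1/\pi_{\nu ji}\leq\gamma$ from~\nameref{bounded}, while for the off-diagonal I would insert $|\pi_{\nu ji\ell}/(\pi_{\nu ji}\pi_{\nu j\ell})-1|\leq c_3/M_\nu$ from~\nameref{independence} and use Cauchy--Schwarz, $(\sum_i|Z_{ji}|)^2\leq M_\nu\sum_i Z_{ji}^2$, to get an $O(c_3\sum_i Z_{ji}^2)$ bound. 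Collecting terms and averaging over $B_{M_\nu}$ yields $\mathrm{Var}(W)\leq(c_3+1+\gamma)M_\nu\epsilon^2_{M_\nu}$ up to a constant, which is the source of $r_2=1/(c_3+1+\gamma)$ and which forces the polynomial (not exponential) tail noted in the main proof. This covariance control is the one genuinely delicate step; everything else is bookkeeping.

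Finally I would apply Chebyshev's inequality to $W$ centered at its mean. Using $N_\nu\geq g_1\gamma M_\nu$ from~\nameref{bounded} together with $t\geq\epsilon^2_{M_\nu}$, the choice $r_1\geq(c_\pi g_2+3(\kappa\gamma)^{-1})/g_1$ makes the gap $\mathbb{E}[W]-(-r_1 N_\nu t+c_\pi\kappa M_\nu\epsilon^2_{M_\nu})$ at least of order $M_\nu t$, so that $\mathrm{Pr}(W\leq -r_1 N_\nu t+c_\pi\kappa M_\nu\epsilon^2_{M_\nu})\leq\mathrm{Var}(W)/(M_\nu t)^2\leq(c_3+1+\gamma)\epsilon^2_{M_\nu}/(M_\nu t^2)\leq 1/(r_2 M_\nu t)$, which is the claim. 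The only remaining effort is matching the constants $r_1,r_2$ exactly, which amounts to carrying the factors $g_1,g_2,\gamma,\kappa,c_\pi,c_3$ through the three bounds above.
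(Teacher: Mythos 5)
Your proposal is correct and follows essentially the same route as the paper's proof: restrict the prior integral to the Kullback--Leibler neighborhood of Assumption \nameref{thickness}, apply Jensen's inequality, and then bound the tail of the resulting centered quantity by Chebyshev because the design-induced dependence rules out Bernstein's inequality, with the variance split into a population-generation part and a design part controlled by \nameref{bounded} (diagonal, contributing $\gamma$) and \nameref{independence} (off-diagonal, contributing $c_{3}$), which is exactly how the paper arrives at $r_{2}=1/(c_{3}+1+\gamma)$. The only differences are cosmetic --- you phrase the split as a law-of-total-variance decomposition of the unnormalized sum $W$, whereas the paper computes the second moment of the normalized empirical process $\mathbb{G}^{\pi}_{M_{\nu}}$ directly --- so no further changes are needed beyond the constant bookkeeping you already flag.
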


\begin{proof}
The proof generally follows the flow of \citet{2015arXiv150805880S} to simplify the probability statement on the left-hand side; only, we are not able to apply the Bernstein inequality (see Corollary 2.10 in \citet{massart2007concentration}) to formulate the bound for the resulting event probability because the $\left\{p^{\pi}_{\theta ji}/p^{\pi}_{\theta_{0} ji}\right\}_{i=1,\ldots,M_{\nu}}$ are \emph{not} independent due to the dependence induced among the $\bm{\delta}_{\nu j} = \left(\delta_{\nu 1},\ldots,\delta_{\nu M_{\nu}}\right)$ by the informative sampling distribution.   So we will follow the  strategy of \citet{Ghosal00convergencerates} (also used in \citet{zbMATH06017974}) and instead employ Chebyshev, along with a bound on the pairwise inclusion probabilities, to separate integration terms involving the sampling design distribution, $P_{\nu}$, from those involving the finite population generating distribution, $P_{\theta_{0}}$.

The constant, $r_{2}$, will depend on bounds, $\gamma$, and $c_{3}$, that express the efficiency of the sampling design.  The rate of convergence of the probability will be slower for sampling designs that produce samples with relatively larger information differences from the underlying finite population.

We first expand the event, $\Theta_{\epsilon_{M_{\nu}}}$ defined in \citet{2015arXiv150805880S} to incorporate informative sampling with respect to $P_{\nu}$,
\begin{equation*}
\Theta_{\epsilon_{M_{\nu}}}^{\pi} = \left\{\theta \in \Theta:\frac{1}{M_{\nu}}\mathop{\sum}_{i=1}^{M_{\nu}}\mathbb{E}_{P_{\theta_{0}},P_\nu}\exp\left(\log_{+}\frac{p^{\pi}_{\theta_{0} ji}}{p^{\pi}_{\theta ji}}\right) - 1 \leq \epsilon^{2}_{M_{\nu}}\right\},
\end{equation*}
which specifies an upper bound on the distance of $\theta$ from $\theta_{0}$.  The prior for this event may be bounded from below,
\begin{align}
\Pi\left\{\Theta_{\epsilon_{M_{\nu}}}^{\pi}\right\} &= \Pi^{\bm{\pi}}\left\{\theta \in \Theta: \frac{1}{M_{\nu}}\mathop{\sum}_{i=1}^{M_{\nu}}\mathbb{E}_{P_{\theta_{0}},P_{\nu}}\exp\left(\frac{\delta_{\nu ji}}{\pi_{\nu ji}}\log_{+}\frac{p_{\theta_{0} ji}}{p_{\theta ji}}\right) - 1 \leq \epsilon^{2}_{M_{\nu}}\right\}\\
&\geq \Pi\left\{\theta \in \Theta: \frac{1}{M_{\nu}}\mathop{\sum}_{i=1}^{M_{\nu}}\mathbb{E}_{P_{\theta_{0}}}\exp\left(\gamma\log_{+}\frac{p_{\theta_{0} ji}}{p_{\theta ji}}\right) - 1 \leq \epsilon^{2}_{M_{\nu}}\right\}\\
&\geq \exp\left(-c_\pi \gamma N_{\nu}\epsilon^{2}_{M_{\nu}}\right),
\end{align}

where the last expression results from using assumption (A4).  The result provides a lower bound on the prior mass assigned to the region defined by $\Theta_{\epsilon_{M_{\nu}}}^{\pi}$.  For $A \subseteq \Theta$, let $\Pi_{\epsilon_{M_{\nu}}^{\pi}}\left(A\right) = \Pi\left(A \cap \Theta_{\epsilon_{M_{\nu}}}^{\pi}\right)/\Pi\left(\Theta_{\epsilon_{M_{\nu}}}^{\pi}\right)$ be the prior measure that restricts $\Pi$ to the region of support, $\Theta_{\epsilon_{M_{\nu}}}^{\pi}$.

By Jensen's inequality,
\begin{align*}
\log\mathop{\int}_{\Theta}\mathop{\prod}_{i=1}^{M_{\nu}}\frac{p^{\pi}_{\theta ji}}{p^{\pi}_{\theta_{0} ji}}\Pi\left(d\theta\right) &\geq \mathop{\sum}_{i=1}^{M_{\nu}}\displaystyle\mathop{\int}_{\Theta}\log\frac{p^{\pi}_{\theta ji}}{p^{\pi}_{\theta_{0} ji}}\Pi\left(d\theta\right)\\
&= M_{\nu}\cdot\mathbb{P}_{M_{\nu}}\mathop{\int}_{\Theta}\frac{p^{\pi}_{\theta j}}{p^{\pi}_{\theta_{0} j}}\Pi\left(d\theta\right),
\end{align*}
where we recall that the last equation denotes the empirical expectation functional taken with respect to the joint distribution over population generating and informative sampling.  By Fubini,
\begin{align*}
\mathbb{P}_{M_{\nu}}\mathop{\int}_{\Theta}\log\frac{p^{\pi}_{\theta j}}{p^{\pi}_{\theta_{0} j}}\Pi\left(d\theta\right)
&= \mathop{\int}_{\Theta}\left[\mathbb{P}_{M_{\nu}}\log\frac{p^{\pi}_{\theta j}}{p^{\pi}_{\theta_{0} j}}\right]\Pi\left(d\theta\right)\\
&= \mathop{\int}_{\Theta}\left[\mathbb{P}_{M_{\nu}}\frac{\delta_{\nu j}}{\pi_{\nu j}}
\log\frac{p_{\theta j}}{p_{\theta_{0} j}}\right]\Pi\left(d\theta\right)\\
&= \mathop{\int}_{\Theta}\left[\mathbb{P}^{\pi}_{M_{\nu}}
\log\frac{p_{\theta j}}{p_{\theta_{0} j}}\right]\Pi\left(d\theta\right)\\
&= \mathbb{P}^{\pi}_{M_{\nu}}\mathop{\int}_{\Theta}\log\frac{p_{\theta j}}{p_{\theta_{0} j}}\Pi\left(d\theta\right),
\end{align*}
where we, again, apply Fubini.

Then, the probability statement in the result of Equation~\ref{denomresult} is bounded (from above) by,
\begin{align*}
&\mbox{Pr}_{\theta_{0},P_{\nu}}\left\{M_{\nu}\cdot\mathbb{P}^{\pi}_{M_{\nu}}\mathop{\int}_{\Theta}\log\frac{p_{\theta j}}{p_{\theta_{0} j}}
\Pi\left(d\theta\right) \leq -r_{1}N_{\nu}t\right\}\\
&\mathop{\leq}^{(i)}\mbox{Pr}_{\theta_{0},P_{\nu}}\left\{M_{\nu}\cdot\mathbb{P}^{\pi}_{M_{\nu}}\mathop{\int}_{\Theta_{\epsilon_{M_{\nu}}}^{\pi}}\log\frac{p_{\theta j}}{p_{\theta_{0} j}}
\Pi\left(d\theta\right) \leq -r_{1}N_{\nu}t\right\}\\
&\mathop{\leq}^{(ii)}\mbox{Pr}_{\theta_{0},P_{\nu}}\left\{M_{\nu}\cdot\Pi\left\{\Theta_{\epsilon_{M_{\nu}}}^{\pi}\right\}\cdot\mathbb{P}^{\pi}_{M_{\nu}}\mathop{\int}_{\Theta_{\epsilon_{M_{\nu}}}^{\pi}}\log\frac{p_{\theta j}}{p_{\theta_{0} j}}
\Pi_{\epsilon_{M_{\nu}}}\left(d\theta\right) \leq -r_{1}N_{\nu}t\right\}\\
&\mathop{\leq}^{(ii)}\mbox{Pr}_{\theta_{0},P_{\nu}}\left\{M_{\nu}\cdot\mathbb{P}^{\pi}_{M_{\nu}}\mathop{\int}_{\Theta_{\epsilon_{M_{\nu}}}^{\pi}}\log\frac{p_{\theta j}}{p_{\theta_{0} j}}
\Pi_{\epsilon_{M_{\nu}}^{\pi}}\left(d\theta\right) \leq -r_{1}N_{\nu}t + c_\pi \gamma N_{\nu}\epsilon^{2}_{M_{\nu}}\right\}\\
&\mathop{\leq}^{(iii)}\mbox{Pr}_{\theta_{0},P_{\nu}}\left\{M_{\nu}\cdot\mathbb{P}^{\pi}_{M_{\nu}}\mathop{\int}_{\Theta_{\epsilon_{M_{\nu}}}^{\pi}}\log\frac{p_{\theta_{0} j}}{p_{\theta j}}
\Pi_{\epsilon_{M_{\nu}}^{\pi}}\left(d\theta\right) \geq r_{1}g_{1}M_{\nu}\gamma t - c_\pi g_{2}M_{\nu}\gamma^{2}\epsilon^{2}_{M_{\nu}}\right\}\\
&\leq\mbox{Pr}_{\theta_{0},P_{\nu}}\left\{\mathbb{G}^{\pi}_{M_{\nu}}\mathop{\int}_{\Theta_{\epsilon_{M_{\nu}}}^{\pi}}\log\frac{p_{\theta_{0} j}}{p_{\theta j}}
\Pi_{\epsilon_{M_{\nu}}^{\pi}}\left(d\theta\right) \geq r_{1}g_{1}\sqrt{M_{\nu}}\gamma t - c_\pi g_{2}\sqrt{M_{\nu}}\gamma^{2}\epsilon^{2}_{M_{\nu}}\right.\\
&\left.-\sqrt{M_{\nu}}\mathop{\int}_{\Theta_{\epsilon_{M_{\nu}}}^{\pi}}\mathbb{E}_{P_{\theta_{0}},P_{\nu}}\log\frac{p_{\theta_{0} j}}{p_{\theta j}}\Pi_{\epsilon_{M_{\nu}}^{\pi}}\left(d\theta\right)\right\}\\
&\mathop{\leq}^{(iv)}\mbox{Pr}_{\theta_{0},P_{\nu}}\left\{\mathbb{G}^{\pi}_{M_{\nu}}\mathop{\int}_{\Theta_{\epsilon_{M_{\nu}}}^{\pi}}\log\frac{p_{\theta_{0} j}}{p_{\theta j}}
\Pi_{\epsilon_{M_{\nu}}^{\pi}}\left(d\theta\right) \geq r_{1}g_{1}\sqrt{M_{\nu}}\gamma t - \left(c_\pi g_{2}\gamma
-\left(\kappa\gamma\right)^{-1}\right)\sqrt{M_{\nu}}\gamma\epsilon^{2}_{M_{\nu}}\right\}
\end{align*}
where $(i)$ follows by making the integration region smaller; $(ii)$ from assumption (A4) on the reduced-size region $\Theta_{\epsilon_{M_{\nu}}}$; $(iii)$ results from application of assumption (A7) that globally bounds the vector of sampling inclusion probabilities away from $0$ for all $j = 1,\ldots,K$.  The integration on the right-hand side of $(iv)$ reduces, as follows:
\begin{eqnarray*}
\mathbb{E}_{P_{\theta_{0}},P_{\nu}}\mathop{\int}_{\Theta_{\epsilon_{M_{\nu}}}^{\pi}}\log\frac{p_{\theta_{0} j}}{p_{\theta j}}
\Pi_{\epsilon_{M_{\nu}}^{\pi}}\left(d\theta\right) = \mathop{\int}_{\Theta_{\epsilon_{M_{\nu}}}}\mathbb{E}_{P_{\theta_{0}},P_{\nu}}\log\frac{p_{\theta j}}{p_{\theta_{0} j}}
\Pi_{\epsilon_{M_{\nu}}^{\pi}}\left(d\theta\right)\\
\leq \kappa^{-1}\mathop{\int}_{\Theta_{\epsilon_{M_{\nu}}}^{\pi}}\mathbb{E}_{P_{\theta_{0}},P_{\nu}}\left[\exp\left(\kappa\log\frac{p_{\theta_{0} j}}{p_{\theta j}}
\right)-1\right]\Pi_{\epsilon_{M_{\nu}}^{\pi}}\left(d\theta\right) \leq \kappa^{-1}\epsilon^{2}_{M_{\nu}},
\end{eqnarray*}
where the first equality on the first line applies Fubini.  The first inequality on the second line applies the inequality, $x \leq \left(e^{\kappa x} - 1\right)/\kappa$ for $x \geq 0$ and the second inequality applies (A4) (for a single observation).

We now apply Chebyshev and Jensen's inequality to bound the probability,
\begin{subequations}\label{chebyshev}
\begin{align}
&\mbox{Pr}_{\theta_{0},P_{\nu}}\left\{\mathbb{G}^{\pi}_{M_{\nu}}\mathop{\int}_{\Theta_{\epsilon_{M_{\nu}}}^{\pi}}\log\frac{p_{\theta_{0} j}}{p_{\theta j}}\Pi_{\epsilon_{M_{\nu}}^{\pi}}\left(d\theta\right)\geq r_{1}g_{1}\sqrt{M_{\nu}}\gamma t - \left(c_\pi g_{2}\gamma
-\left(\kappa\gamma\right)^{-1}\right)\sqrt{M_{\nu}}\gamma\epsilon^{2}_{M_{\nu}}\right\}\nonumber\\
&\leq\frac{\Var\left[\mathop{\int}_{\Theta_{\epsilon_{M_{\nu}}}^{\pi}}\mathbb{G}^{\pi}_{M_{\nu}}\log\frac{p_{\theta_{0} j}}{p_{\theta j}}\Pi_{\epsilon_{M_{\nu}}^{\pi}}\left(d\theta\right)\right]}
{\left(r_{1}g_{1}\sqrt{M_{\nu}}\gamma t - \left(c_\pi g_{2}\gamma
-\left(\kappa\gamma\right)^{-1}\right)\sqrt{M_{\nu}}\gamma\epsilon^{2}_{M_{\nu}}\right)^{2}}\\
&\leq\frac{\mathbb{E}_{P_{\theta_{0}},P_{\nu}}\left[\mathop{\int}_{\Theta_{\epsilon_{M_{\nu}}}^{\pi}}\mathbb{G}^{\pi}_{M_{\nu}}\log\frac{p_{\theta_{0} j}}{p_{\theta j}}\Pi_{\epsilon_{M_{\nu}}^{\pi}}\left(d\theta\right)\right]^{2}}
{\left(r_{1}g_{1}\sqrt{M_{\nu}}\gamma t - \left(c_\pi g_{2}\gamma
-\left(\kappa\gamma\right)^{-1}\right)\sqrt{M_{\nu}}\gamma\epsilon^{2}_{M_{\nu}}\right)^{2}}\label{chebyshev:var}\\
&\leq\frac{\displaystyle\mathop{\int}_{\Theta_{\epsilon_{M_{\nu}}}^{\pi}}\left[\mathbb{E}_{P_{\theta_{0}},P_{\nu}}
\left(\mathbb{G}^{\pi}_{M_{\nu}}\log\frac{p_{\theta_{0} j}}{p_{\theta j}}\right)^{2}\right]\Pi_{\epsilon_{M_{\nu}}^{\pi}}\left(d\theta\right)}
{\left(r_{1}g_{1}\sqrt{M_{\nu}}\gamma t - \left(c_\pi g_{2}\gamma
-\left(\kappa\gamma\right)^{-1}\right)\sqrt{M_{\nu}}\gamma\epsilon^{2}_{M_{\nu}}\right)^{2}}\label{chebyshev:e2},
\end{align}
\end{subequations}
where $\mathbb{E}_{P_{\theta_{0}},P_{\nu}}\left(\cdot\right)$ denotes the expectation with respect to the joint distribution over population generation and sampling (from that population) without replacement.   We apply Jensen's inequality in Equation~\ref{chebyshev:var} and use $\mathbb{E}\left(X^{2}\right) > \Var\left(X\right)$ in the third inequality, stated in Equation~\ref{chebyshev:e2}.  We now bound the expectation inside the square brackets on the right-hand side of Equation~\ref{chebyshev:e2}, which is taken with respect to this joint distribution.  In the sequel, define $\mathcal{A}_{\nu j} = \sigma\left(Y_{j1},\ldots,Y_{jM_{\nu}}\right)$ as the sigma field of information potentially available for the $M_{\nu}$ units in population, $U_{\nu j}$.

\begin{subequations}
\begin{align}
&\mathbb{E}_{P_{\theta_{0}},P_{\nu}}\left[\mathbb{G}^{\pi}_{M_{\nu}}\log\frac{p_{\theta_{0} j}}{p_{\theta j} }\right]^{2}\\
&= \mathbb{E}_{P_{\theta_{0}},P_{\nu}}\left[\sqrt{M_{\nu}}\left(\mathbb{P}^{\pi}_{M_{\nu}} - \mathbb{P}_{M_{\nu}}\right)\log\frac{p_{\theta_{0} j}}{p_{\theta j}}-\sqrt{M_{\nu}}\left(\mathbb{P}_{0} - \mathbb{P}_{M_{\nu}}\right)\log\frac{p_{\theta_{0} j}}{p_{\theta j}}\right]^{2}\\
&= \mathbb{E}_{P_{\theta_{0}},P_{\nu}}\left[\sqrt{M_{\nu}}\left(\mathbb{P}^{\pi}_{M_{\nu}} - \mathbb{P}_{M_{\nu}}\right)\log\frac{p_{\theta_{0} j}}{p_{\theta j}}- \mathbb{G}_{M_{\nu}}\log\frac{p_{\theta_{0} j}}{p_{\theta j}}\right]^{2}\\
&\leq M_{\nu}\mathbb{E}_{P_{\theta_{0}},P_{\nu}}\left[\left(\mathbb{P}^{\pi}_{M_{\nu}} - \mathbb{P}_{M_{\nu}}\right)\log\frac{p_{\theta_{0} j}}{p_{\theta j}}\right]^{2} + \mathbb{E}_{P_{\theta_{0}}}\left[\mathbb{G}_{M_{\nu}}\log\frac{p_{\theta_{0} j}}{p_{\theta j}}\right]^{2}\label{gbound}.
\end{align}
\end{subequations}
We proceed to bound the two terms in Equation~\ref{gbound}, from above.

\begin{align*}
&M_{\nu}\mathbb{E}_{P_{\theta_{0}},P_{\nu}}
\left(\sqrt{M_{\nu}}\left[\mathbb{P}^{\pi}_{M_{\nu}}-\mathbb{P}_{M_{\nu}}\right]\log\frac{p_{\theta_{0} j}}{p_{\theta j}}\right)^{2} \\
&=M_{\nu}\mathbb{E}_{P_{\theta_{0}},P_{\nu}}\left(\frac{1}{M_{\nu}}\mathop{\sum}_{i=1}^{M_{\nu}}\left(\frac{\delta_{\nu ji}}{\pi_{\nu ji}}-1\right)\log\frac{p_{\theta_{0} ji}}{p_{\theta ji}}\right)^2\\
&=\frac{1}{M_{\nu}}\mathop{\sum}_{i,\ell\in U_{\nu j}}\mathbb{E}_{P_{\theta_{0}},P_{\nu}}\left[\left(\frac{\delta_{\nu ji}}
{\pi_{\nu ji}}-1\right)\left(\frac{\delta_{\nu j\ell}}
{\pi_{\nu j\ell}}-1\right)\log\frac{p_{\theta_{0} ji}}{p_{\theta ji}}\log\frac{p_{\theta_{0} j\ell}}{p_{\theta j\ell}}\right]\\
&= \displaystyle\frac{1}{M_{\nu}}\mathop{\sum}_{i = \ell\in U_{\nu j}}\mathbb{E}_{P_{\theta_{0}}}\left[\mathbb{E}_{P_{\nu}}\left\{\frac{\delta_{\nu ji}}{\pi_{\nu ji^{2}}}-2\frac{\delta_{\nu ji}}{\pi_{\nu ji}}+1\middle\vert \mathcal{A}_{\nu j}\right\}\left(\log\frac{p_{\theta_{0} ji}}{p_{\theta ji}}\right)^{2}\right]\\
&+ \displaystyle\frac{1}{M_{\nu}}\mathop{\sum}_{i \neq \ell\in U_{\nu j}}\mathbb{E}_{P_{\theta_{0}}}\left[\mathbb{E}_{P_{\nu}}\left\{\frac{\delta_{\nu ji}\delta_{\nu j\ell}}{\pi_{\nu ji}\pi_{\nu \ell}}-\frac{\delta_{\nu ji}}{\pi_{\nu ji}}-\frac{\delta_{\nu j\ell}}{\pi_{\nu j\ell}}+1\middle\vert \mathcal{A}_{\nu j}\right\}\log\frac{p_{\theta_{0} ji}}{p_{\theta ji}}\log\frac{p_{\theta_{0} j\ell}}{p_{\theta j\ell}}\right]\\
&= \displaystyle\frac{1}{M_{\nu}}\mathop{\sum}_{i = \ell\in U_{\nu j}}\mathbb{E}_{P_{\theta_{0}}}\left[\left(\frac{1}{\pi_{\nu ji}}-1\right)\left(\log\frac{p_{\theta_{0} ji}}{p_{\theta ji}}\right)^{2}\right]\\
&+ \displaystyle\frac{1}{M_{\nu}}\mathop{\sum}_{i \neq \ell\in U_{\nu j}}\mathbb{E}_{P_{\theta_{0}}}\left[\left(\frac{\pi_{\nu ji\ell}}{\pi_{\nu ji}\pi_{\nu j\ell}}-1\right)\log\frac{p_{\theta_{0} ji}}{p_{\theta ji}}\log\frac{p_{\theta_{0} j\ell}}{p_{\theta j\ell}}\right]\\
&\displaystyle\leq 4\kappa^{-2}\epsilon_{M_{\nu}}^{2}\mathop{\sup}_{\nu}\left[\frac{1}{\mathop{\min}_{i\in U_{\nu j}}\pi_{\nu ji}}\right] +
4\kappa^{-2}\epsilon_{M_{\nu}}^{2}\left(M_{\nu}-1\right)\mathop{\sup}_{\nu}\mathop{\max}_{i\neq\ell\in U_{\nu j}}\left[\middle\vert \frac{\pi_{\nu ji\ell}}{\pi_{\nu ji}\pi_{\nu j\ell}} - 1\middle\vert\right]\\
&\leq4\kappa^{-2}\epsilon_{M_{\nu}}^{2}\left(c_{3} + \gamma\right),
\end{align*}
where we have applied assumptions (A8) and (A7) for the second and third terms in the last inequality.  We additionally note that $\pi_{\nu ji\ell} = \pi_{\nu \ell}$ when $i = \ell,~i,\ell \in U_{\nu j}$.
Through successive conditioning and bounding we have separated out from the joint expectation with respect to the population generating distribution, $P_{\theta_{0}}$, and the sampling design distribution, $P_{\nu}$, an expectation with respect to only $P_{\theta_{0}}$.  Since $\left\{\log\frac{p_{\theta ji}}{p_{\theta_{0} ji}}\right\}_{i = 1,\ldots,M_{\nu}}$ are independent, we may employ Bernstein's inequality to achieve the bound for the second moment of $\log\frac{p_{\theta ji}}{p_{\theta_{0} ji}}$, which produces the first term of the last equation, as follows:
\begin{equation}
\mathbb{E}_{P_{\theta_{0}}}\left(\log\frac{p_{\theta_{0} ji}}{p_{\theta ji}}\right)^{2}
 \leq 2!\kappa^{-2}\mathbb{E}_{P_{\theta_{0}}}\left[\exp\left(\kappa\log\frac{p_{\theta_{0} ji}}{p_{\theta ji}}
 \right)-1\right]\leq 4\kappa^{-2}\epsilon^{2}_{M_{\nu}},
\end{equation}
where we used the inequality $(\kappa x)^{2}/2! \leq e^{\kappa x} -1$ for $x \geq 0$, Bernstein's inequality and (A4).

The expectation of the centered and scaled empirical process (taken with respect to the population generating distribution) in the second additive term of Equation~\ref{gbound} is trivially bounded from above by,
\begin{equation*}
\mathbb{E}_{P_{\theta_{0}}}\left[\mathbb{G}_{M_{\nu}}\log\frac{p_{\theta_{0} j}}{p_{\theta j}}\right]^{2} \leq \mathop{\sup}_{\nu}\mathop{\max}_{i \in U_{\nu j}}\mathbb{E}_{P_{\theta_{0}}}\left(\log\frac{p_{\theta_{0}ji}}{p_{\theta ji}}\right)^{2} \leq 4\kappa^{-2}\epsilon^{2}_{M_{\nu}}
\end{equation*}

Finally,
\begin{align}
&\mbox{Pr}_{\theta_{0},P_{\nu}}\left\{\mathop{\int}_{\Theta_{\epsilon_{M_{\nu}}}}\displaystyle\mathop{\prod}_{i=1}^{N_{\nu}}\frac{p^{\pi}_{\theta ji}}{p^{\pi}_{\theta_{0} ji}}\Pi\left(d\Theta\right)\leq \exp\left[-r_{1}N_{\nu}t\right]\right\}\nonumber\\
&\leq \frac{4\kappa^{-2}\epsilon^{2}_{M_{\nu}}\left[c_{3}+1+\gamma\right]}{\left(r_{1}g_{1}\sqrt{M_{\nu}}\gamma t - \left(c_\pi g_{2}\gamma
-\left(\kappa\gamma\right)^{-1}\right)\sqrt{M_{\nu}}\gamma\epsilon^{2}_{M_{\nu}}\right)^{2}}\\
&\mathop{\leq}^{(i)}\frac{4\kappa^{-2}\epsilon^{2}_{M_{\nu}}\left[c_{3}+1+\gamma\right]}{h^{2}}\\
&\mathop{\leq}^{(i)} \frac{4\kappa^{-2}\epsilon^{2}_{M_{\nu}}\left[c_{3}+1+\gamma\right]}{h^{2}}\\
&\mathop{\leq}^{(ii)} \frac{\frac{2h\left[c_{3}+1+\gamma\right]}{\kappa\sqrt{M_{\nu}}}}{h^{2}}\\
&\leq \frac{2\left[c_{3}+1+\gamma\right]}{\kappa h\sqrt{M_{\nu}}} \\
&\mathop{\leq}^{(iv)} \frac{\left[c_{3}+1+\gamma\right]}{M_{\nu}t}\\
&\leq \frac{1}{r_{2}M_{\nu}t}
\end{align}
where $(i)$ follows from setting $h = r_{1}g_{1}\sqrt{M_{\nu}}\gamma t - \left(c_\pi g_{2}\gamma
-\left(\kappa\gamma\right)^{-1}\right)\sqrt{M_{\nu}}\gamma\epsilon^{2}_{M_{\nu}}$. Next, $(ii)$ follows by bounding $r_{1}$ from below with $r_{1} \geq \left(c_{\pi}g_{2} +3\left(\kappa\gamma\right)^{-1}\right)/g_{1}$.  Plugging the bound for $r_{1}$ into $h$ and replacing
$\epsilon^{2}_{M_{\nu}} < t$ with $t$ results in $h \geq\left[\left(c_\pi g_{2} -3\left(\kappa\gamma\right)^{-1}\right) -
\left(c_\pi g_{2} -\left(\kappa\gamma\right)^{-1}\right)\right]\sqrt{M_{\nu}}\gamma t = 2\kappa^{-1}\sqrt{M_{\nu}} t \leq
2\kappa^{-1}\sqrt{M_{\nu}}\epsilon^{2}_{M_{\nu}}$.  Re-arrange and achieve $4\kappa^{-2}\epsilon^{2}_{M_{\nu}} \leq 2h/\kappa\sqrt{M_{\nu}}$.  Continuing, $(iv)$ is achieved by
further algebra to $\kappa h \geq 2\sqrt{M_{\nu}} t$.

Finally, we set $r_{2} = 1/\left[c_{3\nu}+1+\gamma\right] \leq 1$, which is a function of the sampling design; in particular, $r_{2}$ is largest, which produces the fastest rate of decrease in the bound for the probability, when the sampling design is characterized by nearly independent samples and the gradient of the weights is $1$.

This concludes the proof.
\end{proof}

\section{Hierarchical Model for Current Employment Statistics Survey Data}

The specification of our probability model is completed by specifying the following priors,
\begin{subequations}
\label{dpmix}
\begin{align}
\mathop{\Theta}^{Q\times T} &\sim \mathcal{N}_{Q \times T}({0},\mathop{P_{2}^{-1}}^{Q\times Q} \circ P_{3}^{-1})\label{priorTheta}\\
\mathop{\Gamma_{\ell}}^{Q\times T} &\iid \mathcal{N}_{Q \times T}({0},P_{8}^{-1}\circ P_{6}^{-1}),~\ell = 1,\ldots,L\label{priorGamma}\\
P_{r} &\sim \mathcal{W}_{\tiny\mbox{dim}\{r\}}\left(\nu + \mbox{dim}\{r\} - 1, Q_{r}\right)\label{wish}\\
Q_{r} &= 2\nu\mbox{diag}(a_{r1},\ldots,a_{r \tiny\mbox{dim}\{r\}});~ r = 2, 8 \label{folded}\\
a_{r1},\ldots,a_{r \tiny\mbox{dim}\{r\}} &\iid \mathcal{G}(1/2,1)\\
P_{s} &= D-\rho_{s}\Omega;~ \rho_{s} \sim \mathcal{U}(0,1);~ s = 3, 6\\
\tau_{q}^{-1/2} &\iid \mathcal{C}(0,1)
\end{align}
\end{subequations}
The constructions for $\mathop{\Theta}$ and $\{\mathop{\Gamma_{\ell}}\}$ (${\ell = 1,\ldots,L}$) employ separable or tensor product formulations \citep{hoff2011} for precision matrices, where each precision matrix permits the discovery of correlations among the $Q = 2$ response variables and among the $T = 12$ months. The prior formulation specified in \eqref{wish} and \eqref{folded} generalize the Wishart prior by constructing the mean, $Q_{r}$, as a diagonal matrix parameterized by $a_{r1},\ldots,a_{r \tiny\mbox{dim}\{r\}}$, where $\tiny\mbox{dim}\{r\}$ denotes the dimension of $P_{r}$.  These parameters, in turn, receive Gamma priors to de-couple the correlations between variances and correlations present under a Wishart prior.  In particular, this prior induces marginally folded-t distributions with $\nu$ degrees of freedom on the standard deviations  and marginally uniform distributions on the correlations when $\nu = 2$ \citep{huang2013}.  We select this more flexible prior because a primary focus in our modeling is to borrow strength over response variables, industries, and months.

Precision matrices, $\left(\mathop{P_{3}},\mathop{P_{6}}\right)$, are constructed as proper conditional autoregressive formulations, where $\rho_{3}~ (\rho_{6})$ may be interpreted as a strength-of-temporal-association.   $\Omega = \{\omega_{ij}\}$ is a $T\times T$ adjacency matrix where $\omega_{ij} = 1$ if months $i$ and $j$ are adjacent; else, $\omega_{ij} = 0$. $D$ is a $T\times T$ diagonal matrix of row sums of $\Omega$ such that the precisions for months with a larger number of neighbors will be higher than those with a relatively smaller number of neighbors.

We now illustrate pseudo posterior computations for $\mathop{\Theta}$ and $\{\mathop{\Gamma_{\ell}}\}$.  We jointly sample $Q\times T$, $\mathop{\Theta}$, in one step under an elliptical slice sampler \citep{murray2010} since the underlying posterior is non-conjugate. We draw $\mathop{\Theta}$ from its prior in \eqref{priorTheta} and form a convex combination with the previously sampled value that is parameterized to lie on an ellipse.  The proposal is evaluated with the log-pseudo likelihood,
\begin{equation}
\begin{split}
&\log~L^{\pi}\left(\Theta\vert {Y},\{\Gamma_{\ell}\},\{\tau_{q}\}\right) \propto \\ &\mathop{\sum}_{q=1}^{Q}\mathop{\sum}_{c=1}^{n_{c}}\tilde{w}_{i\{c\}}\left[-\left(\tau_{q} + y_{cq}\right)\log\left(\tau_{q}+ C_{1,cq}\exp\left(\theta_{q~t\{c\}}\right)\right) + y_{cq}\theta_{q~t\{c\}}\right]\label{logliketheta},
\end{split}
\end{equation}\label{likeTheta}
where $C_{1,cq} = \exp\left(\gamma_{q~ t\{c\}\ell\{c\}}z_{c}\right)$ is independent of $\mathop{\Theta}$ and \eqref{logliketheta} is the sampling-weighted kernel of the negative binomial log likelihood after dropping all additive terms independent of $\Theta$.

We similarly jointly sample each $Q\times T$, and $\Gamma_{\ell}$ ($\ell = 1,\ldots,L$), using the elliptical slice sampler sampler with a proposal formed with a convex combination of a prior draw from \eqref{priorGamma} and the last sampled value that is subsequently evaluated with,
\begin{equation}
\begin{split}
  &\log~L^{\pi}\left(\Gamma_{\ell}\vert {Y},\Theta,\{\tau_{q}\}\right) \propto \\ &\mathop{\sum}_{q=1}^{Q}\mathop{\sum}_{c=1}^{n_{c}}\tilde{w}_{i\{c\}}\left[-\left(\tau_{q} + y_{cq}\right)\log\left(\tau_{q}+ C_{2,cq}\exp\left(\gamma_{q~ t\{c\}\ell\{c\}}z_{c}\right)\right) + y_{cq}\gamma_{q~ t\{c\}\ell\{c\}}z_{c}\right],
\end{split}
\end{equation}\label{likeGamma}
where $C_{2,cq} = \exp\left(\theta_{q~t\{c\}}\right)$ is independent of $\Gamma$.

The over-dispersion parameters, $\{\tau_{q}\}$, are sampled in a slice sampler \citep{Neal00slicesampling} from the following pseudo posterior,
\begin{equation}
\begin{split}
&\log~\pi^{\pi}\left(\tau_{q}\vert {Y},\Theta,\{\Gamma_{\ell}\}\right) \propto \\
&\left[\mathop{\sum}_{c=1}^{n_{c}}\tilde{w}_{i\{c\}}\right]\left[\tau_{q}\log~\tau_{q} - \log\Gamma\left(\tau_{q}\right)\right] \\
&+ \mathop{\sum}_{c=1}^{n_{c}}\tilde{w}_{i\{c\}}\left[-\left(\tau_{q}+ y_{cq}\right)\log\left(\tau_{q}+\exp\left(\psi_{cq}\right)\right)+\log\Gamma\left(\tau_{q}+y_{cq}\right)\right]\\
& -\frac{1}{2}\log~\tau_{q} - \log(1+\tau_{q}),
\end{split}
\end{equation}
where $\Gamma\left(\cdot\right)$ is the Gamma function.

These constructions for the full sample are readily purposed to estimation on the subsets under our stochastic approximation of (6) by normalizing the set of $m$ unit weights for subset $j$, $\displaystyle\mathop{\sum}_{i\in S_{j}}\tilde{w}_{ji} = n$, the observed full data sample size.  The remaining precision parameters are sampled in the usual way, conditional on $\mathop{\Theta}$ and $\{\mathop{\Gamma_{\ell}}\}_{\ell = 1,\ldots,L}$, with no application of sampling weights.

\begin{table}[h]
\def~{\hphantom{0}}
\caption{\it List of $23$ $2$- digit supersectors and the distribution of the number of establishments in the Current Employment Statistics Survey data.}
{\footnotesize
\begin{tabular}{rlr}
  \hline
 & Supersector & Number of sampled units \\
  \hline
1 & Retail Trade (44) & 6470 \\
  2 & Accommodation and Food Services & 6282 \\
  3 & Finance and Insurance & 3489 \\
  4 & Health Care and Social Assistance & 3393 \\
  5 & Professional, Scientific, and Technical Services & 2599 \\
  6 & Retail Trade (45) & 1888 \\
  7 & Other Services (except Public Administration) & 1772 \\
  8 & Construction & 1440 \\
  9 & Information & 1361 \\
  10 & Administrative and Support and Waste Management and Remediation Services & 1359 \\
  11 & Wholesale Trade & 1167 \\
  12 & Real Estate and Rental and Leasing & 963 \\
  13 & Manufacturing (33) & 847 \\
  14 & Transportation and Warehousing (48) & 521 \\
  15 & Management of Companies and Enterprises & 519 \\
  16 & Manufacturing (32) & 474 \\
  17 & Arts, Entertainment, and Recreation & 458 \\
  18 & Transportation and Warehousing (49) & 451 \\
  19 & Educational Services & 424 \\
  20 & Manufacturing (31) & 355 \\
  21 & Utilities &  86 \\
  22 & Mining, Quarrying, and Oil and Gas Extraction &  51 \\
  23 & Agriculture, Forestry, Fishing and Hunting &  21 \\
   \hline
\end{tabular}
}
\label{tab:ss}
\end{table}

 \end{document}